\documentclass[sigplan,screen]{acmart}
\settopmatter{printacmref=false, printfolios=false}

\AtBeginDocument{%
	}

\copyrightyear{2023}
\acmYear{2023} 
\setcopyright{rightsretained}
\acmConference[PPoPP '23]{The 28th ACM SIGPLAN Annual Symposium on
	Principles and Practice of Parallel Programming}{February 25-March 1,
	2023}{Montreal, QC, Canada}
\acmBooktitle{The 28th ACM SIGPLAN Annual Symposium on Principles
	and Practice of Parallel Programming (PPoPP '23), February 25-March 1,
	2023, Montreal, QC, Canada}
\acmDOI{10.1145/3572848.3577508}
\acmISBN{979-8-4007-0015-6/23/02}

\usepackage[utf8]{inputenc} 
\usepackage[T1]{fontenc}    
\usepackage{hyperref}       
\usepackage{url}            
\usepackage{booktabs}       
\usepackage{amsfonts}       
\usepackage{nicefrac}       
\usepackage{microtype}      
\usepackage{cleveref}       
\usepackage{lipsum}         
\usepackage{graphicx}
\usepackage{natbib}
\usepackage{xcolor}
\usepackage{caption}
\usepackage{subcaption}
\usepackage{doi}
\usepackage{listings, fancyvrb}
\usepackage{amsthm}
\usepackage{thmtools,thm-restate}
\usepackage{enumitem} 

\makeatletter
\let\bigtimes\@undefined
\makeatother
\usepackage{mathabx}


\def\draft{0}  
\if\draft 1
\newcommand{\mycomment}[3]{{\color{#2}{[\textbf{#1: #3}]}}}
\newcommand{\myedit}[2]{{\color{#1}{#2}}\normalcolor}

\newcommand{\here}[1]{{\bf [[[#1]]]}}
\else
\newcommand{\mycomment}[3]{}

\newcommand{\myedit}[2]{#2}
\newcommand{\here}[1]{}  
\fi

\newcommand{\arxcam}[2]{#1}  

\newcommand{\bbf}{{\sf BBF+}}
\newcommand{\TreeDL}{{\sf TreeDL}}
\newcommand{\RTDLGC}{{\sf DL-RT}}
\newcommand{\DL}{{\sf PDL}}
\newcommand{\RTSLGC}{{\sf SL-RT}}
\newcommand{\SL}{{\sf SSL}}
\newcommand{\STEAMLF}{{\sf STEAM+LF}}
\newcommand{\EBR}{{\sf EBR}}

\newcommand{\Eric}[1]{\mycomment{Eric}{magenta}{#1}}

\newcommand{\Hao}[1]{\mycomment{Hao}{brown}{#1}}

\newcommand{\Youla}[1]{\mycomment{Youla}{cyan}{#1}}

\newcommand{\er}[1]{\myedit{magenta}{#1}}
\newcommand{\y}[1]{\myedit{cyan}{#1}}
\newcommand{\hedit}[1]{\myedit{brown}{#1}}

\newcommand{\var}[1]{\texttt{#1}}

\newcommand{\vw}{\var{w}}
\newcommand{\vx}{\var{x}}
\newcommand{\vy}{\var{y}}
\newcommand{\vz}{\var{z}}
\newcommand{\tryAppend}{\var{tryAppend}}
\newcommand{\remove}{\var{remove}}
\newcommand{\search}{\var{search}}

\newcommand{\compact}{\var{compact}}

\newcommand{\lreach}{\leftsquigarrow}

\newcommand{\op}[1]{\co{#1}} 
\newcommand{\deprecate}{{\tt deprecate}}

\newcommand{\announce}{{\tt announce}}
\newcommand{\unannounce}{{\tt unannounce}}

\newcommand{\rtx}{rtx}
\newcommand{\rtxs}{rtxs}

\newcommand{\RT}{{\sc RangeTracker}}
\newcommand{\RTO}{{\sc RT}}

\lstset{basicstyle=\footnotesize\ttfamily, tabsize=2,
	escapeinside={@}{@}, columns=flexible}
\lstset{literate={<<}{{$\langle$}}1  {>>}{{$\rangle$}}1}
\lstset{language=C, morekeywords={CAS,commit,empty,job,taken,entry,GOTO,bool}}
\lstset{xleftmargin=5.0ex, numbers=left, numberblanklines=false, showstringspaces=false}
\lstset{morecomment=[is]{<--}{-->}}
\makeatletter
\lst@Key{countblanklines}{true}[t]%
{\lstKV@SetIf{#1}\lst@ifcountblanklines}

\lst@AddToHook{OnEmptyLine}{%
	\lst@ifnumberblanklines\else%
	\lst@ifcountblanklines\else%
	\advance\c@lstnumber-\@ne\relax%
	\fi%
	\fi}
\makeatother

\newtheorem{result}{XXX}
\newtheorem{theorem}[result]{Theorem}

\newtheorem{lemma}[result]{Lemma}

\newtheorem{proposition}[result]{Proposition}
\newtheorem{invariant}[result]{Invariant}

\newtheorem{claim}{Claim}[theorem]

\newcommand{\ignore}[1]{}

\arxcam{
	\settopmatter{printacmref=false}
	\setcopyright{none}
	\renewcommand\footnotetextcopyrightpermission[1]{}
	\pagestyle{plain}
}{
	\usepackage[compact]{titlesec}
}

\hyphenation{time-stamp} 

\begin{document}

\title[Practically and Theoretically Efficient Garbage Collection for Multiversioning]{Practically and Theoretically Efficient\\Garbage Collection for Multiversioning}

\author{Yuanhao Wei}
\email{yuanhao1@cs.cmu.edu}
\orcid{0000-0002-5176-0961}
\affiliation{
	\institution{Carnegie Mellon University, USA}
	\country{}}

\author{Guy E. Blelloch}
\email{guyb@cs.cmu.edu}
\orcid{0000-0003-0224-9187}
\affiliation{
	\institution{Carnegie Mellon University, USA}
	\country{}}

\author{Panagiota Fatourou}
\email{faturu@csd.uoc.gr}
\orcid{0000-0002-6265-6895}
\affiliation{
	\institution{FORTH ICS and University of Crete, Greece}
	\country{}}

\author{Eric Ruppert}
\email{ruppert@eecs.yorku.ca}
\orcid{0000-0001-5613-8701}
\affiliation{\institution{York University, Canada}
		\country{}}


\begin{abstract}
Multiversioning is widely used in databases, transactional memory, and concurrent data structures.
It can be used to support read-only transactions that appear atomic in the presence of concurrent update operations.
Any system that maintains multiple versions of each object needs a way of efficiently reclaiming them. 
We experimentally compare various existing reclamation techniques by applying them to a multiversion tree and a multiversion hash table.

Using insights from these experiments, we develop 
two new multiversion garbage collection (MVGC) techniques.
These techniques use two novel concurrent version list data structures.
Our experimental evaluation shows that our fast\-est technique is competitive with the fastest existing MVGC techniques, while using significantly less space on 
some workloads.
Our new techniques  provide strong theoretical bounds, especially on space usage.
These bounds ensure that the schemes have 
consistent performance, avoiding the very high worst-case space usage 
of other techniques.

\end{abstract}

\begin{CCSXML}
	<ccs2012>
	<concept>
	<concept_id>10010147.10011777.10011778</concept_id>
	<concept_desc>Computing methodologies~Concurrent algorithms</concept_desc>
	<concept_significance>500</concept_significance>
	</concept>
	</ccs2012>
\end{CCSXML}

\ccsdesc[500]{Computing methodologies~Concurrent algorithms}

\arxcam{
	\keywords{lock-free, memory reclamation, MVCC}
}{}


\maketitle

\arxcam{
	\pagestyle{plain}
}{}


\section{Introduction}
\label{sec:intro}

Multiversion concurrency control is used widely in database systems
\cite{Reed78,BG83,papadimitriou1984concurrency,HANAgc,neumann2015fast,Wu17},
transactional
memory~\cite{perelman2010maintaining,FC11,Perelman11,Kumar14,Keidar2015},
and shared data structures~\cite{BBB+20,FPR19,NHP22,WBBFRS21a}, mainly
for supporting {\it read-only transactions} (abbreviated as {\it \rtxs}),
including complex multi-point queries that appear atomic, while update
operations are executed concurrently. 
This is usually achieved by maintaining a list of previous versions
for each object,
sorted by timestamps indicating when the object was updated.
An \rtx\ gets a
timestamp $t$ and 
traverses objects' {\em version lists},
to read the latest version at or before~$t$.

Multiversion garbage collection (MVGC) is the problem of removing and
collecting obsolete versions.  
Suppose two consecutive versions in the version list have timestamps $t_1$ and $t_2$.
The version with timestamp $t_1$ is \emph{needed} if there is an ongoing \rtx\ whose
timestamp $t$ satisfies $t_1\leq t < t_2$, and \emph{obsolete} otherwise.
The latest version is always needed.
Efficient MVGC is an important problem in multiversioning
systems~\cite{lu2013generic,bottcher2019scalable,HANAgc,BBF+21,FC11,neumann2015fast,kim2019mvrlu,NHP22,WBBFRS21a}, 
since keeping obsolete versions can lead to excessive space usage.

The simplest approach to MVGC is to use epoch-based reclamation
(EBR) \cite{FC11,neumann2015fast,kim2019mvrlu,NHP22,WBBFRS21a}. 
EBR manages version lists
quickly and simply by removing only the earliest versions from the end
of a version list when no active \rtxs\ need them.
\hedit{However, it cannot remove obsolete versions from the middle of a version list} 
\y{and it may therefore maintain  an unbounded number of obsolete versions.
In practice, this can lead to a blowup in space when there are long-lived \rtxs{}.}

To avoid this problem, Lu and Scott designed an MVGC
system GVM~\cite{lu2013generic} that can remove intermediate versions by
occasionally traversing all version lists and removing obsolete
versions.  The HANA system uses a similar approach~\cite{HANAgc}.
The problem with these approaches is that they scan all lists 
even if they are rarely or never updated.  B{\"o}ttcher et
al. improve on these techniques in the Steam system~\cite{bottcher2019scalable} by tying the scanning of lists to updates
on the list.  Experiments with Steam~\cite{bottcher2019scalable} 
showed that removing intermediate versions can significantly improve
space usage and throughput compared to some previous MVGC methods in
workloads with lengthy \rtxs\ and frequent updates.  However, \y{none of these}
systems provide strong worst-case time bounds and for $P$ processes
and $M$ version lists achieve bounds of at best $O(PM)$ space.


Ben-David et al.~\cite{BBF+21} recently gave a MVGC
technique (henceforth \bbf{}) that has the best known time and space bounds. 
It removes intermediate versions
without traversing the full list.
It is wait-free,
maintains only a constant factor more versions than needed, and uses
$O(1)$ steps on average per allocated version.  To achieve this,
\bbf{} defined a {\em range tracking} object to precisely identify
obsolete versions.  However, \bbf{} has not previously been
implemented and it is unclear if it can be made time-efficient
in practice.
It uses a sophisticated 
doubly-linked list
implementation, which we call \TreeDL\ because it has an implicit tree
laid over it. 

We study the efficiency of several MVGC schemes in
practice by implementing and comparing them in a single system.
Based on our experiments, we  develop two novel MVGC schemes that
combine and extend techniques from both Steam and \bbf{} to provide
protocols that are as space-efficient as \bbf{} while achieving
time efficiency comparable to Steam and EBR (sometimes slightly better, sometimes
slightly worse). 
\y{We use the MVGC schemes to unlink obsolete versions from version lists,
allowing them to be reclaimed by a garbage collector later.}
The \y{new schemes} relax some of the worst-case time
guarantees of \bbf{} to improve time performance in the common case.
However, they maintain the strong worst-case space bounds proved
in~\cite{BBF+21}.
Our experimental analysis reveals the practical
merit of state-of-the-art techniques for MVGC.  
\hedit{It
shows} that their performance  depends
significantly on the workload and that  no one technique
always dominates for both time and space.
\hedit{However, our new schemes show the most consistent performance without any particularly adverse workloads.}

Starting from the state-of-the-art MVGC schemes
in~\cite{BBF+21,bottcher2019scalable}, we had to combine,
simplify \hedit{and} 
extend  techniques to obtain the two new MVGC
schemes (as discussed in Section~\ref{sec:mvgc}). 
\y{They focus on
speeding up the removal of obsolete versions once they are identified.
This is an important part of MVGC.
Our removal algorithms greatly sped up the theoretically efficient BBF+ scheme. 
One of them was also used to come up with a lock-free version of Steam.
This version is much more efficient than the original, which locked entire version lists, for the lock-free multiversion data structures we use in our experiments.
}
\y{Our algorithms use the range tracking (\RTO) object from \bbf{} to 
identify
obsolete versions. A main challenge is that removing these versions can be 
concurrent with other operations on the version list.}

 
For the first \hedit{new} scheme, called
\RTDLGC, we develop a simpler doubly-linked list
implementation supporting removal from the middle of version lists
(see Section~\ref{sec:dll}). The implementation, called \DL\
(Practical Doubly-linked List), sacrifices \TreeDL's guarantee of
constant amortized time per operation and instead guarantees $O(c)$
amortized time, where $c$ is the number of concurrent removals of
consecutive list nodes.  
This choice was driven by
the experimental insight that removing chains of adjacent nodes
is very rare in practice: in our experiments the
average $c$ was no more than 1.01 across a wide variety of workloads. 
Experiments show the new, simpler list is much faster
than the 
constant-time list \TreeDL\ when $c$ is small.

Our second algorithm,  \RTSLGC, further improves practical
  performance for many common workloads, in which version lists tend
  to be very short, so that linear searches are reasonably efficient.
In this case, we use a singly-linked list, which requires less space (no back pointers) and less time for
updates due to fewer pointer changes.  GVM~\cite{lu2013generic},
Steam~\cite{bottcher2019scalable} and HANA~\cite{HANAgc} also rely on
this observation~\cite{bottcher2019scalable}.  We develop
a singly-linked list implementation, called
\SL\ (Simple Singly-linked List), in Section~\ref{sec:sll} and use it to implement version
lists in \RTSLGC.  
Unlike those in
Steam and HANA it is lock-free, 
and unlike GVM, which is lock-free, it does not require any mark bits
on pointers.  Such mark bits require an extra level of indirection in
\hedit{languages} that do not allow directly marking pointers.

Our new list structures support sorted
lists with appends on one end, and deletes anywhere, and are likely of
independent interest. We prove both correct.
Combining  our 
new lists with the range tracking  of \cite{BBF+21}
significantly improves space in many workloads over previous
 MVGC schemes. The range tracking object 
identifies
list elements 
to remove (in \RTDLGC) or lists  to traverse and collect (in \RTSLGC)
more accurately
than traversing a list on every update as in Steam or traversing all lists periodically
as in HANA and GVM.



In our experiments (Section~\ref{sec:experiments}), we implemented and studied several 
schemes: an epoch-based collector~\cite{WBBFRS21a}, \bbf{}~\cite{BBF+21},
our new MVGC schemes, and an optimized variant of Steam \cite{bottcher2019scalable} we
developed using \SL.
MVGC schemes encompass several mechanisms that may have a crucial effect on their time or space overhead.
We focussed on the following mechanisms that our experiments showed to have high impact on performance:
(1) data structure for storing  versions 
and (2) how to choose when to \er{remove obsolete versions.}

\y{Our experiments study these MVGC schemes in the context of multiversion concurrent data structures.
These are data structures that leverage version lists to support atomic read-only transactions 
(e.g., range queries) alongside single-key insert, delete and lookup operations.
They can be used as database indexes~\cite{wang2018building, leis2013adaptive} 
and have gained a lot of recent attention~\cite{NHP22,WBBFRS21a,SRP22,KKW22,SRP22}.}

We experimentally compared \y{the MVGC schemes} on two quite
different concurrent data structures, a \hedit{multiversion} balanced binary search tree
and a \hedit{multiversion} hash table.  As in previous work~\cite{bottcher2019scalable}, we
saw that reclaiming intermediate versions is vital for reducing memory
overhead (space), especially when there are long \rtxs\ or
oversubscription (i.e., when there are more threads than available logical cores).
In particular, EBR, which does not collect intermediate versions,
requires up to 10 times more space than the others.  Perhaps
surprisingly, and unexpected to us, Steam sometimes has particularly bad space usage
for trees even though it does collect
intermediate versions.  We found this is due to versioned
pointers pointing to objects containing other versioned pointers, as
described in Section~\ref{sec:experiments}.  This leads to cases
that require as much as 8 times more space.  We found that
\RTSLGC{} almost always performed best in terms of space.

Steam and EBR typically performed best
in terms of update throughput, but for rtx throughput the performance was
mixed.   For combined throughput there was little
difference among the schemes; sometimes Steam and EBR are
better and sometimes \RTSLGC{} and \RTDLGC{} are better.  \bbf{} is
almost always the slowest.  

In conclusion,
the experiments indicate MVGC schemes using range tracking 
avoid the high space anomalies of EBR and Steam, 
with 
throughput that is similar 
on mixed workloads and only slightly worse  on 
update-heavy workloads.

\y{Many existing practical MVGC schemes are designed for database systems that use multiversioning and are often heavily intertwined with other parts of} \er{those multiversion
systems.}
\y{We believe many of the MVGC techniques are common across both database systems and multiversion data structures.
A previous study~\cite{bottcher2019scalable} compares MVGC schemes in various database
systems (including~\cite{bottcher2019scalable,HANAgc,larson2013hekaton}) that
use different transaction management protocols.
For our experiments, we controlled for these confounding factors by applying 
different MVGC schemes to the same pair of multiversion data structures.
We know of no previous comprehensive, apples-to-apples
comparison of MVGC techniques.

}

We summarize the paper's contributions as follows.
\begin{itemize}[leftmargin=3mm,labelwidth=0mm,labelsep=1mm,topsep=0mm]
\item Two novel MVGC schemes that borrow and extend ideas from  state-of-the-art 
space- and time-efficient schemes.
The new schemes maintain the strong worst-case space bounds provided by \bbf~\cite{BBF+21},
while achieving throughput comparable to Steam~\cite{bottcher2019scalable} in most cases.
\item An experimental analysis that {\em fairly} compares 
state-of-the-art 
MVGC schemes by applying them  to the same multi\-version data structures.
This sheds light on the practical merit of the schemes, and the reasons that no
single MVGC approach is a clear winner in terms of both space and time.
\item Several insights
that can drive the design of new MVGC schemes, as they have for our new  schemes.
\item A new 
lock-free doubly-linked list \DL\ 
that is efficient when used in MVGC,
and which may be useful elsewhere. 
\item A new simple singly-linked list implementation \SL\ that is suitable for MVGC
and has better throughput than \DL. 
\end{itemize}



\section{Background and Related Work}
\label{sec:related}

{\bf Multiversioning.} \hedit{In general, using  multiversioning involves maintaining a global timestamp representing the current time.
Depending on the implementation, this global timestamp can be incremented by \rtxs{}~\cite{WBBFRS21a,FPR19}, update operations~\cite{NHP22}, or system clocks~\cite{KKW22, lim2017cicada}.
Each update marks any new object (or version of an object) with its timestamp, and adds the new object to the head of the appropriate version list.
Each rtx reads the global timestamp and uses it to navigate the version lists.
More specifically, an rtx with timestamp $t$ traverses a version list until it finds a version whose timestamp is at most $t$, and then reads the desired value from that version.
The details of how update operations and \rtx{}s are implemented differ between implementations, but this high-level picture is sufficient for our study of MVGC.}

\hedit{A {\em versioned CAS object}~\cite{WBBFRS21a} is a specific implementation of multiversioning that we use in our experiments when comparing different MVGC schemes.
It is a CAS object that supports looking up} \er{older values previously stored in it} \hedit{given a timestamp.
Our experiments use 
these objects to add support for \rtxs\
to a CAS-based lock-free balanced binary search tree~\cite{BER14} and a
simple lock-free hash table.} \Youla{This is the first time, I think, that we use the term "historical value". 
Also, I think we do not use it in later sections. Better avoid it?}
\Hao{I changed "historical values" to "older versions"}
\Eric{Some overlap between this paragraph and parag 12 of Sec 1.  Should we mention here that allowing querying historical values also adds support for rtxs?}
\Hao{I think the last two sentences is sufficient for saying that querying historical values adds support for rtxs.}

\noindent
{\bf Epoch-based reclamation (EBR).}
EBR \cite{Fra04,B15} \hedit{is a memory reclamation technique 
	 which}
divides the execution into epochs. 
\hedit{Each operation begins by reading and announcing the current epoch via an announcement array.
When all active processes have announced the current epoch, the global epoch counter is incremented and a new epoch begins.
This epoch counter is separate from the timestamps used for multiversioning.
EBR ensures that all active operations started during either the current 
or \y{the} previous epoch, and therefore any nodes removed during earlier epochs are safe to reclaim.
This idea can be extended to work for multiversioning by observing that a \rtx{} will never access any version that was overwritten before the start of the \rtx{}.
Therefore, all versions that were overwritten before the previous epoch are safe to reclaim, as they are no longer needed by any active \rtx{}s and are also not on the path to any needed versions.
}

Because EBR is simple and fast,  variants of it
are widely used for MVGC~\cite{FC11,neumann2015fast,WBBFRS21a,Wu17}.
However, EBR-based MVGC schemes reclaim only the oldest versions from the end of the version list, and not obsolete versions in the middle of the list.
As a result, EBR does not guarantee
space bounds, and can result in high space overhead, particularly if 
processes execute long \rtxs, \hedit{thus preventing the epoch from being advanced}, 
or if processors are oversubscribed.

\noindent
{\bf Compaction-based reclamation.}
To address EBR's weakness,  
compaction-based MVGC schemes~\cite{lu2013generic,HANAgc,bottcher2019scalable} identify
obsolete versions and remove them from version lists.  A version's
\emph{time interval} is the interval of timestamps
when it was the latest version.
\hedit{
In compaction-based schemes, \rtxs{} announce the timestamp they intend to use for traversing the version lists.
A version list is compacted by first reading the announced timestamps, sorting them, and then traversing the version list to identify and remove versions whose time intervals
do not include any announced timestamp.}
Since the versions are in
order, the traversal takes constant time per version.  Steam~\cite{bottcher2019scalable}
traverses 
and compacts a version list whenever a new version is added to
it.  
HANA~\cite{HANAgc} and GVM~\cite{lu2013generic} do not tie
the \er{compaction} to adding versions, but instead either go through all
lists using background threads, or have the main threads \er{compact} once
in a while.  
\er{These techniques have the disadvantage of}
traversing a version list even if it contains few or no obsolete versions
to be collected.  
\er{HANA and GVM even visit lists when no
changes have been made to them.}
GVM~\cite{lu2013generic} and
Steam~\cite{bottcher2019scalable} guarantee that 
each version list contains $O(P)$ versions, where $P$ is the number of processes, so \hedit{$O(PM)$ versions are maintained for $M$ lists}.
%
%


\noindent
{\bf Range-tracking.}
\bbf{}~\cite{BBF+21} uses a range tracking object to 
directly identify obsolete versions, 
avoiding the traversal of the entire list. 
A range tracking object tracks a set of non-current versions, each of which
has been assigned an integer range indicating its time interval.
A version in the range
tracking object can be reclaimed if its interval does not intersect any of
the \hedit{timestamps announced by \rtxs{}.} 
In \RT\ (the linearizable range tracking implementation of~\cite{BBF+21}),  
each thread $p$ appends non-current versions in a local list. 
If the size of $p$'s list becomes $P\log P$, $p$ performs a {\em flush}.
The flush appends
$p$'s local list to a shared FIFO queue $Q$ (of lists)~\cite{FK14}.  It then dequeues two lists from $Q$,
merges their contents and compares the merged list against a sorted sequence of the current announcements to determine which versions
 are still needed.  The list of needed versions is enqueued to $Q$, and the set of  obsolete versions is returned so that they can be removed from their lists.
A flush phase requires $O(P\log P)$ steps and is performed once
every $\Theta(P\log P)$ times an element is inserted in the local list. 
\RT\ thus \hedit{ensures} amortized constant time for each operation it supports.
 


%

%


\ignore{
A range-tracking object supports three operations, 1) \announce\ to announce a new timestamp, 2) \unannounce\ 
to declare that a previous announcement is no longer active, and 3) \deprecate, which 
declares that a version $v$ of an object is no longer the object's current version and 
identifies a set of obsolete nodes that can be safely disconnected from the object's version list.   
Every time a version $v$ ceases from being the current, 
\deprecate\ is invoked to declare that $v$ is no longer current and compact the version list $v$ resides. 

RangeTracker ensures that, in the worst case, \announce\ and \unannounce\ take $O(1)$ steps, 
and \deprecate\ takes $O(P\log P)$ steps but its amortized step complexity is $O(1)$. 
Operations \announce\ and \unannounce\ use a destination object~\cite{BW20a} to announce and unannounce 
timestamps in an announcement array $Ann$ in $O(1)$ steps.
To achieve its bounds, operation \deprecate\ works as follows. 
}

\noindent
{\bf Removing from Lists.}
After identifying an obsolete version,
it must be spliced
out of its list.  Steam~\cite{bottcher2019scalable} and HANA~\cite{HANAgc} use locks to 
safely do so. 
GVM~\cite{lu2013generic} uses a variant of Harris's lock-free singly-linked list~\cite{harris2001pragmatic}. 
This works since lists are always traversed from their head.  However, 
\bbf{} requires
safely removing obsolete nodes from the middle of a list, given only
a pointer to the node.  To do this safely and efficiently, \bbf{}
uses \TreeDL, a custom wait-free doubly-linked list. 
\TreeDL\ uses an implicit binary  tree
whose in-order traversal gives the list nodes in order. List nodes that are
leaves of the tree are not adjacent in the list and can
therefore be removed safely \hedit{and concurrently}.  
If only leaves are removed, then
leaves that store versions that are still needed by \rtxs\ may 
prevent obsolete versions at internal nodes from being removed, resulting in high space bounds. 
Thus, \TreeDL\  provides intricate helping mechanisms
that also permit the removal of internal list nodes 
in the tree.
This is done with care 
to ensure list consistency and \hedit{low space usage}.
%


\noindent
{\bf Other reclamation schemes.}
\hedit{A recent improvement on EBR is version-based reclamation~\cite{SHP21}. It bounds memory by restarting long running operations, and has very recently been applied to a multiversion data structure~\cite{SRP22}.}
Other memory reclamation schemes have been
proposed, 
but  they do
not solve the MVGC problem~\cite{HL+05,M04,RC17,WI+18-I}, or 
require special support, in hardware~\cite{AE+14,DH+11} or through the
operating system~\cite{B15,SBM21}.  \bbf{} uses Hazard
pointers~\cite{M04} and a recent implementation~\cite{ABW21} of
reference counting~\cite{correia2021orcgc,DM+01,HL+05} to 
deallocate nodes that have been spliced out of the 
lists.
Our new MVGC schemes rely on the automatic Java garbage
collector to deallocate unreachable nodes.

\noindent
{\bf Other lock-free list structures.}
We designed \SL\ and \DL\ as special-purpose lock-free linked lists for representing version lists.
They avoid the extra level of indirection of Valois's singly-linked list \cite{Val95}, which
places auxiliary nodes between real nodes.
Harris's singly-linked list \cite{harris2002practical} and others that build on it \cite{FR04,Mic02} also add a level of indirection when
implemented in Java, since they require a mark bit on list pointers.
Besides \TreeDL, discussed above,
other existing lock-free doubly-linked list implementations are quite complex 
because they support more operations than we need for  version lists \cite{ST08,Sha15}
or rely on multi-word CAS instructions \cite{AH13,Gre02}, which are not widely available in hardware
(although they can be simulated in software~\cite{harris2002practical,guerraoui2020efficient}).


\section{Proposed MVGC \y{Schemes}}
\label{sec:mvgc}

We propose two new MVGC \y{schemes}, \RTDLGC\ and \RTSLGC.
\y{As with BBF+, we use the \RT{} to identify which versions in the middle of a list can be removed. }
\hedit{Our new schemes can be applied in the same way as BBF+; namely, \rtxs{} announce and unannounce their timestamps using the operations provided by the \RT{}, and whenever a version} \y{is} \hedit{overwritten by a newer version, it is passed to the \RT{} along with its time interval. \Youla{Possibly better talk about operations of \RT{} rather than functions of \RT{}?}
\Hao{I changed functions -> operations}
The two new schemes preserve the correctness conditions (e.g., linearizability, sequential consistency) of the multiversion data structure they are applied to.
}

\hedit{Our first scheme,} \RTDLGC\ is based on a novel doubly-linked list implementation, called \DL,
whereas \RTSLGC\ employs a new, simple implementation of a singly-linked list, called \SL. 
The lists are sorted by a key field (e.g., the version timestamp).
\DL\ supports the following operations, introduced in \cite{BBF+21,WBBFRS21a}:
1) \var{tryAppend(x,y)} attempts to append a new element \var{y} 
at the end of the list, given that \var{x} is currently the last list element; 
2) \var{remove(x)} removes the element \var{x} from the list (and is used for garbage collection); 
3) \var{peekHead} returns the last element of the list (i.e., the current version);
4) \var{search(key)} returns the latest element of the list whose key is less than or equal to \var{key}
(i.e., the version that was current when  \var{key} was the current timestamp).
Consistent with how MVGC uses version lists, 
\DL\ assumes a \var{remove} is called only once on each node,
and not on the head node.

%
The list interface above supports {\em versioned CAS objects}, as described in
\cite{WBBFRS21a}.  In this case, a CAS operation $op$ that wants to change the value from
\var{v} to $\var{v}'$ calls \var{peekHead}, and checks if the
value stored in the latest version is \var{v}. 
If so, $op$ calls \var{tryAppend} to add a version containing the value $\var{v}'$ after \var{v}.
If \var{tryAppend} fails, a concurrent CAS must have successfully changed the
value of the CAS object to something different from \var{v}. Then, $op$
returns false.
%

\er{To avoid some of the intricacies of \TreeDL\,
we} designed our novel implementation \DL\ based on a simple idea:
to remove a node we  mark it  and then traverse outwards to the first unmarked
node in either direction and make them point to each other.
\DL\ ensures that at most  
$L-R + P$ nodes in total are in the version lists at the end of an execution 
containing a total of $L$ appends and $R$ removes. 
This bound is better than the $2(L-R) + O(P\log L_{max})$ bound provided by \TreeDL~\cite{BBF+21}, \hedit{where $L_{max}$ is the maximum number of appends on a single version list}.
In terms of time, \DL\ ensures that \tryAppend\ and \var{peekHead}
maintain the $O(1)$ bound of \TreeDL, but the   number of steps for \remove\ is $O(c)$,
where $c$ is the number of concurrent removals of consecutive nodes in the list.
The average value of $c$ was at most 1.01 in each of our experiments.
Thus, from a practical perspective, relaxing the time bound for \remove\
to $O(c)$ is a good compromise. 

\SL\ is a singly-linked list implementation, and thus lacks back pointers.
In a doubly-linked list, a version can be removed 
by accessing only the node's immediate neighbourhood.
To remove a node from a singly-linked list, we must find its predecessor in the list
by traversing the list from the head.
Thus, instead of focusing on removing individual nodes, \SL\ supports a \compact\ routine
that traverses the entire list, removing obsolete nodes.
Experiments showed that, in many cases, the version lists tend to be short, 
so traversing a list is often inexpensive.
Indeed, in most experiments, \RTSLGC, which employs \SL, exhibits better throughput than \RTDLGC.
\SL\ ensures the same $L-R+P$ bound on the number of nodes contained in the version lists as \DL.

\DL\ and \SL\ 
are presented in Sections~\ref{sec:dll} and~\ref{sec:sll}.
Due to lack of space, their correctness proofs and complexity bounds are in the \er{\arxcam{appendices}{full version \cite{arxiv}}}. 
%
We used \DL\ to develop \RTDLGC, and \SL\ to develop \RTSLGC,
as well as \STEAMLF, a lock-free version of Steam that we describe below.
Specifically, instead of using \TreeDL\ for implementing  the version lists as in \bbf, 
\RTDLGC\ employs \DL\ and \RTSLGC\ employs \SL.
\RT~\cite{BBF+21} is used by both
\RTDLGC\ (to decide when to splice out an individual node), and by
\RTSLGC\ (to decide when to traverse and compact a version list).  

\RT\ is proved in~\cite{BBF+21} to use $O(H + P^2\log P)$ space, where $H$ is the \hedit{maximum} number of needed versions \hedit{at any point during the execution}.
We call the left and right pointers of nodes in \DL\ and the left pointers of nodes in \SL, {\em access pointers}. 
A node is {\em reachable} in \DL\ or in \SL\ at some point in time, if it can be reached
starting from the list head and following access pointers. 
Recall that both \DL\ and \SL\ ensure that at most  
$L-R + P$ nodes in total remain reachable in the version lists. 
This bound and the bound for \RT\ imply the following. 

\begin{theorem}
\label{thm:mvgc-space}
Consider an implementation of a concurrent multi-versioning data structure that uses \RTDLGC\ or \RTSLGC\ for garbage collection. 
At any time $t$ of an execution, the total number of versions that are reachable in the version lists is $O(H + P^2 \log P$),
where $H$ is the maximum, over all times before $t$, of the number of needed~versions.
\end{theorem}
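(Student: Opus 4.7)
The plan is to combine two facts already in hand: the list-structural bound that at most $L - R + P$ nodes are reachable in the version lists (where $L$ counts appends and $R$ counts removes, both totaled across all lists), and the fact from \cite{BBF+21} that \RT\ uses $O(H + P^2 \log P)$ space, where $H$ upper-bounds the number of needed versions at any point. The key conceptual step is to partition the currently-reachable versions into (i) needed versions and (ii) obsolete versions that have been handed to \RT\ but not yet spliced out, and then to bound each class separately.

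First I would fix an arbitrary time $t$ and let $L$ and $R$ be the total numbers of \tryAppend\ and \remove\ operations that have taken effect by time $t$ on all lists together. By the list-structural bound stated just before the theorem, the number of reachable nodes at time $t$ is at most $L - R + P$. The quantity $L - R$ is exactly the number of versions that have been appended but not yet removed, so it suffices to split these into needed versions and obsolete-but-not-yet-removed versions. Needed versions contribute at most $H$ by the definition of $H$ in the theorem statement. Every version that has ceased to be current has been passed to \RT\ (this is how the two new schemes are set up, as described in Section~\ref{sec:mvgc}), so every obsolete version that has not yet been physically removed from its list is currently stored inside \RT. Thus the number of such versions is at most the space used by \RT, which is $O(H + P^2 \log P)$.

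Putting the pieces together, the number of reachable versions at time $t$ is at most
\[
(L - R) + P \;\le\; H \;+\; O(H + P^2 \log P) \;+\; P \;=\; O(H + P^2 \log P),
\]
which is the bound claimed in the theorem. The same argument works for both \RTDLGC\ (using \DL) and \RTSLGC\ (using \SL), since both list implementations satisfy the same $L - R + P$ reachability bound and both schemes use \RT\ in the same way to route obsolete versions toward removal.

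The only subtle point, and the one I would take most care to state precisely, is the claim that every obsolete, not-yet-removed version is actually present in \RT\ at time $t$: a version enters \RT\ when it is deprecated and leaves only when its removal from the list is complete, so a version that has been identified as obsolete but whose \remove\ has not yet finished still occupies space in \RT. Once this invariant is written down cleanly, the arithmetic combining the two bounds is routine.
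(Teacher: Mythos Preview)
Your approach matches the paper's, which gives only a one-sentence justification before the theorem: ``This bound [$L-R+P$ reachable nodes] and the bound for \RT\ [$O(H+P^2\log P)$ space] imply the following.''  You are simply filling in the combining argument the paper leaves implicit, and your decomposition into needed versus obsolete-but-not-yet-removed versions is the right one.

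One point to tighten: your final paragraph asserts that a version ``leaves [\RT] only when its removal from the list is complete,'' but in the \RT\ of~\cite{BBF+21} a version leaves when a flush \emph{returns} it to the calling thread, which is \emph{before} that thread invokes \remove\ (or \compact) on it.  Hence there is a third class of appended-but-not-removed versions beyond the two you list: those already handed back by \RT\ whose physical removal is still pending.  This does not break the bound---a flush returns $O(P\log P)$ obsolete versions and the calling thread removes them before accumulating another full local batch, so across $P$ threads at most $O(P^2\log P)$ versions sit in this limbo, which is absorbed by the target bound.  Likewise, up to $P$ versions may have just been overwritten but not yet handed to \RT\ via \op{deprecate}.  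Once these two small additive terms are made explicit, your accounting is airtight; as written, the justification you give for why the ``subtle point'' holds is not quite the right one.
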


\hedit{
If the number of needed versions is large at} \y{some} \hedit{point in} \y{an} \hedit{ execution}, 
\y{this} \hedit{will only influence the size of the version lists for a limited number of steps.
More specifically, if the number of needed versions remains below some quantity $h$ in a suffix of an execution, 
then the number of reachable versions in Theorem~\ref{thm:mvgc-space} will eventually be $O(h + P^2\log P)$.}
\Eric{I reworded slightly}
\Youla{Also, it is not clear why the claims here are true.}
\Hao{we could add "this can be shown by slightly modifying the space bound proof in ~\cite{BBF+21}", but I'm not sure if that is more satisfactory. I don't think we have enough space here to explain why the claim is true.}
%
%


\ignore{
\begin{theorem}
\label{thm:mvgc-time}
Consider an implementation of a concurrent multi-versioning data structure that uses \RTDLGC\ for garbage collection. 
1) The amortized time complexity of \op{tryAppend}, \op{getHead}, and creating a new version list is $O(1)$.
2) The amortized time complexity of \op{remove} is $O(c)$.  
\end{theorem}
}


\section{Doubly-Linked List}
\label{sec:dll}

\ignore{
\Eric{This parag may now overlap with earlier material.  Should mention the acronym \DL\ somewhere in opening}
Garbage collection requires removing a version identified as obsolete from its version list.
If the list is short, it may be acceptable to traverse a singly-linked list to remove
a version.
However, if the list is long, a doubly-linked list will allow more efficient removals.
This approach was used in \cite{BBF+21} to provide version lists  where the 
average number of steps for a \remove\ operation was $O(1)$.
Achieving this bound required a complex algorithm.
In Algorithm \ref{alg:doubly-linked-list} we give a  simpler doubly-linked version list 
that has a weaker time bound, but achieves good performance in our experiments.
}

Algorithm \ref{alg:doubly-linked-list} provides the pseudocode for \DL. 
Each node stores a \var{left} and \var{right} pointer, as well as 
a \var{mark} bit to facilitate deletions. It also stores a key \var{key}
and a value $\var{val}$, which are set when the node is created and never change.
List elements are appended to the right end of the list and  a \var{head}
pointer points to the rightmost node.
The list  initially contains a sentinel node with key $-\infty$, which remains at the left end of the list at all times.

\newcommand{\tallstrut}{\rule{0pt}{1.05\baselineskip}}
\renewcommand{\figurename}{Algorithm}

\begin{figure}
  \begin{lstlisting}[linewidth=.99\columnwidth, numbers=left]
class Node {
  int key; Value val; bool mark; // initially false
  Node* left, right; }  <-- OMIT CONSTRUCTOR TO SAVE SPACE
  Node(Key k, Value v) { // constructor
    val = v; key = k; mark = false;
    left = null; right = null; } } -->
class DoublyLinkedList {															@\tallstrut@
  Node* head; <-- OMIT CONSTRUCTOR TO SAVE SPACE
  DoublyLinkedList() { // constructor
    head = new Node(-inf, $\bot$); } // sentinel node
-->
  Value peekHead() { return head->val; }				@\label{peekHead-return}@ 	@\tallstrut@
  Value @\search@(int k) { 															@\tallstrut@
    Node* x = head;										@\label{search-begin}@
    while(x->key > k) {									@\label{search-test}@
      x = x->left; }									@\label{search-advance}@
    return x->val; } }
  bool @\tryAppend@(Node* x, Node* y) {												@\tallstrut@
  	Node* w = x->left;							@\label{append-read-left}@
  	// first, help tryAppend(w, x) if necessary
  	if(w != null) CAS(&(w->right), null, x); 	@\label{help-right}@
  	y->left = x; 									@\label{init-left}@
  	if(CAS(&head, x, y)) { 							@\label{CAS-head}@
  		CAS(&(x->right), null, y);					@\label{set-right}@
  		return true;
  	} else return false; }
  void @\remove@(Node* x) {															@\tallstrut@
    x->mark = true;											@\label{mark}@
    Node* left = x->left;									@\label{first-left}@
    Node* right = x->right;									@\label{first-right}@
    Node* leftRight, rightLeft;
    while(true) {
      while(left->marked) left = left->left;				@\label{advance-left}@
      while(right->marked) right = right->right;			@\label{advance-right}@
      rightLeft = right->left;								@\label{right-left}@
      leftRight = left->right;								@\label{left-right}@
      if(left->marked || right->marked) continue;			@\label{remove-test}@
      if(!CAS(&right->left, rightLeft, left)) continue; 	@\label{CAS-left-DLL}@
      if(!CAS(&left->right, leftRight, right)) continue;	@\label{CAS-right}@
      break; }  }
  \end{lstlisting}
\caption{Doubly linked list implementation (\DL{}).}
\label{alg:doubly-linked-list}
\end{figure}

\renewcommand{\figurename}{Figure}


The \var{peekHead} operation simply reads \var{head} and returns the node's \var{val} field. 
A \var{search(k)}  starts at \var{head} and follows \var{left} pointers until reaching a node whose key is at most \var{k}.

A \var{\tryAppend(x,y)} attempts to append a new node \vy\ after
a node \vx\ that has previously been read from the \var{head} pointer.
It updates the \var{left} pointer of \vy\ to \vx\ (line \ref{init-left}) and swings the \var{head}
pointer from \vx\ to \vy\ using a CAS  (line \ref{CAS-head}).
If swinging the \var{head} fails, the operation returns false.
Otherwise, \tryAppend\ attempts to update the \var{right} pointer of \vx\ to \vy\
using the CAS at line~\ref{set-right}.
If swinging the \var{head} succeeds, but \tryAppend\  pauses before updating \vx's \var{right} pointer,
the list is left in an inconsistent state.
So, before any subsequent \tryAppend\ adds a node beyond \vy, it first
helps complete the append of node \vy\ by updating \vx's \var{right} pointer (line \ref{help-right}).
The use of CAS (lines~\ref{help-right} and~\ref{set-right}) 
to initialize \vx's \var{right} pointer 
ensures that the pointer is initialized only once.

A \var{\remove(\vx)}  first marks the node \vx.
(Line~\ref{mark} does this with a simple write, since a marked node remains marked forever.) 
It then seeks the nearest unmarked nodes on either side and updates them
to point to each other using CAS instructions.
If either CAS fails (or if the validation at line \ref{remove-test} 
that the two nodes are still unmarked fails)
the \remove\ operation tries again to find the nearest unmarked nodes, continuing
outwards from where the previous attempt left off.
Once both pointers are updated, the \remove\ terminates.

\begin{figure}
\input{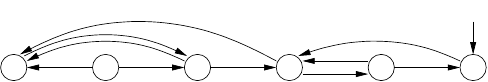_t}
\caption{An example state of the list.  A \var{remove(v)} has completed.  A pending \var{remove(\vw)} and \var{\remove(\vy)} have performed the CAS at line \ref{CAS-left-DLL} successfully but not the CAS at line \ref{CAS-right}. A \search\ may be visiting any of the nodes shown.\label{fig-inconsistent}}
\end{figure}

\DL\ does not maintain perfect consistency between \var{left} and \var{right} pointers:
a \remove\  may update one but pause or  die before updating the other.
Moreover, to keep the implementation lightweight, 
a \remove\  does not help  other nearby \remove\ operations to make the list consistent before applying
its own changes; it simply makes its changes (which may, incidentally, help other \remove\ operations take effect).
See Figure \ref{fig-inconsistent} for an example.
In addition, a \search\ concurrent with \remove\ operations may traverse removed nodes.
Thus, a careful proof is needed to show some fairly weak invariants
suffice for linearizability~\cite{HW90}.
We sketch the proof in Section~\ref{DLL-correctness}; for details see \er{\arxcam{Appendix~\ref{DLL-appendix}}{the full version \cite{arxiv}}}. 

\subsection{Correctness}
\label{DLL-correctness}

We show that any execution $\alpha$ of the doubly-linked list is linearizable.
We assume the following preconditions.
When \var{\tryAppend(x,y)} is called, \vy\ is a new node that has never been used as the second argument of \tryAppend\ before and it is not the sentinel node.  Moreover, \vx\ has been read from \var{head} and \vy's key is greater than or equal to \vx's key.
There is at most one call to \var{remove(x)} for each \vx, and it can be called only after a call to \var{\tryAppend(x,*)} has returned true.

If \vx\ and \vy\ are pointers to nodes,
we use the notation $\vx \leftarrow \vy$ to indicate that \var{\vy->left = \vx} and 
$\vx\rightarrow \vy$ to indicate that \var{\vx->right = \vy}.
We use $\vx \lreach \vy$ to indicate there is a path of \var{left} pointers from \vy\ to \vx.
We say that \vx\ is added to the list when the \var{head} pointer is set to \vx.
Since there is at most one call to tryAppend with \vx\ as the second argument, there is at most one
step in $\alpha$ that sets \var{head} to \vx.
We  define a total order on nodes that are added to the list during $\alpha$:
$\vx \prec \vy$ if the \var{head} pointer was set to \vx\ before it was set to \vy\ in $\alpha$.

The following invariant captures some relationships between a node and its neighbours.
Essentially, it says that \var{left} pointers always point to older nodes and \var{right} pointers point to newer nodes.
Moreover, if one of those pointers skips over a node, the skipped node must be marked for deletion.
To prove these invariants, we also need similar claims for the local variables
\var{left} and \var{right} of pending \var{removes}.

\begin{restatable}{invariant}{resInvOrder}
\label{inv-order}
Let $C$ be a configuration and \vy\ be a node.
\begin{enumerate}[leftmargin=2mm,labelwidth=0mm,labelsep=1mm,topsep=0mm]
\item
\label{left-order}
If $\vy$ has been added to the list (and is not the sentinel), then \var{\vy->left} is also a node that was added to the list  and $\var{\vy->left}$ $\prec \vy$.   
For all \vw, if $\var{\vy->left} \prec \vw \prec \vy$ then \vw\ is~marked.
\item
\label{right-order}
If \var{\vy->right} was non-null in a configuration at or before $C$, then in $C$ \var{\vy->right} is a node that was added to the list
and $\var{\vy->right} \succ \vy$.
For all \vw, if $\vy \prec \vw \prec \var{\vy->right}$ then \vw\ is marked.
\item
\label{remove-order}
For any call to \var{\remove(\vy)} in progress, the local variables \var{left} and \var{right} are nodes that have been added to the list and $\var{left} \prec \vy \prec \var{right}$.
Moreover, for all $\vw$, if  $\var{left} \prec \vw \prec \var{right}$ then \vw\ is marked.
\item
\label{sentinel-left}
The \var{left} pointer of the sentinel node is null.
\end{enumerate}
\end{restatable}

The \var{head} pointer is initialized to the sentinel node,
and can only be updated on line \ref{CAS-head} to a non-null pointer.
This fact, Invariant \ref{inv-order}
and the preconditions of the operations imply that a null pointer is never dereferenced.

Because \remove\ operations update \var{left} pointers concurrently, and may splice out multiple nodes at once, 
there is a danger that one \remove\ may undo the effects of another. For example, suppose we have
four nodes $\vw\leftarrow \vx\leftarrow \vy \leftarrow \vz$.  If a \var{remove(\vx)} updates \var{\vy->left}
to \vw\ and a concurrent \var{remove(\vy)} updates \var{\vz->left} to \vx, then \vx\ would remain reachable.
The following lemma ensures that this cannot happen.  (In fact, it is more general in that it applies even when the four nodes are not consecutive in the list.)

\begin{restatable}{lemma}{resNoCrossovers}
\label{no-crossovers}
Suppose $\vw\prec\vx\prec\vy\prec\vz$ and $\vw \leftarrow \vy$ at some time during the execution.
Then there is never a time when $\vx \leftarrow \vz$.
\end{restatable}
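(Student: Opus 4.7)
I plan to prove the lemma by contradiction, combining Invariant~\ref{inv-order}(\ref{remove-order}) with the fact that \var{mark} is monotone (set only at line~\ref{mark}, never reset). Suppose for contradiction that $\vw\prec\vx\prec\vy\prec\vz$, that $\vy$->left $=\vw$ at some time $t_1$, and that $\vz$->left $=\vx$ at some time $t_2$. Since a \var{left} field is modified only by the CAS at line~\ref{CAS-left-DLL}, which sets \var{right->left} to \var{left}, let $R_1$ be the remove whose successful line-\ref{CAS-left-DLL} CAS at $t_1$ sets $\vy$->left $=\vw$, forcing its local variables to satisfy \var{left}$=\vw$ and \var{right}$=\vy$; symmetrically let $R_2$ be the remove with \var{left}$=\vx$ and \var{right}$=\vz$ at $t_2$.

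Next I would examine each remove's state at line~\ref{remove-test}, the step immediately preceding its successful CAS. The local \var{left} and \var{right} are not re-read between lines~\ref{remove-test} and~\ref{CAS-left-DLL}, so they still carry the values above. Passing line~\ref{remove-test} requires both endpoints to be unmarked, giving two facts: $\vy$ is unmarked at $R_1$'s check and $\vx$ is unmarked at $R_2$'s check. At those same two configurations, Invariant~\ref{inv-order}(\ref{remove-order}) applied to the pending \remove\ call forces every node strictly between \var{left} and \var{right} in $\prec$ order to be marked, yielding the complementary pair: $\vx$ is marked at $R_1$'s check (since $\vw\prec\vx\prec\vy$), and $\vy$ is marked at $R_2$'s check (since $\vx\prec\vy\prec\vz$).

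Because \var{mark} is monotone, the two facts about $\vx$ force $R_2$'s check to occur strictly before $R_1$'s, while the two facts about $\vy$ force $R_1$'s check to occur strictly before $R_2$'s---a contradiction. The heavy lifting is not this argument itself, which is just two symmetric applications of the invariant, but rather justifying Invariant~\ref{inv-order}(\ref{remove-order}) in the first place. The conceptual subtlety inside the lemma that I would want to verify most carefully is confirming that the local \var{left} and \var{right} at each check equal exactly $(\vw,\vy)$ or $(\vx,\vz)$; this is forced by the shape of the subsequent CAS operand (\var{right->left} with new value \var{left}) together with the fact that neither local is re-read between lines~\ref{remove-test} and~\ref{CAS-left-DLL}.
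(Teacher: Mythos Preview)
Your proposal is essentially the paper's own argument: both use Invariant~\ref{inv-order}(\ref{remove-order}) at the line-\ref{remove-test} configuration of the two \remove{}s to deduce that $\vx$ is marked while $\vy$ is not (for the $\vw\leftarrow\vy$ side) and symmetrically $\vy$ is marked while $\vx$ is not (for the $\vx\leftarrow\vz$ side), and then derive a temporal contradiction from the monotonicity of \var{mark}. The paper phrases the contradiction as ``$\vx$ is marked before $\vy$'' versus ``$\vy$ is marked before $\vx$''; you phrase it as an ordering of the two line-\ref{remove-test} checks---these are equivalent.

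One small gap to patch: your claim that ``a \var{left} field is modified only by the CAS at line~\ref{CAS-left-DLL}'' is not literally true, since line~\ref{init-left} of \tryAppend\ also writes \var{left}. You need to rule out that $\vw\leftarrow\vy$ (resp.\ $\vx\leftarrow\vz$) arose as the initial value set by \var{\tryAppend(\vw,\vy)} (resp.\ \var{\tryAppend(\vx,\vz)}). The paper handles this explicitly: because $\vx$ lies strictly between $\vw$ and $\vy$ in $\prec$, the \var{head} pointer took the value $\vx$ between taking $\vw$ and $\vy$, so the successful CAS at line~\ref{CAS-head} that appended $\vy$ could not have had $\vw$ as its old value; hence $\vy$'s initial \var{left} is not $\vw$, and some line-\ref{CAS-left-DLL} CAS must have stored $\vw$ there. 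The symmetric argument works for $\vz$. With that one-line justification added, your proof is complete and matches the paper's.
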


Invariant \ref{inv-order} can be used to show 
that each time a \var{left} pointer changes, it points to an older node.

\begin{restatable}{lemma}{resCASAdvances}
\label{CAS-advances}
If line \ref{CAS-left-DLL} does a \var{CAS(\vz->left, \vy, \vw)}, then
$\vw \preceq \vy$.
\end{restatable}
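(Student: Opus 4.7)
The plan is to argue by contradiction: assume $\vy \prec \vw$, and let \remove($u$) denote the call performing the CAS, with $\vw$ and $\vz$ being its local \var{left} and \var{right} in the iteration containing the successful CAS. Let $t_2$ be the moment at which line~\ref{right-left} of that iteration reads $\vz$->left into \var{rightLeft}, obtaining the value $\vy$. By the control flow of the loop body, $t_2$ strictly precedes the check at line~\ref{remove-test}, which in turn strictly precedes the CAS at line~\ref{CAS-left-DLL}.

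Next I would invoke Invariant~\ref{inv-order}(\ref{left-order}) for $\vz$ at the configuration corresponding to $t_2$. Since $\vz$->left $= \vy$ there (and $\vz$ is not the sentinel, because local \var{right} is obtained by advancing rightward from $u$->right), every node $\vw'$ with $\vy \prec \vw' \prec \vz$ is marked at $t_2$. I would then apply Invariant~\ref{inv-order}(\ref{remove-order}) to the in-progress \remove($u$) at the same configuration to conclude $\vw \prec u \prec \vz$, and in particular $\vw \prec \vz$. Combined with the contradiction hypothesis $\vy \prec \vw$, we obtain $\vy \prec \vw \prec \vz$, so $\vw$ itself is marked at $t_2$. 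Because the \var{mark} bit is only ever set (at line~\ref{mark}) and never cleared, $\vw$ remains marked at every later configuration, in particular when line~\ref{remove-test} is reached later in the same iteration; \var{left->marked} then evaluates to true, the loop restarts via \var{continue}, and the CAS at line~\ref{CAS-left-DLL} is not executed in this iteration. This contradicts the assumption that the CAS succeeds, so $\vw \preceq \vy$.

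The main obstacle is pinning down the chronological skeleton inside the loop body so that the ``$\vw$ marked'' consequence drawn at $t_2$ from Invariant~\ref{inv-order}(\ref{left-order}) can be transported forward to the mark check through the monotonicity of \var{mark}; once we establish the temporal ordering that $t_2$ strictly precedes line~\ref{remove-test}, the two invariants click together cleanly. A minor side check is that $\vz$ is not the sentinel and that \remove($u$) is pending at $t_2$ with the declared local values of \var{left} and \var{right}, but both are immediate from Invariant~\ref{inv-order}(\ref{remove-order}) and from the precondition that \remove\ is never invoked on the sentinel.
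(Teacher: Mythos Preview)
Your proof is correct and follows essentially the same route as the paper's: assume $\vw \succ \vy$, look at the configuration where \var{rightLeft} was read from \var{\vz->left}, use Invariant~\ref{inv-order}(\ref{remove-order}) to place $\vw$ strictly between $\vy$ and $\vz$, and then Invariant~\ref{inv-order}(\ref{left-order}) to conclude $\vw$ is already marked, contradicting the unmarked check at line~\ref{remove-test}. Your write-up is a bit more explicit about the temporal ordering and mark monotonicity than the paper's, and one phrasing slip to fix: the contradiction is with the CAS being \emph{executed} in that iteration (reaching line~\ref{CAS-left-DLL} requires the test at line~\ref{remove-test} to have seen \var{left} unmarked), not with it \emph{succeeding}.
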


The \emph{abstract list} $AL$ in a configuration 
is the sequence of nodes reachable from the head by following \var{left} pointers.
The next three lemmas ensure that a node \vx\ is  removed from $AL$ once 
\var{\remove(x)} has terminated, and only if
\var{\remove(x)} has been invoked.
Lemma \ref{remove-only} implies that updates to a \var{left} pointer only remove elements from $AL$:  the list after the update is a subsequence of the list prior to the update.

\begin{restatable}{lemma}{resRemoveOnly}
\label{remove-only}
If a CAS at line \ref{CAS-left-DLL} changes \var{\vz->left} from \vy\ to \vw, then
$\vw \lreach \vy$ in the preceding configuration~$C$.
\end{restatable}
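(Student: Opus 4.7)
The plan is to use Lemma~\ref{no-crossovers} together with the observation that the successful CAS creates the link $\vw \leftarrow \vz$ immediately afterwards, to rule out any ``shortcut'' past $\vw$ in the chain of \var{left} pointers starting from $\vy$ in $C$.

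First, I would apply Lemma~\ref{CAS-advances} to obtain $\vw \preceq \vy$. If $\vw = \vy$, then $\vw \lreach \vy$ holds via the empty path, so I may assume $\vw \prec \vy$. Next, I would form the sequence $\vy = \vy_0, \vy_1, \vy_2, \ldots$ in configuration $C$ defined by $\vy_{i+1} = \var{\vy_i->left}$. By Invariant~\ref{inv-order} this sequence is strictly descending in $\prec$ and terminates at the sentinel (whose \var{left} pointer is null).

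Suppose for contradiction that the chain does not visit $\vw$. Then $\vw$ is not the sentinel, and since the chain descends from $\vy \succ \vw$ to the sentinel (which is $\prec \vw$), there must exist an index $i$ with $\vy_i \succ \vw$ and $\vy_{i+1} \prec \vw$. Let $C^+$ denote the configuration immediately after our CAS; then $\var{\vz->left} = \vw$ at $C^+$, so $\vw \leftarrow \vz$ holds at $C^+$. I would then invoke Lemma~\ref{no-crossovers} with parameters $(w,x,y,z) = (\vy_{i+1}, \vw, \vy_i, \vz)$: the required ordering $\vy_{i+1} \prec \vw \prec \vy_i \prec \vz$ holds because $\vy_i \preceq \vy \prec \vz$, yet $\vy_{i+1} \leftarrow \vy_i$ holds at $C$ while $\vw \leftarrow \vz$ holds at $C^+$, contradicting the lemma. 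Hence the chain must visit $\vw$, giving $\vw \lreach \vy$ in $C$.

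The key obstacle, as I see it, is recognising that Lemma~\ref{no-crossovers} may be applied across two different configurations of the same execution. The pointer $\vw \leftarrow \vz$ needed for the contradiction is not yet present in $C$; it is created by the very CAS we are analysing. Once this ``temporal bridge'' between $C$ and $C^+$ is noticed, the rest is a direct application of the already-established invariants, with no need to reason about the (potentially many) concurrent \remove{}s that may be modifying \var{left} pointers between the read at line~\ref{right-left} and the CAS at line~\ref{CAS-left-DLL}.
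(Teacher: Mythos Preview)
Your proof is correct and follows essentially the same approach as the paper. The paper phrases the crossing point as ``the minimum node $\vx$ (w.r.t.\ $\prec$) such that $\vx \succeq \vw$ and $\vx \lreach \vy$ in $C$'' together with $\var{v}=\var{\vx->left}$, whereas you walk the chain $\vy_0,\vy_1,\ldots$ explicitly and locate the index $i$ where it skips over $\vw$; in both cases the pair $(\vy_{i+1},\vy_i)$ plays the role of $(\var{v},\vx)$, and the contradiction via Lemma~\ref{no-crossovers} using the link $\vw\leftarrow\vz$ created by the CAS is identical.
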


\begin{restatable}{lemma}{resRemoveSafe}
\label{remove-safe}
When a node is removed from $AL$, it is~marked.
\end{restatable}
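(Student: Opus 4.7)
}

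The plan is to examine the only two kinds of steps that can modify $AL$, and show that in each case every node that leaves $AL$ must already be marked. The only steps that alter a pointer read by the traversal defining $AL$ are the successful CAS on \var{head} at line~\ref{CAS-head} and the successful CAS on a \var{left} pointer at line~\ref{CAS-left-DLL}; other writes to \var{right} pointers or the \var{mark} field do not affect $AL$.

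First I would handle the head CAS. When line~\ref{CAS-head} changes \var{head} from \vx\ to \vy\ inside a \tryAppend$(\vx,\vy)$, line~\ref{init-left} has already set \var{\vy->left = \vx}. Hence, using Invariant~\ref{inv-order}(\ref{left-order}) to know that the chain from \vx\ is unchanged, the new $AL$ is exactly the old $AL$ with \vy\ prepended; no node is removed.

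Next I would handle the interesting case: a successful \var{CAS(\&\vz->left, \vy, \vw)} at line~\ref{CAS-left-DLL}. If $\vw = \vy$ the CAS does not change the value and $AL$ is unaffected, so assume $\vw \neq \vy$. By Lemma~\ref{remove-only}, in the preceding configuration we have $\vw \lreach \vy$, so the left-chain starting at \vz\ before the step has the form $\vz \leftarrow \vy \leftarrow \cdots \leftarrow \vw \leftarrow \cdots$. Since the CAS only modifies \var{\vz->left}, the left-chain starting at \vw\ is the same before and after. Therefore the set of nodes that disappear from $AL$ is a subset of the nodes that lie strictly between \vz\ and \vw\ on the old left-chain, which in particular contains \vy. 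Each such node \vx' satisfies $\vw \prec \vx' \prec \vz$ by Invariant~\ref{inv-order}(\ref{left-order}) applied repeatedly along the old chain.

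Finally I would invoke Invariant~\ref{inv-order}(\ref{remove-order}) on the pending \remove\ that performed the CAS: its local variables satisfy $\var{left} = \vw$ and $\var{right} = \vz$ at that moment (these are the arguments of the CAS), and the invariant states that every node strictly between \var{left} and \var{right} in the $\prec$ order is marked. This covers every candidate node for removal from $AL$, proving the claim. The main obstacle I anticipate is the bookkeeping of the previous paragraph — being careful that ``removed from $AL$'' means removed from the sequence reachable from \var{head}, not merely bypassed by one pointer change, and that the list beyond \vw\ is untouched by the step; everything else follows smoothly from the invariants and lemmas already established.
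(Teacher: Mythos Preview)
Your proposal is correct and follows essentially the same approach as the paper: identify that only a successful CAS at line~\ref{CAS-left-DLL} can remove nodes from $AL$, use Lemma~\ref{remove-only} and Invariant~\ref{inv-order}(\ref{left-order}) to bound the removed nodes by $\vw \prec \vx' \prec \vz$, and conclude via Invariant~\ref{inv-order}(\ref{remove-order}) since $\vw=\var{left}$ and $\vz=\var{right}$. Your version is in fact slightly more careful than the paper's, which glosses over the \var{head} CAS case and contains a small slip in its variable naming.
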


\begin{restatable}{lemma}{resRemoveSucceeds}
\label{remove-succeeds}
When \var{\remove(x)}  terminates, \vx\ is not in $AL$.
\end{restatable}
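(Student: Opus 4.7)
My plan is to focus on the moment when \remove(x) performs its final successful CAS at line~\ref{CAS-left-DLL}, show that \vx\ is not in $AL$ at that instant, and then argue \vx\ cannot re-enter $AL$ before \remove(x) returns. The loop structure of \remove\ ensures that the call can exit only after this CAS has succeeded (and also the CAS at line~\ref{CAS-right}), swinging $\var{right->left}$ from $\var{rightLeft}$ to $\var{left}$. By Invariant~\ref{inv-order}(\ref{remove-order}) applied to this call, $\var{left} \prec \vx \prec \var{right}$, and by Invariant~\ref{inv-order}(\ref{right-order}) \var{right} is a node that was added to the list, so $\var{head} \succeq \var{right}$ in this configuration.

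Next I would suppose for contradiction that \vx\ remains in $AL$ immediately after this CAS, witnessed by a path $\var{head} = n_0, n_1, \ldots, n_k = \vx$ with $n_{i+1} = n_i\var{->left}$ for each $i$. By Invariant~\ref{inv-order}(\ref{left-order}), the $n_i$ are strictly $\prec$-decreasing. Since $n_0 \succeq \var{right}$ and $n_k \prec \var{right}$, I can choose the smallest $j$ with $n_j \preceq \var{right}$ and split into two cases. If $n_j = \var{right}$, then $n_{j+1} = \var{right->left} = \var{left} \prec \vx$, forcing $n_k \prec \vx$, a contradiction. Otherwise $j > 0$ and $\vx \preceq n_j \prec \var{right} \prec n_{j-1}$; combined with $\var{left} \prec \vx$, this yields $\var{left} \prec n_j \prec \var{right} \prec n_{j-1}$ with $\var{left} \leftarrow \var{right}$ (just established by the CAS) and $n_j \leftarrow n_{j-1}$ (from the path), directly violating Lemma~\ref{no-crossovers}.

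Finally, between this CAS and \remove(x)'s return, the only possible modifications to $AL$ come from concurrent \tryAppend\ calls, which can only append brand-new nodes (and by the preconditions of \remove, \vx\ is never such a node), and from CASes at line~\ref{CAS-left-DLL} in other \remove\ calls, which by Lemma~\ref{remove-only} can only shrink $AL$. Hence \vx\ cannot reappear in $AL$ before \remove(x) terminates. The main obstacle is setting up the case analysis correctly: in particular, establishing $\var{head} \succeq \var{right}$ and choosing the index $j$ so that Lemma~\ref{no-crossovers} applies with the right four-node ordering. Once these are in place, the contradiction is immediate and the rest follows directly from the invariants and lemmas already stated.
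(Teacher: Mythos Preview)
Your proof is correct and follows essentially the same approach as the paper: both pick the successful CAS at line~\ref{CAS-left-DLL}, assume \vx\ is still in $AL$, locate the point on the path from \var{head} where it drops below \var{right}, and derive a contradiction from Lemma~\ref{no-crossovers} (with the case $n_j = \var{right}$ handled separately, just as the paper handles $v = z$). Your explicit argument that \vx\ cannot re-enter $AL$ between the CAS and termination is a detail the paper defers to the surrounding linearization discussion; one minor correction is that the fact that \var{right} has been added to the list comes from Invariant~\ref{inv-order}(\ref{remove-order}) rather than~(\ref{right-order}).
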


These lemmas can be used to verify that the following linearization is correct.
We linearize a successful \tryAppend\ when it updates the \var{head} pointer.
We linearize a \var{\remove(x)} when a CAS at line \ref{CAS-left-DLL} causes
\vx\ to be removed from $AL$.
If several nodes are removed by a single CAS, we order them in decreasing order by $\prec$.
Thus, at all times, the abstract list $AL$ agrees with the update operations linearized so far.
The following lemma, which requires a technical proof, allows us to linearize \search\ operations.

\begin{restatable}{lemma}{resSearchInv}
\label{search-inv}
Let $C$ be a configuration and \vx\ be the local variable of a pending \var{\search(k)} in $C$.
There was a time between the invocation of the \search\ and $C$ when \vx\ was in $AL$ and either \vx\ was the first node in $AL$ or its predecessor in $AL$ had a
key greater than \var{k}.
\end{restatable}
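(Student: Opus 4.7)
The plan is to prove Lemma \ref{search-inv} by induction on the number of iterations of the \var{while} loop that \search\ has completed in configuration $C$. For the base case of zero iterations, $\vx$ was read from \var{head} at some time $t_0 \in [\text{invocation}, C]$; at $t_0$, $\vx$ was the first node of $AL$ (since $\vx = \var{head}$ at that instant), matching the first disjunct of the lemma.

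For the inductive step, assume the invariant holds for the value $x_{\mathrm{old}}$ of $\vx$ at the start of the current iteration via some time $t_i$. The iteration verifies that $x_{\mathrm{old}}$'s key exceeds $k$ and then reads $x_{\mathrm{new}} = x_{\mathrm{old}}\mbox{->}\var{left}$ at some time $t' \leq C$. Let $t_L \leq t'$ denote the most recent time at which $x_{\mathrm{old}}\mbox{->}\var{left}$ was installed as $x_{\mathrm{new}}$, either by initialization inside the \tryAppend\ that appended $x_{\mathrm{old}}$, or by a successful CAS on line \ref{CAS-left-DLL}. By Lemma \ref{CAS-advances} and the monotonicity of left pointers in the $\prec$ order, $x_{\mathrm{old}}\mbox{->}\var{left} = x_{\mathrm{new}}$ throughout $[t_L, t']$.

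I would then split into two cases. If $t_L \leq t_i$, then at time $t_i$ both $x_{\mathrm{old}} \in AL$ (by IH) and $x_{\mathrm{old}}\mbox{->}\var{left} = x_{\mathrm{new}}$ hold, so $x_{\mathrm{new}}$ is the successor of $x_{\mathrm{old}}$ in $AL$ at $t_i$; since $x_{\mathrm{old}}$'s key is greater than $k$, this supplies the required witness $t'' = t_i$. If instead $t_L > t_i$, the CAS at $t_L$ lies inside some $\var{remove}(w)$, and by Invariant \ref{inv-order} part \ref{remove-order} its locals satisfy $x_{\mathrm{new}} \prec w \prec x_{\mathrm{old}}$ with every strictly intermediate node marked. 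I would argue that $x_{\mathrm{old}} \in AL$ at $t_L$: the validation on line \ref{remove-test} ensures $x_{\mathrm{old}}$ is unmarked immediately before the CAS, and the contrapositive of Lemma \ref{remove-safe}, combined with the fact that $x_{\mathrm{old}}$ was in $AL$ at $t_i$, shows that $x_{\mathrm{old}}$ is still in $AL$ right before the CAS; the CAS then places $x_{\mathrm{new}}$ into $AL$ as successor of $x_{\mathrm{old}}$, and we take $t'' = t_L$.

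The main obstacle is the delicate subcase where $x_{\mathrm{old}}$ becomes marked in the short interval between the validation and the CAS itself, because then $x_{\mathrm{old}}$ could conceivably be ejected from $AL$ just before $t_L$. To handle this I would shift attention to the first CAS in $(t_i, t_L]$ that actually removes $x_{\mathrm{old}}$ from $AL$: this CAS replaces some $z\mbox{->}\var{left} = x_{\mathrm{old}}$ by a node $w' \prec x_{\mathrm{old}}$, where $z$ is the $AL$-predecessor of $x_{\mathrm{old}}$ at that moment, so $z$'s key is at least $x_{\mathrm{old}}$'s key (keys are non-increasing along $AL$) and therefore exceeds $k$. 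If $w' = x_{\mathrm{new}}$, we are done, using that CAS time as $t''$ with $z$ as the predecessor. Otherwise one iterates the same analysis along the chain of successive left-pointer CASes leading up to $t_L$, using Lemma \ref{no-crossovers} to rule out that concurrent removes undo one another's work and Invariant \ref{inv-order} to constrain the possible configurations. This interleaving case analysis is the chief technical difficulty; the remainder of the argument follows directly from the monotonicity and invariants already established.
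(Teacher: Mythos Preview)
Your base case and Case 1 ($t_L \leq t_i$) are fine, but Case 2 has a genuine gap that you yourself flag without closing. The statement ``the validation on line \ref{remove-test} ensures $x_{\mathrm{old}}$ is unmarked immediately before the CAS'' is false: the validation only guarantees $x_{\mathrm{old}}$ was unmarked at the moment line \ref{remove-test} executed, and between that check and the CAS at $t_L$, $x_{\mathrm{old}}$ can be marked and then spliced out of $AL$ by a different \remove. So the contrapositive of Lemma \ref{remove-safe} does not give you $x_{\mathrm{old}}\in AL$ just before $t_L$. Your proposed repair---shift to the first CAS $s^*$ that removes $x_{\mathrm{old}}$ and then ``iterate the same analysis along the chain of successive left-pointer CASes leading up to $t_L$''---is not an argument: you never establish that $x_{\mathrm{new}}$ is in $AL$ at $s^*$, you never say what invariant the iteration maintains, and you give no termination measure. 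At $s^*$ the pointer $x_{\mathrm{old}}\mbox{\texttt{->}}\var{left}$ has not yet been set to $x_{\mathrm{new}}$ (since $s^*<t_L$), so there is no direct reason $x_{\mathrm{new}}$ is even reachable from $x_{\mathrm{old}}$ at that instant.

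The paper's proof avoids this morass by a different case split: instead of comparing $t_L$ to $t_i$, it asks whether $x_{\mathrm{old}}$ is still in $AL$ at the configuration $C$ \emph{after} the step. If yes, then $x_{\mathrm{new}}\leftarrow x_{\mathrm{old}}$ in $C$ gives the witness immediately. If no, it takes the CAS step $s$ that removed $x_{\mathrm{old}}$ from $AL$ and uses the key observation you are missing: by Lemma \ref{remove-only} applied backward in time, the relation $x_{\mathrm{new}}\lreach x_{\mathrm{old}}$ holds in \emph{every} configuration between the addition of $x_{\mathrm{old}}$ and $C$ (each CAS on a \var{left} pointer only inserts nodes into such a left-path, never deletes them). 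Hence $x_{\mathrm{new}}$ is in $AL$ in the configuration just before $s$, with no iteration needed. The remaining work is a clean two-way sub-case on whether $s$ also removes $x_{\mathrm{new}}$; if it does, the batch-linearization convention (removes ordered by decreasing $\prec$) supplies the predecessor, and if it does not, Lemma \ref{no-crossovers} forces the predecessor of $x_{\mathrm{new}}$ in $AL$ after $s$ to satisfy $\succ x_{\mathrm{old}}$, hence to have key at least $x_{\mathrm{old}}$'s key $>k$. That backward-reachability observation is the missing idea in your proposal.
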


Thus, we can linearize \var{\search(k)} when the returned node \vx\ is the first  in $AL$ with key at most~\var{k}.

A \tryAppend\ takes $O(1)$ steps.  A \search\ is wait-free, and \remove\ is lock-free.
A \remove(\vx)\ operation takes $O(c)$ steps, where $c$ is the length of the chain of adjacent
nodes containing \vx\ that are marked before the \remove\ ends.  Here,  \vy\ and \vz\ are \emph{adjacent} if, during
the \remove(\vx),  $\vy\leftarrow \vz$ or $\vy \rightarrow \vz$.

 
\section{Singly-Linked List Compaction}
\label{sec:sll}

\newcommand{\vt}{\var{t}}
\newcommand{\vA}{\var{A}}
\newcommand{\ts}{\var{ts}}
\newcommand{\vh}{\var{h}}
\newcommand{\vi}{\var{i}}
\newcommand{\vj}{\var{j}}
\newcommand{\vk}{\var{k}}
\newcommand{\vu}{\var{u}}
\newcommand{\Announce}{\var{Announce}}
\newcommand{\head}{\var{head}}
\newcommand{\AnnScan}{AnnScan}
\newcommand{\GlobalAnnScan}{\var{GlobalAnnScan}}

\renewcommand{\figurename}{Algorithm}

\begin{figure}
	\begin{lstlisting}[linewidth=.99\columnwidth, numbers=left]
class Node { Node* left; int ts; Value val;}
class AnnScan { List<int> A; int t; }
AnnScan* scanAnnounce() {													@\tallstrut@
	AnnScan* newScan = new AnnScan;
	repeat twice {
		AnnScan* old = GlobalAnnScan;									@\label{read-globalannscan}@
		newScan.t = GlobalTimeStamp;
		read Announce[1..P] one by one, and sort into newScan.A;
		if CAS(&GlobalAnnScan, oldScan, newScan) return newScan; }		@\label{CAS-globalannscan}@
	return GlobalAnnScan }
void compact(List<int> A, int t, Node* h) {	// A is sorted	@\tallstrut@
	A.appendFront(-1); // padding
	int i = index of last element of A;											@\label{init-i}@
	Node* cur = h;													@\label{init-cur}@
	while(cur != sentinel) {											@\label{test-sentinel}@
		Node* next = cur->left;											@\label{init-next}@
		// skip nodes whose timestamps exceed threshold t
		if(cur->ts > t) {											@\label{cur-too-new}@
			cur = next;													@\label{advance-cur-1}@
		} else {
			while(A[i] >= cur->ts) i--;							@\label{advance-i}@
			if(A[i] >= next->ts) { // next is needed			@\label{test-next-needed}@
				cur = next;												@\label{advance-cur-2}@
			} else { // next is not needed
				Node* newNext = next->left;								@\label{init-new-next}@
				while(A[i] < newNext->ts) {  					@\label{test-new-next-obs}@
					newNext = newNext->left; }							@\label{advance-new-next}@
				while(!CAS(&(cur->left), next, newNext)) {				@\label{CAS-left-SLL}@
					next = cur->left;									@\label{reread-left}@
					if(next->ts <= newNext->ts) break; }					@\label{test-overshoot}@
				cur = cur->left; } } } }									@\label{advance-cur-3}@
bool tryAppend(Node* x, Node* y) {							@\tallstrut@
	y->left = x;
	return CAS(&head, x, y); }											@\label{CAS-head-SLL}@
Value @\search@(int k) {									@\tallstrut@
	Node* x = head;				@\label{init-x}@
	while(x->ts > k) {			@\label{test-x}@		
		x = x->left; }			@\label{advance-x}@	
	return x->val; }
	\end{lstlisting}
	\caption{Singly linked list (\SL{}) compaction routine.\here{Should make code a little more consistent in style to Algorithm \ref{alg:doubly-linked-list}--have a singly-linked list class etc.}
}
	\label{alg:singly-linked-list-compact}
\end{figure}

\renewcommand{\figurename}{Figure}

We now describe our simple singly-linked list \SL\ designed to store version lists
(Algorithm \ref{alg:singly-linked-list-compact}).
Each list node stores a version and has an associated timestamp \ts\ and
a pointer \var{left}
to a preceding version. 
A \var{head} pointer stores the most recently
appended node.
Like \DL, \SL\ provides a \tryAppend\ and \search\ operation.
Nodes are appended to the list with timestamps in non-decreasing order, so that the list is always sorted by timestamp.
Instead of a \remove\ operation that splices out individual nodes, \SL\ provides a \compact\ operation that
traverses the whole list splicing out obsolete nodes.
  We assume the list initially contains a sentinel
node with timestamp $-\infty$, which always remains at the list's tail.

The \compact\ operation is given a sorted list \vA\ of timestamps announced by \rtxs,
a threshold timestamp \vt\ that was read from the global timestamp counter
and a node \vh\ read from the \var{head} pointer of a version list, which specifies
where to begin traversing the list to compact.
We describe below exactly how these arguments are obtained so that
any \rtx\ that is concurrent with the \var{compact} or begins after the \var{compact}
will have a timestamp that is either in \vA\ (because the \rtx's announcement is in \vA)
or greater than or equal to \vt\ (because the \rtx\ grabs its timestamp after \vt\ has been read from the global counter).
The following definition describes the nodes that \var{compact} should retain.
A  node $\vx$ is \emph{needed w.r.t. to $\vA$ and $\vt$} (abbreviated $needed(\vA,\vt)$)~if (1) $\vx.\ts>\vt$, or (2) $\vx$ is the last appended node with timestamp at most $\vt$, or (3) for some list element $\vA[i]$, $\vx$ is the last appended node whose timestamp is less than or equal to $\vA[i]$.


Since both the version list and \vA\ are sorted by timestamp, \var{compact} makes a pass across
both the version list and \var{A} using an algorithm similar to merging two sorted arrays
to determine which nodes of the list can be removed.  When a removable node is found,
it is spliced out of the list by a CAS instruction on its predecessor's \var{next} field.
If multiple consecutive nodes are all found to be removable, a single CAS attempts to splice them all
out of the list at once.

In more detail,
the variables \vi\ and \var{cur} of \compact\ are used as pointers into \var{A} and the version list, respectively,
starting at the highest reserved timestamp and the most recent version.
Lines \ref{cur-too-new}--\ref{advance-cur-1} skip past nodes in the version list whose timestamps
are greater than \vt.
Line \ref{advance-i} finds the appropriate element of \var{A} to compare to
the timestamp of \var{cur}.
Lines \ref{test-next-needed}--\ref{advance-cur-2} skip past a node if it is needed
for the timestamp \var{A[i]}.
Lines \ref{init-new-next}--\ref{test-overshoot} handle the removal of \var{next} if it is found
to be unnecessary.
The \compact\ routine first finds the first needed node after \var{next} and stores a pointer
to it in \var{newNext} (lines \ref{init-new-next}--\ref{advance-new-next}).
It then tries to splice out a block of consecutive nodes, including \var{next}, by
changing \var{cur->left} from \var{next} to \var{newNext} at line \ref{CAS-left-SLL}.
If the CAS fails, it repeatedly tries to update 
\var{cur->left} to \var{newNext} until it succeeds or finds that \var{cur->left} points
to a node whose timestamp is less than or equal to \var{newNext}'s.
Once the block of nodes is removed, the \compact\ routine proceeds down the list (line \ref{advance-cur-3}).

We now discuss how to obtain the arguments for a \var{compact}.
\er{Each process performing a \rtx\ writes its timestamp into a global array \var{Announce}.}
If a process  simply copies the global timestamp into~\vt\ and
copies \Announce\ element by element into a local copy \vA\
before calling \compact(\vA,\vt,\vh), updates to the  list may behave strangely.
Because the copies are not made atomically, two concurrent \compact\ operations may 
disagree about the set of needed nodes:  for example, if there are 4 nodes $\vx_1,\vx_2,\vx_3,\vx_4$ in the list with timestamps
$1,2,3,4$, one \compact\ may think $\vx_1,\vx_3,\vx_4$ are needed while the other thinks $\vx_1,\vx_2,\vx_4$ are needed.  The two operations may get poised to splice out $\vx_2$ and $\vx_3$, respectively.  If they then perform their CAS steps to splice out the nodes,
node 2 will be removed by the CAS of the first operation and then become reachable again when the
CAS of the second operation occurs.
We wish to avoid this situation to maintain good worst-case space bounds.
A similar problem may occur if the copy \vh\ of the \var{head} pointer used as the starting point of a \compact\ is not obtained at the same time as \vA\ and \vt.

This problem could be avoided by taking an atomic snapshot of the global timestamp,
\var{Announce} and \var{head}, but that is too expensive.
It suffices that the intervals of time that different processes use to copy the global timestamp and  \Announce\  
are non-overlapping. This can be guaranteed using a more lightweight synchronization mechanism.
(See the \var{scanAnnounce} routine in Algorithm \ref{alg:singly-linked-list-compact}.)
Whenever a process $p$ reads the global timestamp and \var{Announce} into local copies \vt\ and \vA, it attempts to install an \AnnScan\ object, which stores the pair $(\vA,\vt)$, into
a shared variable \GlobalAnnScan\ using a CAS.  
If $p$'s CAS fails, $p$ tries again to make a local copy and install it in \GlobalAnnScan.
If $p$'s second CAS on \GlobalAnnScan\ also fails, 
\er{some other process must have stored an \AnnScan\ object there that it obtained after $p$'s first failed CAS, so
$p$} can simply use the \AnnScan\ object it finds
in \GlobalAnnScan, since it is guaranteed to be recent.
Before calling \var{\compact(\vA,\vt,\vh)}, a process gets a snapshot of just two
variables, \GlobalAnnScan\ (for the values of \vA\ and \vt) and \var{head} (for the value of \vh).


The lock-free \compact\ routine allows multiple processes to splice nodes out of the list
concurrently,  while other processes executing \var{search} traverse the list.
A careful proof of linearizability is in \er{\arxcam{Appendix \ref{SLL-appendix}}{the full version \cite{arxiv}}}. 
We linearize \var{tryAppend} operations when they update \var{head} and \var{search} operations when they read \var{head}.
We also show that \compact\ and \search\ are wait-free and that after a \var{compact(A,t,h)} routine terminates, all nodes reachable from the head of the list that were appended to the list before \vh\ are $needed(\vA,\vt)$.

\section{Experiments}
\label{sec:experiments}


We experimentally compared  throughput and memory usage of
state-of-the-art MVGC schemes and our new schemes.  
We tested \RTDLGC, \RTSLGC, \bbf,
\STEAMLF, and an epoch-based scheme \EBR.  We implemented all 
for two different multiversion data structures and used the benchmarking suite
from~\cite{WBBFRS21a}.  Only the MVGC code varies.
Our code is publicly available on GitHub
\footnote{https://github.com/cmuparlay/ppopp23-mvgc}.
%

\subsection{Setup}
\noindent 
{\bf Machine/Compiler.}
Our experiments ran on a 64-core Amazon Web Service c6i-metal instance with
2x Intel(R) Xeon(R) Platinum 8375C (32 cores, 2.9GHz and 108MB L3 cache),
and 256GB memory. 
Each core is 2-way hyperthreaded, giving 128 hyperthreads. 
The machine runs Ubuntu 22.04.1 LTS.
Our experiments were written in Java and we used OpenJDK 19.0.1 \hedit{with heap size set to 64GB.
We found that overall, the runtime overhead of Java's garbage collector is small, usually accounting for less than 5\% of the overall time in most workloads.
We also tried running on smaller heap sizes down to 10GB and did not see much difference in performance.}
For each data point in our graphs, we ran the processes for 25 seconds to warm up the JVM and then measured
5 runs, each of 5 seconds.  We report the average of those 5 runs.
Error bars indicate variance.
\hedit{To measure memory usage, at the end of each run, we measure the amount of reachable memory used by the multiversion data structure. This includes metadata maintained by the data structure’s MVGC scheme.}


\noindent 
{\bf Data Structures.} 
We apply MVGC schemes to CAS-based implementations 
of two concurrent data structures by replacing each CAS object with a
versioned CAS object, as in~\cite{WBBFRS21a}.   This adds 
support for linearizable \rtxs\ \hedit{on top of the usual insert, delete, and lookup operations}. 
%
Our experiments test a hash table and a chromatic
tree~\cite{BER14}, which
have quite different characteristics, as the experiments will
show.

The hash table is based on separate chaining.  Each chain is a sorted linked
list of elements.
The lists are immutable: to insert or delete an element, we use path copying to create a new copy of the
list, and change the hash table entry to point to the new copy using CAS.
For short chains, path copying yields a simple, efficient list
implementation that avoids the need for the mark bits on pointers that
are used in many lock-free lists.  
\hedit{In our experiments, we pick the size of the hash table so that the load factor is about $0.5$, so} chains are
very short on average.  An important property of this structure is that vCAS objects
never point indirectly to other vCAS objects---i.e., the head pointer of the
list is a vCAS object but none of the links are. 
The experiments run by B\"ottcher et al.~\cite{bottcher2019scalable} also had this property.
\hedit{The hashtable workload has similarities with many database workloads where the table entries are versioned but not the index~\cite{neumann2015fast,Wu17}.}

The chromatic tree is a  binary search tree 
with a lazy version of the red-black  balancing scheme.
Unlike the hash table, in a chromatic tree, or any concurrent 
tree, the vCAS objects \emph{do} point indirectly to others.  In particular,
a node points to its child, which can contain other vCAS objects.
This has a significant impact on some of the experimental results,
since not collecting a node soon enough means that its children will
not be collectible.  
\hedit{This represents an interesting aspect that does not show up in previous database MVGC work.}

Both data structures \hedit{store 32-bit integer keys and values} and support inserts, deletes, and \hedit{rtxs}.
On the trees we support range
queries---i.e., returning the keys within a range.  For hash tables we
support returning the value of multiple keys.  All operations are
linearizable.

\noindent  
{\bf Optimizations.}  
We use a backoff scheme in all implementations to reduce contention on  the global
timestamp counter. A process reads the counter
and if, after some variable delay, the counter has increased,
it simply uses the incremented value.  Otherwise, it updates the
counter using CAS.  When using version lists as in~\cite{WBBFRS21a},
reading an object requires first reading the location of
the head of its version list, and then reading the version the head points to.
As suggested in~\cite{WBBFRS21a}, we
avoid this level of indirection by \hedit{satisfying the \emph{recorded-once} property they define} and placing the timestamp and pointer
to the next older version in the object itself.

We tuned  parameters of the range tracking object~\cite{BBF+21},
including the number of lists to dequeue from the shared queue $Q$ in a flush  
and the number of elements per list.
When adding a list to $Q$, we omit already obsolete versions from it.

The authors of Steam \cite{bottcher2019scalable} suggest a heuristic 
of periodically scanning timestamps announced by \rtxs, instead of scanning every time a list is compacted.
This optimization does not preserve the $O(P)$ bound on the size of a version list \hedit{but it is crucial for performance}.
In \STEAMLF\, we \hedit{apply this optimization and} \y{we} scan announcements every 1ms.

\noindent
{\bf Workload.} In our experiments, we vary the following 
parameters:
(a) size of the multiversion data structure (denoted by $n$),
(b)~operation mix,
(c) size of \rtxs, (d) number of threads, and (e) the distribution from which keys are drawn.
In most experiments, we prefill each data structure with either $n=100K$ or $n=10M$ keys.
These sizes are chosen to illustrate performance when the data set fits and does not fit into the L3 cache.
We perform a mix of operations, consisting of inserts and deletes (done in equal numbers), as well as read transactions.
Keys for operations and the initial values of the data structure are drawn randomly from
the range $[1,2n]$, ensuring that the size of the data structure remains approximately $n$ throughout the experiment.


Our \rtxs\ 
search for all keys in the range $(a, a+s)$ where $a$ is drawn
uniformly at random from the range $[1, 2n-s]$ ($\mathit{s}$ is
the \rtx\ size).
The trees search by using the ordering while hash tables search by checking each individual key in the range.
For insert and delete operations, we draw keys from both the uniform distribution and Zipfian
distribution with parameter 0.99, which is the default in the YCSB
benchmark~\cite{YCSB}. 

By tuning the number of threads, we can vary the amount of contention, and also study the effects of oversubscription.
We see that these 
have a big impact on how MVGC algorithms perform since they lead to longer version lists.




\renewcommand{\arraystretch}{1.2}
\setlength{\tabcolsep}{3.3pt}

\begin{figure*}
	\begin{subfigure}{0.99\textwidth}
	\begin{tabular}{ l l }
	  \small \hspace{0.2cm} \textbf{Legend for Figures 4-8:}	& \includegraphics[width=0.55\linewidth,trim=0 0.5cm 0 0]{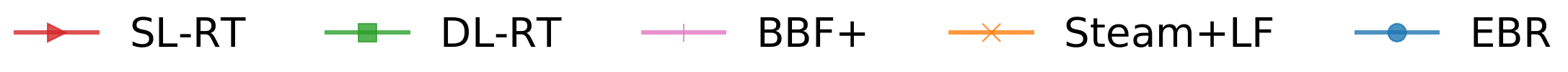}
	\end{tabular}
	\end{subfigure} \hfill
	\begin{subfigure}{0.74\textwidth}
		\includegraphics[width=0.99\linewidth,trim=0.2cm 0 0.25cm 1.5cm, clip]{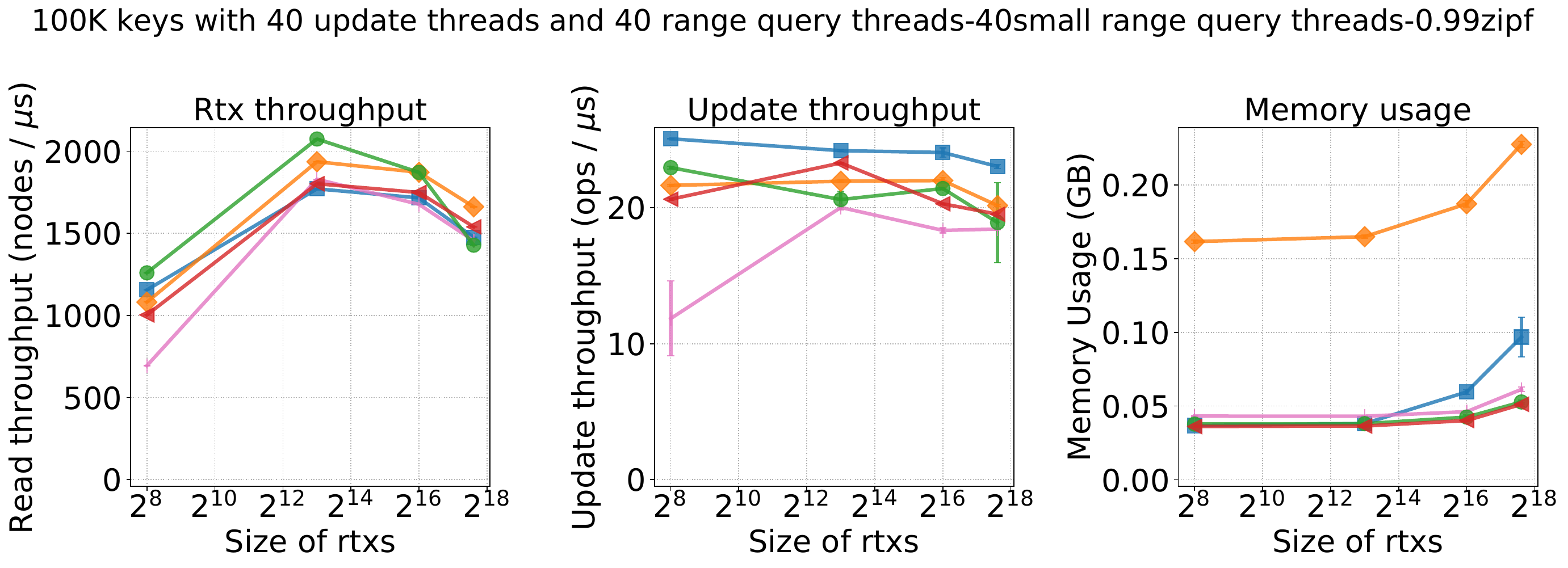}
	\end{subfigure} \hfill
	\begin{subfigure}{0.24\textwidth}
		\footnotesize
		\begin{tabular}{ |c|c|c|c|c| } 
			\multicolumn{5}{c}{\textbf{Average Version List Lengths}} \\
			\hline
			\begin{tabular}{@{}c@{}}\textbf{Size of} \vspace{-0.1cm} \\ \textbf{rtxs}\end{tabular}  & \textbf{$2^{8}$} & \textbf{$2^{13}$} & \textbf{$2^{16}$} & \textbf{$2^{18}$} \\ \hline \hline
			\textbf{SL-RT} & 1.03 & 1.05 & 1.15 & 1.39 \\ \hline
			\textbf{DL-RT} & 1.08 & 1.08 & 1.17 & 1.47 \\ \hline
			\textbf{BBF+} & 1.08 & 1.08  & 1.2 & 1.52 \\ \hline
			\textbf{Steam+LF} & 1.52 & 1.54 & 1.67 & 1.95 \\ \hline
			\textbf{EBR} & 1.05 & 1.05 & 1.28 & 2.32 \\ \hline
		\end{tabular}
	\end{subfigure}
	\caption{Tree with 100K keys, 40 update threads, 40 fixed-size rtx threads, 40 variable-size rtx threads.}
	\label{fig:tree-100K-undersub-zipf}
\end{figure*}

\begin{figure*}
	\begin{subfigure}{0.74\textwidth}
		\includegraphics[width=0.99\linewidth,trim=0.2cm 0 0cm 1.5cm, clip]{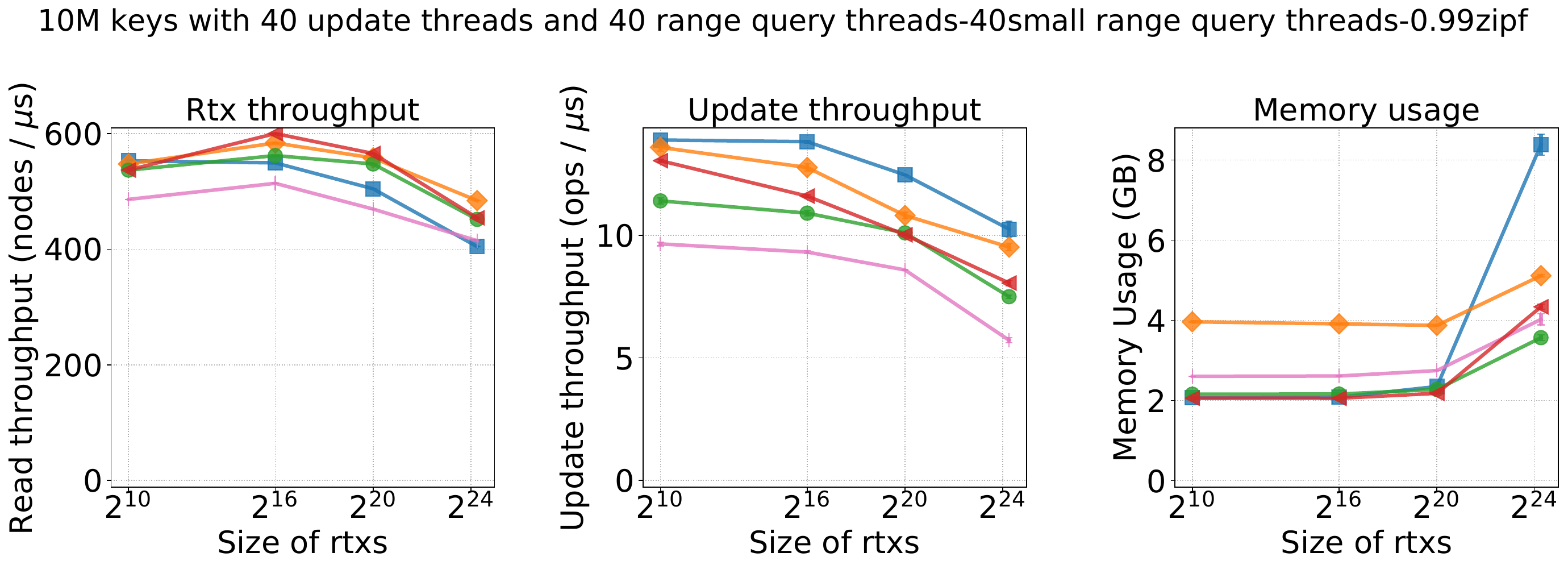}
	\end{subfigure} \hfill
	\begin{subfigure}{0.24\textwidth}
		\footnotesize
		\begin{tabular}{ |c|c|c|c|c| } 
			\multicolumn{5}{c}{\textbf{Average Version List Lengths}} \\
			\hline
			\begin{tabular}{@{}c@{}}\textbf{Size of} \vspace{-0.1cm} \\ \textbf{rtxs}\end{tabular}    & \textbf{$2^{10}$} & \textbf{$2^{16}$} & \textbf{$2^{20}$} & 20M \\ \hline \hline
			\textbf{SL-RT} & 1.0 & 1.0 & 1.04 & 1.42 \\ \hline
			\textbf{DL-RT} & 1.0 & 1.0 & 1.04 & 1.5 \\ \hline
			\textbf{BBF+} & 1.0 & 1.01  & 1.05 & 1.51 \\ \hline
			\textbf{Steam+LF} & 1.29 & 1.29 & 1.3 & 1.49 \\ \hline
			\textbf{EBR} & 1.0 & 1.01 & 1.11 & 2.32 \\ \hline
		\end{tabular}
	\end{subfigure}
	\caption{Tree with 10M keys, 40 update threads, 40 fixed-size rtx threads, 40 variable-size rtx threads.}
	\label{fig:tree-10M-zipf-undersub}
\end{figure*}

\begin{figure*}
	\begin{subfigure}{0.74\textwidth}
		\includegraphics[width=0.99\linewidth,trim=1.99cm 0 1.2cm 1.5cm, clip]{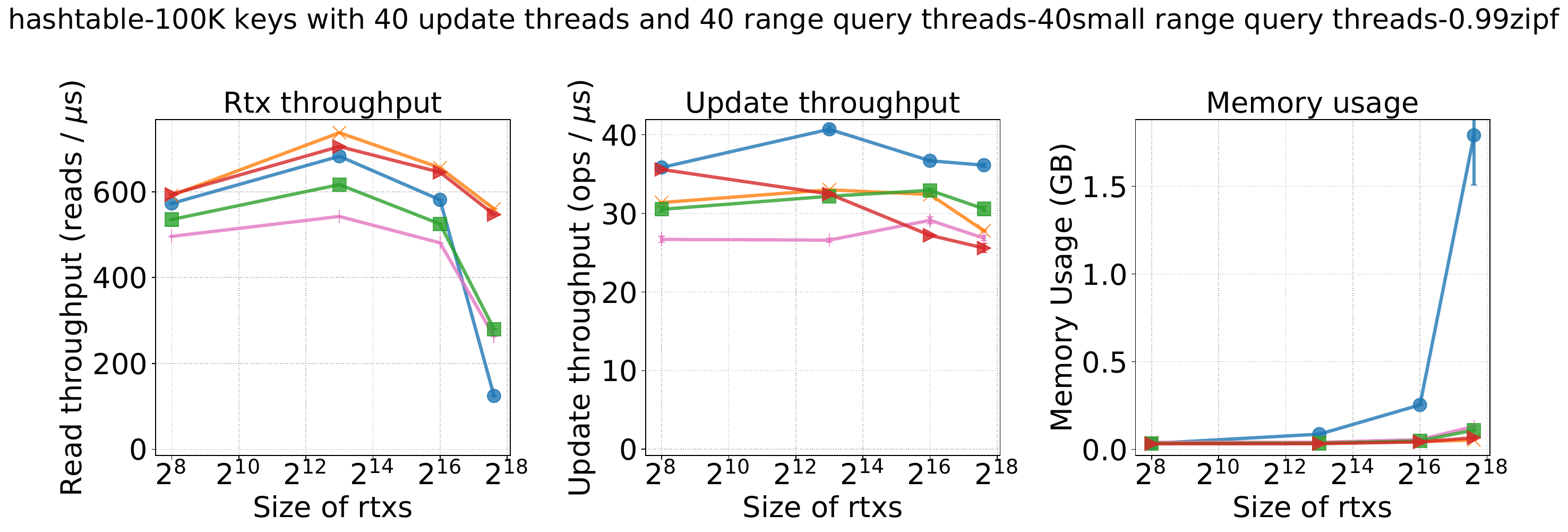}
	\end{subfigure} \hfill
	\begin{subfigure}{0.24\textwidth}
			\footnotesize
			\begin{tabular}{ |c|c|c|c|c| } 
				\multicolumn{5}{c}{\textbf{Average Version List Lengths}} \\
				\hline
				\begin{tabular}{@{}c@{}}\textbf{Size of} \vspace{-0.1cm} \\ \textbf{rtxs}\end{tabular}    & \textbf{$2^{8}$} & \textbf{$2^{13}$} & \textbf{$2^{16}$} & \textbf{$2^{18}$} \\ \hline \hline
				\textbf{SL-RT} & 1.02 & 1.06 & 1.34 & 2.14 \\ \hline
				\textbf{DL-RT} & 1.05 & 1.08 & 1.65 & 4.6 \\ \hline
				\textbf{BBF+} & 1.05 & 1.08  & 1.61 & 4.76 \\ \hline
				\textbf{Steam+LF} & 1.64 & 1.68 & 1.93 & 2.51 \\ \hline
				\textbf{EBR} & 1.01 & 1.09 & 2.26 & 73.63 \\ \hline
			\end{tabular}
	\end{subfigure}
	\caption{Hash table with 100K keys, 40 update threads, 40 fixed-size rtx threads, 40 variable-size rtx threads.}
	\label{fig:hashtable-100K-undersub-zipf}
\end{figure*}

\begin{figure*}
   \begin{subfigure}{0.49\textwidth}
		\centering
		\includegraphics[width=0.99\linewidth,trim=0.20cm 0 12.3cm 1.5cm, clip]{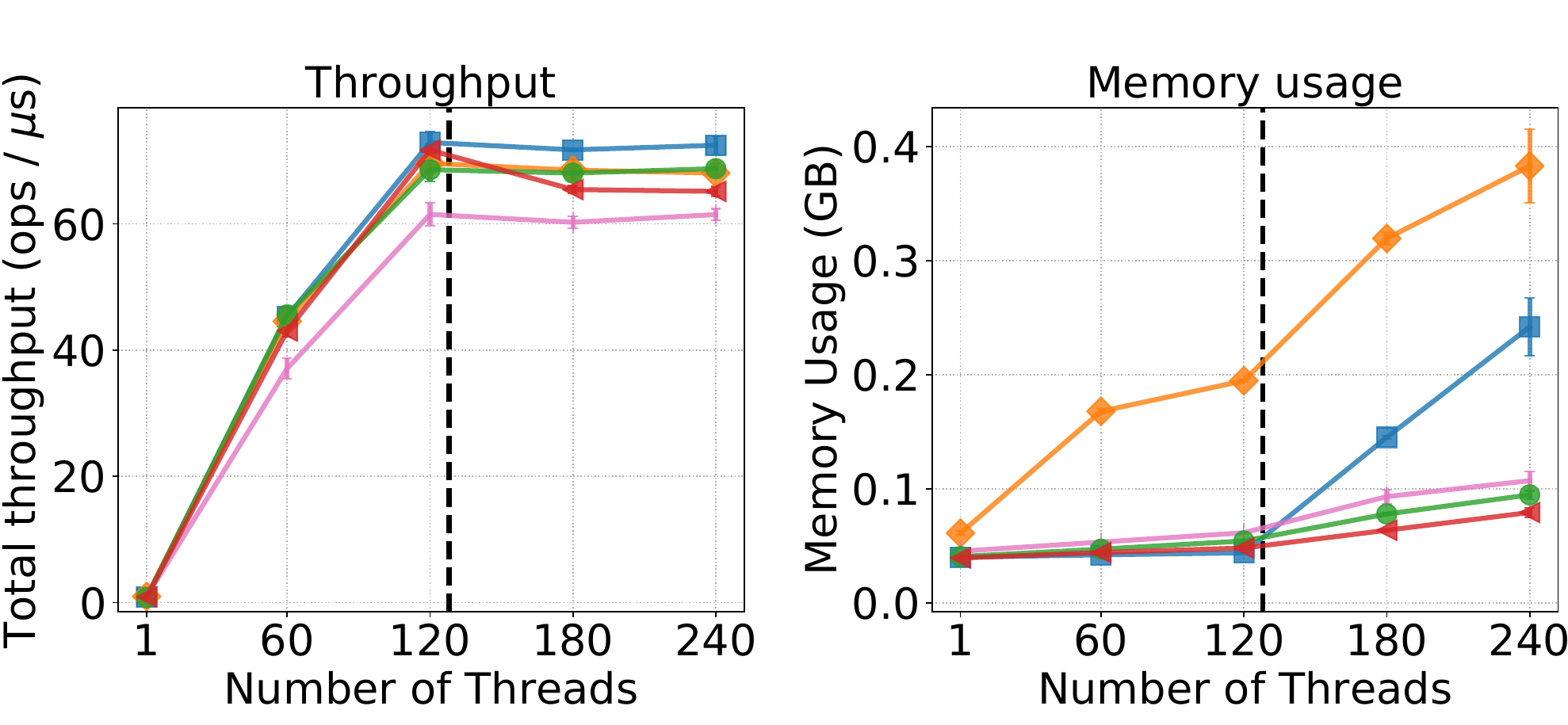}
		\caption{Tree with 100K keys}\label{fig:tree-100K-update-heavy}
	\end{subfigure} \hfill
	\begin{subfigure}{0.49\textwidth}
		\centering
		\includegraphics[width=0.99\linewidth,trim=0.20cm 0 12.3cm 1.5cm, clip]{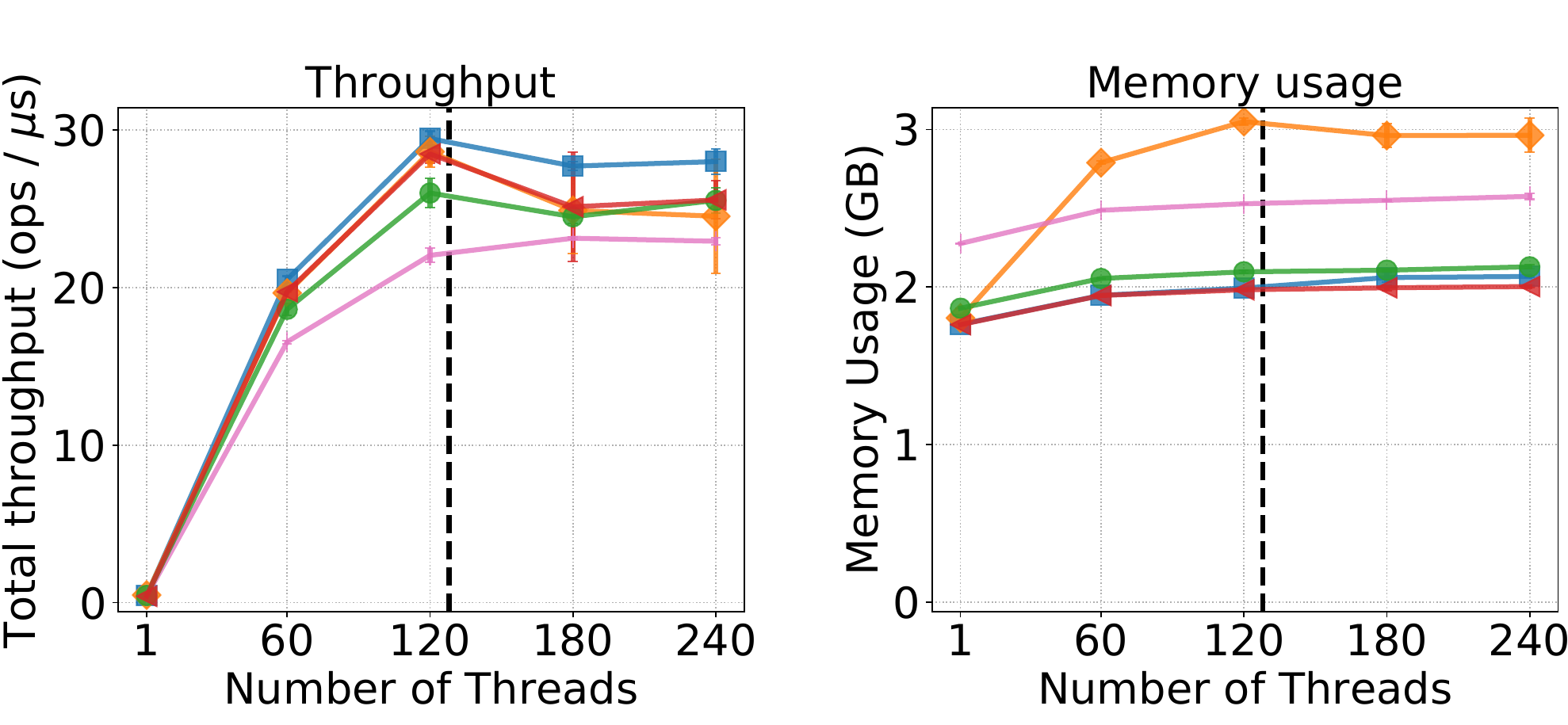}
		\caption{Tree with 10M keys}
		\label{fig:tree-10M-update-heavy}
	\end{subfigure}	
\caption{Workload with each thread performing 50\% updates, 49\% lookups, and 1\% read transactions of size 1024.}
\label{fig:tree-update-heavy}
\end{figure*}


\begin{figure}
	\includegraphics[width=0.99\linewidth,trim=0.25cm 0 12.45cm 2cm, clip]{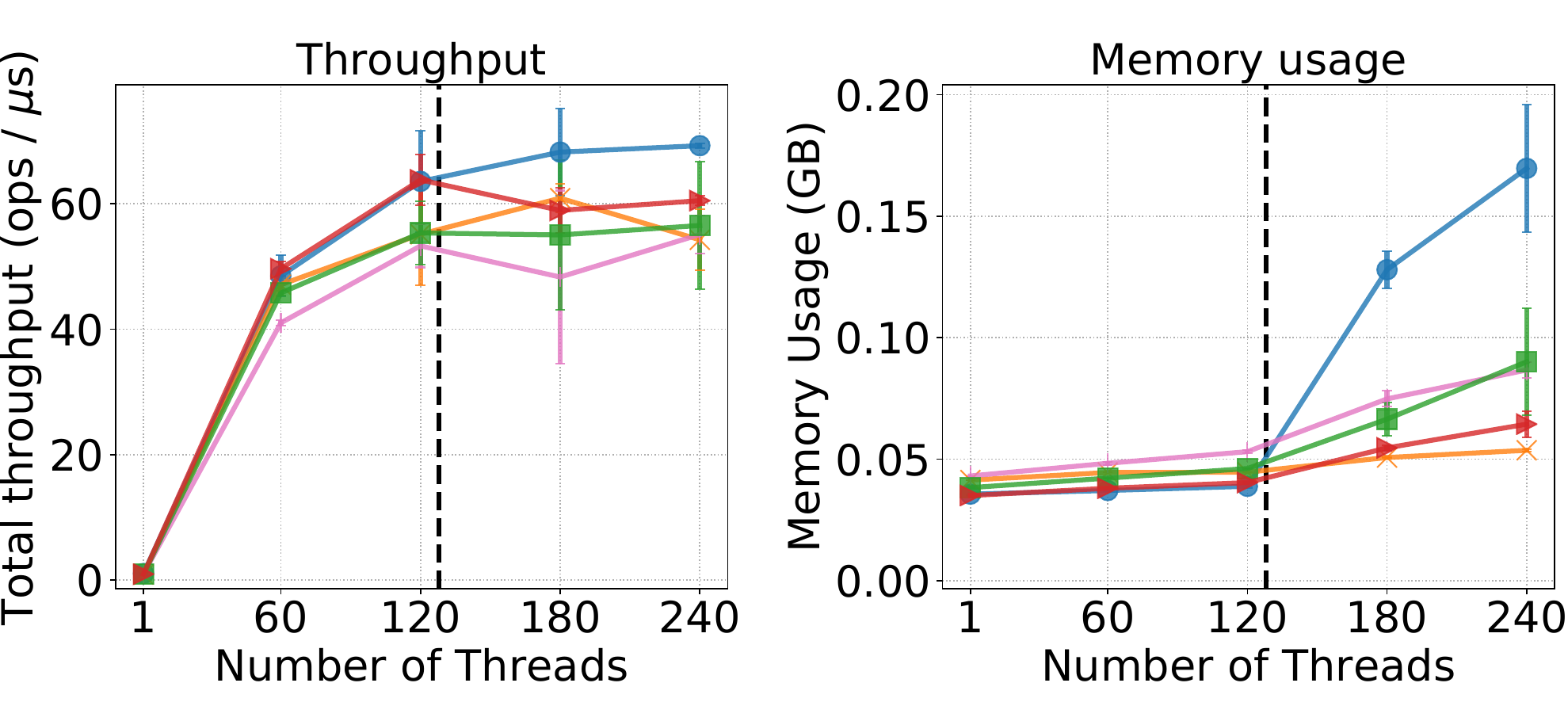}
	\caption{Hash table with 100K keys, each thread performs 50\% updates, 49\% lookups, and 1\% rtxs of size 1024.}
	\label{fig:hashtable-100K-update-heavy}
	\vspace{-0.1in}
\end{figure}
%



\subsection{Evaluation} 

Figures~\ref{fig:tree-100K-undersub-zipf}--\ref{fig:hashtable-100K-update-heavy} give
a cross section of our experimental results.  There was no qualitative difference between experiments on the uniform and Zipfian distributions, so we just include the Zipfian here.  More graphs are given in \er{\arxcam{Appendix \ref{additional-experiments}}{the full version \cite{arxiv}}}.
All experiments cover all five GC techniques.  Figures~\ref{fig:tree-100K-undersub-zipf}-\ref{fig:hashtable-100K-undersub-zipf} show \rtx\ and update throughput separately by placing \rtxs\ and updates on separate threads.  
This highlights the performance differences on updates, since \rtxs\ tend to be affected less.
\hedit{The workload in these figures consists of 40 update threads, 40 threads performing rtxs of size 16, and 40 threads performing variable-sized rtx. The throughputs of the variable-sized rtxs are shown in the leftmost graph of each figure.}
Figures~\ref{fig:tree-update-heavy} and~\ref{fig:hashtable-100K-update-heavy} \hedit{aggregate throughput by having each thread perform a mix of both types of operations}, which is more representative of a real workload.
\Eric{Is it worth saying here that same threads do all ops in contrast to previous sentence?  Isn't that what we mean when we say it is more representative of real workload?}\Hao{incorporated Eric's suggestion}
Based on the experiments, no one algorithm consistently outperforms the others; relative performance depends on various factors described below.  

\noindent {\bf Space.} We first consider space usage.  Perhaps the most interesting aspect of space is how poorly \STEAMLF{} performs on trees 
relative to hash tables.
This is explained by our previous discussion of vCAS objects indirectly pointing to other vCAS objects in the chromatic tree but not the hash table.  
\STEAMLF{} generally keeps around 1.5 to 2 versions per version list \hedit{(see} \y{tables of} \hedit{Figures~\ref{fig:tree-100K-undersub-zipf} and~\ref{fig:tree-10M-zipf-undersub})} because if there is an ongoing \rtx\ when a new version is added, the existing version is still needed by the \rtx\ and cannot be collected.  Soon after, when all current \rtxs\ finish, it becomes obsolete, but \STEAMLF{} will not  compact the list until that location is updated again. 
This is sometimes called the \emph{dusty corners problem}~\cite{larson2013hekaton} because corners (i.e., version lists) that are visited infrequently will be cleaned infrequently.
This delay only causes a space overhead factor of at most 2 when there is no indirection, but can be much worse with indirection.
In particular, the old version that is not collected can point to another node containing versioned objects, and that node and its versions will not be collected.  This effect can be seen for the chromatic trees in Figures \ref{fig:tree-100K-undersub-zipf} and \ref{fig:tree-100K-update-heavy}.  Even for small \rtxs, \STEAMLF{} performs much worse than the others.  
The experiments in the original Steam paper~\cite{bottcher2019scalable} used a \hedit{typical database implementation where only table entries are multiversioned, so versions do not indirectly point to one another and they did not see this issue as severely as we do.}

\EBR{} does not suffer from dusty corners since a version can be removed as soon as all active \rtxs\ at the time of an update have completed, although it has other problems discussed below.  The other three algorithms lack dusty corners since a version is removed as soon as the range tracker identifies  it, which can be much sooner than the next update.

The space usage of \EBR{} suffers badly for update-heavy workloads with large \rtxs\
and
with oversubscription.  
See, 
e.g., Figures ~\ref{fig:tree-10M-zipf-undersub},  ~\ref{fig:hashtable-100K-undersub-zipf}, and ~\ref{fig:tree-100K-update-heavy}.  
With large \rtxs\ the epochs become long,
and update-heavy workloads can create many versions
during an epoch that cannot be collected until a following epoch.  
Oversubscription can further prolong an epoch if a thread
in the epoch is delayed.  Indeed, this is exactly the problem
that removing intermediate versions is trying to solve.  
Thus, although \EBR\ does fine in many ``nice'' situations
it does very badly in adverse conditions.  
Figure ~\ref{fig:hashtable-100K-undersub-zipf}
indicates \EBR\ can use more than an order of magnitude
more memory than all the other MVGC schemes in such conditions.

A third observation is that \RTDLGC, \RTSLGC, and \bbf{} behave as expected in terms of space.  In particular, the theory indicates the memory is bounded by  three terms: 
the space needed for the current versions, a constant times the number of needed old versions,
and a function of the number of threads.
In Figure \ref{fig:tree-100K-update-heavy} for a small data structure of size 100K the third  term dominates as the number of threads increases for all three algorithms.
In Figure \ref{fig:tree-10M-update-heavy} for a larger data structure of size 10M the first two terms dominate making it hard to see the increase of space with more threads. 

In almost all cases \RTDLGC, \RTSLGC, and \bbf{} require less space than \EBR\ or \STEAMLF.  
  Importantly, although \EBR\ or \STEAMLF{} use slightly less memory than the other three in some cases, 
  the memory usage for the other three is much more predictable and never has any particularly bad cases.

Finally, we observe that \bbf{} almost always uses more memory than \RTDLGC{}, 
since in \bbf{} some nodes are obsolete but not collectable.
\RTDLGC{} almost always uses more memory than \RTSLGC\ because of the additional back pointers in the version lists, and also because \RTSLGC{} is able to preemptively splice out versions when traversing a list even  before they are returned by the range-tracker.
As discussed, the advantages of \RTSLGC{} come at the cost of losing time guarantees. 

\noindent {\bf Throughput.}  We now consider throughput.  Varying the GC algorithms affects \rtx\ and update throughput differently.  We first consider \rtx\ throughput.  The \rtx\ code is the same for all GC schemes, but there are at least two indirect effects.  

Firstly, a \er{scheme that compacts version lists less effectively tends} to have longer version lists, so \rtxs\ have to follow more version links.  This \er{explains} the very poor performance of very large \rtxs\ on the hash table with \EBR{} (Figure \ref{fig:hashtable-100K-undersub-zipf}). 


Secondly, \rtxs\ are running concurrently with  updates and, although they do not communicate with one another directly, there can be indirect effects, such as cache-line interference or faster updates increasing
the lengths of version lists.  It is hard to draw  general conclusions about these effects, but they
are  likely contributing to the minor variances.

The differences in update throughput among the algorithms is higher than for \rtxs.   This is to be expected since the code for updates differ significantly---e.g., some traverse whole version
lists, and some have to apply operations to the range-tracker data structure.  
Firstly, we observe that \bbf{} almost always does worse than all the others, often significantly.   This is the main motivation for designing the \RTDLGC{} and \RTSLGC{} algorithms.  There are a few data points where \bbf{} is faster than \RTSLGC{} but in these cases the \rtx\ throughput is
significantly faster for \RTSLGC{}.    
Secondly, \STEAMLF{} and \EBR{} mostly dominate the others, although the difference is usually small.    This is expected, given the extra cost of maintaining
the range-tracker data structure.
Thirdly, comparing our two new algorithms, \RTDLGC{} and \RTSLGC{}, in all graphs except Figure~\ref{fig:hashtable-100K-undersub-zipf}, the version lists were short enough that traversing from the beginning was faster than maintaining
back pointers and hence \RTSLGC{} performs better.
Even though the average version list length in Figure~\ref{fig:hashtable-100K-undersub-zipf} is small for \RTSLGC{}, the lists that get frequently updated tend to be long and \RTSLGC{} traverses an average of 17.9 version list nodes before reaching the one it wishes to splice out  



\noindent
{\bf Summary.} \hedit{Our evaluation shows that there is no definitively best MVGC scheme and we provide several insights into the types of workload and data structures each scheme is most suitable for.}
\er{While the performance of different schemes is often similar,
particularly in terms of throughput, there are a few important exceptions.}
\hedit{In particular, EBR uses up to 10 times more memory than the other schemes in workloads with long rtxs and Steam uses 8 times more memory in hierarchical data structures like trees. Our new schemes (SL-RT and DL-RT) are consistently faster than BBF+ and do not incur exceptionally high space usage in any workload. 
}

\begin{acks}
We thank the anonymous referees for their helpful comments.
This research was supported by NSERC; NSF grants CCF-1901381, CCF-1910030, CCF-1919223, and CCF-2119069; and
\y{the Hellenic Foundation
for Research and Innovation (HFRI) under the ``Second Call for HFRI Research
Projects to support Faculty members and researchers'' (project number: 3684).}
\end{acks}

\bibliographystyle{ACM-Reference-Format}
\bibliography{biblio}


\begin{thebibliography}{55}


\ifx \showCODEN    \undefined \def \showCODEN     #1{\unskip}     \fi
\ifx \showDOI      \undefined \def \showDOI       #1{#1}\fi
\ifx \showISBNx    \undefined \def \showISBNx     #1{\unskip}     \fi
\ifx \showISBNxiii \undefined \def \showISBNxiii  #1{\unskip}     \fi
\ifx \showISSN     \undefined \def \showISSN      #1{\unskip}     \fi
\ifx \showLCCN     \undefined \def \showLCCN      #1{\unskip}     \fi
\ifx \shownote     \undefined \def \shownote      #1{#1}          \fi
\ifx \showarticletitle \undefined \def \showarticletitle #1{#1}   \fi
\ifx \showURL      \undefined \def \showURL       {\relax}        \fi
\providecommand\bibfield[2]{#2}
\providecommand\bibinfo[2]{#2}
\providecommand\natexlab[1]{#1}
\providecommand\showeprint[2][]{arXiv:#2}

\bibitem[Alistarh et~al\mbox{.}(2014)]%
        {AE+14}
\bibfield{author}{\bibinfo{person}{Dan Alistarh}, \bibinfo{person}{Patrick
  Eugster}, \bibinfo{person}{Maurice Herlihy}, \bibinfo{person}{Alexander
  Matveev}, {and} \bibinfo{person}{Nir Shavit}.}
  \bibinfo{year}{2014}\natexlab{}.
\newblock \showarticletitle{{StackTrack}: An Automated Transactional Approach
  to Concurrent Memory Reclamation}. In \bibinfo{booktitle}{\emph{Proc.\ 9th
  European Conference on Computer Systems}}. \bibinfo{pages}{25:1--25:14}.
\newblock
\showISBNx{9781450327046}
\urldef\tempurl%
\url{https://doi.org/10.1145/2592798.2592808}
\showDOI{\tempurl}


\bibitem[Anderson et~al\mbox{.}(2021)]%
        {ABW21}
\bibfield{author}{\bibinfo{person}{Daniel Anderson}, \bibinfo{person}{Guy~E.
  Blelloch}, {and} \bibinfo{person}{Yuanhao Wei}.}
  \bibinfo{year}{2021}\natexlab{}.
\newblock \showarticletitle{Concurrent deferred reference counting with
  constant-time overhead}. In \bibinfo{booktitle}{\emph{Proc.\ 42nd ACM SIGPLAN
  International Conference on Programming Language Design and Implementation}}.
  \bibinfo{pages}{526--541}.
\newblock


\bibitem[Attiya and Hillel(2013)]%
        {AH13}
\bibfield{author}{\bibinfo{person}{Hagit Attiya} {and} \bibinfo{person}{Eshcar
  Hillel}.} \bibinfo{year}{2013}\natexlab{}.
\newblock \showarticletitle{Built-in coloring for highly-concurrent
  doubly-linked lists}.
\newblock \bibinfo{journal}{\emph{Theory of Computing Systems}}
  \bibinfo{volume}{52}, \bibinfo{number}{4} (\bibinfo{year}{2013}),
  \bibinfo{pages}{729--762}.
\newblock


\bibitem[Basin et~al\mbox{.}(2020)]%
        {BBB+20}
\bibfield{author}{\bibinfo{person}{Dmitry Basin}, \bibinfo{person}{Edward
  Bortnikov}, \bibinfo{person}{Anastasia Braginsky}, \bibinfo{person}{Guy
  Golan-Gueta}, \bibinfo{person}{Eshcar Hillel}, \bibinfo{person}{Idit Keidar},
  {and} \bibinfo{person}{Moshe Sulamy}.} \bibinfo{year}{2020}\natexlab{}.
\newblock \showarticletitle{{KiWi}: A Key-Value Map for Scalable Real-Time
  Analytics}.
\newblock \bibinfo{journal}{\emph{ACM Trans. Parallel Comput.}}
  \bibinfo{volume}{7}, \bibinfo{number}{3}, Article \bibinfo{articleno}{16}
  (\bibinfo{date}{June} \bibinfo{year}{2020}), \bibinfo{numpages}{28}~pages.
\newblock
\urldef\tempurl%
\url{https://doi.org/10.1145/3399718}
\showDOI{\tempurl}


\bibitem[Ben-David et~al\mbox{.}(2021)]%
        {BBF+21}
\bibfield{author}{\bibinfo{person}{Naama Ben-David}, \bibinfo{person}{Guy~E.
  Blelloch}, \bibinfo{person}{Panagiota Fatourou}, \bibinfo{person}{Eric
  Ruppert}, \bibinfo{person}{Yihan Sun}, {and} \bibinfo{person}{Yuanhao Wei}.}
  \bibinfo{year}{2021}\natexlab{}.
\newblock \showarticletitle{Space and Time Bounded Multiversion Garbage
  Collection}. In \bibinfo{booktitle}{\emph{Proc.\ 35th International Symposium
  on Distributed Computing}}. \bibinfo{pages}{12:1--12:20}.
\newblock
\newblock
\shownote{A full version is available from https://arxiv.org/abs/2108.02775}.


\bibitem[Bernstein and Goodman(1983)]%
        {BG83}
\bibfield{author}{\bibinfo{person}{Philip~A. Bernstein} {and}
  \bibinfo{person}{Nathan Goodman}.} \bibinfo{year}{1983}\natexlab{}.
\newblock \showarticletitle{Multiversion Concurrency Control--Theory and
  Algorithms}.
\newblock \bibinfo{journal}{\emph{ACM Trans. Database Syst.}}
  \bibinfo{volume}{8}, \bibinfo{number}{4} (\bibinfo{date}{Dec.}
  \bibinfo{year}{1983}), \bibinfo{pages}{465--483}.
\newblock
\showISSN{0362-5915}
\urldef\tempurl%
\url{https://doi.org/10.1145/319996.319998}
\showDOI{\tempurl}


\bibitem[B{\"o}ttcher et~al\mbox{.}(2019)]%
        {bottcher2019scalable}
\bibfield{author}{\bibinfo{person}{Jan B{\"o}ttcher}, \bibinfo{person}{Viktor
  Leis}, \bibinfo{person}{Thomas Neumann}, {and} \bibinfo{person}{Alfons
  Kemper}.} \bibinfo{year}{2019}\natexlab{}.
\newblock \showarticletitle{Scalable garbage collection for in-memory {MVCC}
  systems}.
\newblock \bibinfo{journal}{\emph{Proceedings of the VLDB Endowment}}
  \bibinfo{volume}{13}, \bibinfo{number}{2} (\bibinfo{year}{2019}),
  \bibinfo{pages}{128--141}.
\newblock


\bibitem[Brown et~al\mbox{.}(2014)]%
        {BER14}
\bibfield{author}{\bibinfo{person}{Trevor Brown}, \bibinfo{person}{Faith
  Ellen}, {and} \bibinfo{person}{Eric Ruppert}.}
  \bibinfo{year}{2014}\natexlab{}.
\newblock \showarticletitle{A General Technique for Non-Blocking Trees}. In
  \bibinfo{booktitle}{\emph{Proc.\ ACM Symposium on Principles and Practice of
  Parallel Programming}}. \bibinfo{pages}{329--342}.
\newblock
\urldef\tempurl%
\url{https://doi.org/10.1145/2555243.2555267}
\showDOI{\tempurl}


\bibitem[Brown(2015)]%
        {B15}
\bibfield{author}{\bibinfo{person}{Trevor~Alexander Brown}.}
  \bibinfo{year}{2015}\natexlab{}.
\newblock \showarticletitle{Reclaiming Memory for Lock-Free Data Structures:
  There Has to Be a Better Way}. In \bibinfo{booktitle}{\emph{Proc.\ ACM
  Symposium on Principles of Distributed Computing}}.
  \bibinfo{pages}{261--270}.
\newblock
\showISBNx{9781450336178}
\urldef\tempurl%
\url{https://doi.org/10.1145/2767386.2767436}
\showDOI{\tempurl}


\bibitem[Cooper et~al\mbox{.}(2010)]%
        {YCSB}
\bibfield{author}{\bibinfo{person}{Brian~F. Cooper}, \bibinfo{person}{Adam
  Silberstein}, \bibinfo{person}{Erwin Tam}, \bibinfo{person}{Raghu
  Ramakrishnan}, {and} \bibinfo{person}{Russell Sears}.}
  \bibinfo{year}{2010}\natexlab{}.
\newblock \showarticletitle{Benchmarking Cloud Serving Systems with {YCSB}}. In
  \bibinfo{booktitle}{\emph{Proc.~1st ACM Symposium on Cloud Computing}}.
  \bibinfo{pages}{143--154}.
\newblock
\showISBNx{9781450300360}
\urldef\tempurl%
\url{https://doi.org/10.1145/1807128.1807152}
\showDOI{\tempurl}


\bibitem[Correia et~al\mbox{.}(2021)]%
        {correia2021orcgc}
\bibfield{author}{\bibinfo{person}{Andreia Correia}, \bibinfo{person}{Pedro
  Ramalhete}, {and} \bibinfo{person}{Pascal Felber}.}
  \bibinfo{year}{2021}\natexlab{}.
\newblock \showarticletitle{{OrcGC}: automatic lock-free memory reclamation}.
  In \bibinfo{booktitle}{\emph{Proc.\ 26th ACM Symposium on Principles and
  Practice of Parallel Programming}}. \bibinfo{pages}{205--218}.
\newblock


\bibitem[Detlefs et~al\mbox{.}(2001)]%
        {DM+01}
\bibfield{author}{\bibinfo{person}{David~L. Detlefs}, \bibinfo{person}{Paul~A.
  Martin}, \bibinfo{person}{Mark Moir}, {and} \bibinfo{person}{Guy~L. Steele}.}
  \bibinfo{year}{2001}\natexlab{}.
\newblock \showarticletitle{Lock-Free Reference Counting}. In
  \bibinfo{booktitle}{\emph{Proc.\ 20th ACM Symposium on Principles of
  Distributed Computing}}. \bibinfo{pages}{190--199}.
\newblock
\showISBNx{1581133839}
\urldef\tempurl%
\url{https://doi.org/10.1145/383962.384016}
\showDOI{\tempurl}


\bibitem[Dragojevi\'{c} et~al\mbox{.}(2011)]%
        {DH+11}
\bibfield{author}{\bibinfo{person}{Aleksandar Dragojevi\'{c}},
  \bibinfo{person}{Maurice Herlihy}, \bibinfo{person}{Yossi Lev}, {and}
  \bibinfo{person}{Mark Moir}.} \bibinfo{year}{2011}\natexlab{}.
\newblock \showarticletitle{On the Power of Hardware Transactional Memory to
  Simplify Memory Management}. In \bibinfo{booktitle}{\emph{Proc.\ 30th ACM
  Symposium on Principles of Distributed Computing}}. \bibinfo{pages}{99--108}.
\newblock
\showISBNx{9781450307192}
\urldef\tempurl%
\url{https://doi.org/10.1145/1993806.1993821}
\showDOI{\tempurl}


\bibitem[Ellen et~al\mbox{.}(2010)]%
        {EFR10}
\bibfield{author}{\bibinfo{person}{Faith Ellen}, \bibinfo{person}{Panagiota
  Fatourou}, \bibinfo{person}{Eric Ruppert}, {and} \bibinfo{person}{Franck van
  Breugel}.} \bibinfo{year}{2010}\natexlab{}.
\newblock \showarticletitle{Non-Blocking Binary Search Trees}. In
  \bibinfo{booktitle}{\emph{Proc.\ 29th ACM Symposium on Principles of
  Distributed Computing}} (Zurich, Switzerland). \bibinfo{pages}{131--140}.
\newblock
\showISBNx{9781605588889}
\urldef\tempurl%
\url{https://doi.org/10.1145/1835698.1835736}
\showDOI{\tempurl}


\bibitem[Fatourou and Kallimanis(2014)]%
        {FK14}
\bibfield{author}{\bibinfo{person}{Panagiota Fatourou} {and}
  \bibinfo{person}{Nikolaos~D Kallimanis}.} \bibinfo{year}{2014}\natexlab{}.
\newblock \showarticletitle{Highly-efficient wait-free synchronization}.
\newblock \bibinfo{journal}{\emph{Theory of Computing Systems}}
  \bibinfo{volume}{55}, \bibinfo{number}{3} (\bibinfo{year}{2014}),
  \bibinfo{pages}{475--520}.
\newblock


\bibitem[Fatourou et~al\mbox{.}(2019)]%
        {FPR19}
\bibfield{author}{\bibinfo{person}{Panagiota Fatourou}, \bibinfo{person}{Elias
  Papavasileiou}, {and} \bibinfo{person}{Eric Ruppert}.}
  \bibinfo{year}{2019}\natexlab{}.
\newblock \showarticletitle{Persistent Non-Blocking Binary Search Trees
  Supporting Wait-Free Range Queries}. In \bibinfo{booktitle}{\emph{Proc.~31st
  ACM Symposium on Parallelism in Algorithms and Architectures}}.
  \bibinfo{pages}{275--286}.
\newblock
\urldef\tempurl%
\url{https://doi.org/10.1145/3323165.3323197}
\showDOI{\tempurl}


\bibitem[Feldman et~al\mbox{.}(2020)]%
        {FKE20}
\bibfield{author}{\bibinfo{person}{Yotam M.~Y. Feldman}, \bibinfo{person}{Artem
  Khyzha}, \bibinfo{person}{Constantin Enea}, \bibinfo{person}{Adam Morrison},
  \bibinfo{person}{Aleksandar Nanevski}, \bibinfo{person}{Noam Rinetzky}, {and}
  \bibinfo{person}{Sharon Shoham}.} \bibinfo{year}{2020}\natexlab{}.
\newblock \showarticletitle{Proving Highly-Concurrent Traversals Correct}.
\newblock \bibinfo{journal}{\emph{Proc. ACM Program. Lang.}}
  \bibinfo{volume}{4}, \bibinfo{number}{OOPSLA}, Article
  \bibinfo{articleno}{128} (\bibinfo{date}{Nov.} \bibinfo{year}{2020}),
  \bibinfo{numpages}{29}~pages.
\newblock
\urldef\tempurl%
\url{https://doi.org/10.1145/3428196}
\showDOI{\tempurl}


\bibitem[Fernandes and Cachopo(2011)]%
        {FC11}
\bibfield{author}{\bibinfo{person}{S\'{e}rgio~Miguel Fernandes} {and}
  \bibinfo{person}{J{o\~{a}o} Cachopo}.} \bibinfo{year}{2011}\natexlab{}.
\newblock \showarticletitle{Lock-Free and Scalable Multi-Version Software
  Transactional Memory}. In \bibinfo{booktitle}{\emph{Proc.\ 16th ACM Symposium
  on Principles and Practice of Parallel Programming}}.
  \bibinfo{pages}{179--188}.
\newblock
\showISBNx{9781450301190}
\urldef\tempurl%
\url{https://doi.org/10.1145/1941553.1941579}
\showDOI{\tempurl}


\bibitem[Fomitchev and Ruppert(2004)]%
        {FR04}
\bibfield{author}{\bibinfo{person}{Mikhail Fomitchev} {and}
  \bibinfo{person}{Eric Ruppert}.} \bibinfo{year}{2004}\natexlab{}.
\newblock \showarticletitle{Lock-Free Linked Lists and Skip Lists}. In
  \bibinfo{booktitle}{\emph{Proc.~23rd ACM Symposium on Principles of
  Distributed Computing}}. \bibinfo{pages}{50--59}.
\newblock
\showISBNx{1581138024}
\urldef\tempurl%
\url{https://doi.org/10.1145/1011767.1011776}
\showDOI{\tempurl}


\bibitem[Fraser(2004)]%
        {Fra04}
\bibfield{author}{\bibinfo{person}{Keir Fraser}.}
  \bibinfo{year}{2004}\natexlab{}.
\newblock \bibinfo{booktitle}{\emph{Practical Lock-freedom}}.
\newblock \bibinfo{type}{{T}echnical {R}eport} UCAM-CL-TR-579.
  \bibinfo{institution}{University of Cambridge, Computer Laboratory}.
\newblock


\bibitem[Greenwald(2002)]%
        {Gre02}
\bibfield{author}{\bibinfo{person}{Michael Greenwald}.}
  \bibinfo{year}{2002}\natexlab{}.
\newblock \showarticletitle{Two-handed emulation: how to build non-blocking
  implementations of complex data-structures using {DCAS}}. In
  \bibinfo{booktitle}{\emph{Proc.~21st ACM Symposium on Principles of
  Distributed Computing}}. \bibinfo{pages}{260--269}.
\newblock


\bibitem[Guerraoui et~al\mbox{.}(2020)]%
        {guerraoui2020efficient}
\bibfield{author}{\bibinfo{person}{Rachid Guerraoui}, \bibinfo{person}{Alex
  Kogan}, \bibinfo{person}{Virendra~J. Marathe}, {and} \bibinfo{person}{Igor
  Zablotchi}.} \bibinfo{year}{2020}\natexlab{}.
\newblock \showarticletitle{Efficient Multi-Word Compare and Swap}. In
  \bibinfo{booktitle}{\emph{Proc.~34th International Symposium on Distributed
  Computing}} \emph{(\bibinfo{series}{LIPIcs}, Vol.~\bibinfo{volume}{179})}.
\newblock


\bibitem[Harris(2001)]%
        {harris2001pragmatic}
\bibfield{author}{\bibinfo{person}{Timothy~L. Harris}.}
  \bibinfo{year}{2001}\natexlab{}.
\newblock \showarticletitle{A pragmatic implementation of non-blocking
  linked-lists}. In \bibinfo{booktitle}{\emph{Proc.\ International Symposium on
  Distributed Computing}}. Springer, \bibinfo{pages}{300--314}.
\newblock


\bibitem[Harris et~al\mbox{.}(2002)]%
        {harris2002practical}
\bibfield{author}{\bibinfo{person}{Timothy~L. Harris}, \bibinfo{person}{Keir
  Fraser}, {and} \bibinfo{person}{Ian~A. Pratt}.}
  \bibinfo{year}{2002}\natexlab{}.
\newblock \showarticletitle{A practical multi-word compare-and-swap operation}.
  In \bibinfo{booktitle}{\emph{Proc.~International Symposium on Distributed
  Computing}} \emph{(\bibinfo{series}{LNCS}, Vol.~\bibinfo{volume}{2508})}.
  \bibinfo{pages}{265--279}.
\newblock


\bibitem[Herlihy et~al\mbox{.}(2005)]%
        {HL+05}
\bibfield{author}{\bibinfo{person}{Maurice Herlihy}, \bibinfo{person}{Victor
  Luchangco}, \bibinfo{person}{Paul Martin}, {and} \bibinfo{person}{Mark
  Moir}.} \bibinfo{year}{2005}\natexlab{}.
\newblock \showarticletitle{Nonblocking Memory Management Support for
  Dynamic-Sized Data Structures}.
\newblock \bibinfo{journal}{\emph{ACM Trans. Comput. Syst.}}
  \bibinfo{volume}{23}, \bibinfo{number}{2} (\bibinfo{date}{May}
  \bibinfo{year}{2005}), \bibinfo{pages}{146--196}.
\newblock
\showISSN{0734-2071}
\urldef\tempurl%
\url{https://doi.org/10.1145/1062247.1062249}
\showDOI{\tempurl}


\bibitem[Herlihy and Wing(1990)]%
        {HW90}
\bibfield{author}{\bibinfo{person}{Maurice~P. Herlihy} {and}
  \bibinfo{person}{Jeannette~M. Wing}.} \bibinfo{year}{1990}\natexlab{}.
\newblock \showarticletitle{Linearizability: A correctness condition for
  concurrent objects}.
\newblock \bibinfo{journal}{\emph{ACM Transactions on Programming Languages and
  Systems}} \bibinfo{volume}{12}, \bibinfo{number}{3} (\bibinfo{year}{1990}),
  \bibinfo{pages}{463--492}.
\newblock


\bibitem[Keidar and Perelman(2015)]%
        {Keidar2015}
\bibfield{author}{\bibinfo{person}{Idit Keidar} {and} \bibinfo{person}{Dmitri
  Perelman}.} \bibinfo{year}{2015}\natexlab{}.
\newblock \showarticletitle{Multi-versioning in Transactional Memory}.
\newblock In \bibinfo{booktitle}{\emph{Transactional Memory. Foundations,
  Algorithms, Tools, and Applications}}. \bibinfo{series}{LNCS},
  Vol.~\bibinfo{volume}{8913}. \bibinfo{publisher}{Springer},
  \bibinfo{pages}{150--165}.
\newblock
\urldef\tempurl%
\url{https://doi.org/10.1007/978-3-319-14720-8\_7}
\showDOI{\tempurl}


\bibitem[Kim et~al\mbox{.}(2019)]%
        {kim2019mvrlu}
\bibfield{author}{\bibinfo{person}{Jaeho Kim}, \bibinfo{person}{Ajit Mathew},
  \bibinfo{person}{Sanidhya Kashyap}, \bibinfo{person}{Madhava~Krishnan
  Ramanathan}, {and} \bibinfo{person}{Changwoo Min}.}
  \bibinfo{year}{2019}\natexlab{}.
\newblock \showarticletitle{{MV-RLU}: Scaling Read-Log-Update with
  Multi-Versioning}. In \bibinfo{booktitle}{\emph{Proc.\ 24th International
  Conference on Architectural Support for Programming Languages and Operating
  Systems}}. \bibinfo{pages}{779--792}.
\newblock
\showISBNx{9781450362405}
\urldef\tempurl%
\url{https://doi.org/10.1145/3297858.3304040}
\showDOI{\tempurl}


\bibitem[Kobus et~al\mbox{.}(2022)]%
        {KKW22}
\bibfield{author}{\bibinfo{person}{Tadeusz Kobus}, \bibinfo{person}{Maciej
  Kokoci\'{n}ski}, {and} \bibinfo{person}{Pawe\l{}~T. Wojciechowski}.}
  \bibinfo{year}{2022}\natexlab{}.
\newblock \showarticletitle{Jiffy: A Lock-Free Skip List with Batch Updates and
  Snapshots}. In \bibinfo{booktitle}{\emph{Proc.\ 27th ACM SIGPLAN Symposium on
  Principles and Practice of Parallel Programming}}. \bibinfo{pages}{400--415}.
\newblock
\urldef\tempurl%
\url{https://doi.org/10.1145/3503221.3508437}
\showDOI{\tempurl}


\bibitem[Kumar et~al\mbox{.}(2014)]%
        {Kumar14}
\bibfield{author}{\bibinfo{person}{Priyanka Kumar}, \bibinfo{person}{Sathya
  Peri}, {and} \bibinfo{person}{K. Vidyasankar}.}
  \bibinfo{year}{2014}\natexlab{}.
\newblock \showarticletitle{A TimeStamp Based Multi-version {STM} Algorithm}.
  In \bibinfo{booktitle}{\emph{Proc.\ International Conference on Distributed
  Computing and Networking}}. \bibinfo{pages}{212--226}.
\newblock


\bibitem[Larson et~al\mbox{.}(2013)]%
        {larson2013hekaton}
\bibfield{author}{\bibinfo{person}{Per-{\AA}ke Larson}, \bibinfo{person}{Mike
  Zwilling}, {and} \bibinfo{person}{Kevin Farlee}.}
  \bibinfo{year}{2013}\natexlab{}.
\newblock \showarticletitle{The {Hekaton} Memory-Optimized {OLTP} Engine.}
\newblock \bibinfo{journal}{\emph{IEEE Data Eng. Bull.}} \bibinfo{volume}{36},
  \bibinfo{number}{2} (\bibinfo{year}{2013}), \bibinfo{pages}{34--40}.
\newblock


\bibitem[Lee et~al\mbox{.}(2016)]%
        {HANAgc}
\bibfield{author}{\bibinfo{person}{Juchang Lee}, \bibinfo{person}{Hyungyu
  Shin}, \bibinfo{person}{Chang~Gyoo Park}, \bibinfo{person}{Seongyun Ko},
  \bibinfo{person}{Jaeyun Noh}, \bibinfo{person}{Yongjae Chuh},
  \bibinfo{person}{Wolfgang Stephan}, {and} \bibinfo{person}{Wook-Shin Han}.}
  \bibinfo{year}{2016}\natexlab{}.
\newblock \showarticletitle{Hybrid Garbage Collection for Multi-Version
  Concurrency Control in {SAP} {HANA}}. In \bibinfo{booktitle}{\emph{Proc.\
  International Conference on Management of Data}}.
  \bibinfo{pages}{1307–1318}.
\newblock
\showISBNx{9781450335317}
\urldef\tempurl%
\url{https://doi.org/10.1145/2882903.2903734}
\showDOI{\tempurl}


\bibitem[Leis et~al\mbox{.}(2013)]%
        {leis2013adaptive}
\bibfield{author}{\bibinfo{person}{Viktor Leis}, \bibinfo{person}{Alfons
  Kemper}, {and} \bibinfo{person}{Thomas Neumann}.}
  \bibinfo{year}{2013}\natexlab{}.
\newblock \showarticletitle{The adaptive radix tree: ARTful indexing for
  main-memory databases}. In \bibinfo{booktitle}{\emph{Proc.\ 29th IEEE
  International Conference on Data Engineering}}. \bibinfo{pages}{38--49}.
\newblock


\bibitem[Lim et~al\mbox{.}(2017)]%
        {lim2017cicada}
\bibfield{author}{\bibinfo{person}{Hyeontaek Lim}, \bibinfo{person}{Michael
  Kaminsky}, {and} \bibinfo{person}{David~G Andersen}.}
  \bibinfo{year}{2017}\natexlab{}.
\newblock \showarticletitle{Cicada: Dependably fast multi-core in-memory
  transactions}. In \bibinfo{booktitle}{\emph{Proc.~ACM International
  Conference on Management of Data}}. \bibinfo{pages}{21--35}.
\newblock


\bibitem[Lu and Scott(2013)]%
        {lu2013generic}
\bibfield{author}{\bibinfo{person}{Li Lu} {and} \bibinfo{person}{Michael~L
  Scott}.} \bibinfo{year}{2013}\natexlab{}.
\newblock \showarticletitle{Generic multiversion {STM}}. In
  \bibinfo{booktitle}{\emph{Proc.\ International Symposium on Distributed
  Computing}} \emph{(\bibinfo{series}{LNCS}, Vol.~\bibinfo{volume}{8205})}.
  \bibinfo{pages}{134--148}.
\newblock


\bibitem[Michael(2004)]%
        {M04}
\bibfield{author}{\bibinfo{person}{M.M. Michael}.}
  \bibinfo{year}{2004}\natexlab{}.
\newblock \showarticletitle{Hazard pointers: safe memory reclamation for
  lock-free objects}.
\newblock \bibinfo{journal}{\emph{IEEE Transactions on Parallel and Distributed
  Systems}} \bibinfo{volume}{15}, \bibinfo{number}{6} (\bibinfo{year}{2004}),
  \bibinfo{pages}{491--504}.
\newblock
\urldef\tempurl%
\url{https://doi.org/10.1109/TPDS.2004.8}
\showDOI{\tempurl}


\bibitem[Michael(2002)]%
        {Mic02}
\bibfield{author}{\bibinfo{person}{Maged~M. Michael}.}
  \bibinfo{year}{2002}\natexlab{}.
\newblock \showarticletitle{High Performance Dynamic Lock-Free Hash Tables and
  List-Based Sets}. In \bibinfo{booktitle}{\emph{Proc.~14th ACM Symposium on
  Parallel Algorithms and Architectures}}. \bibinfo{pages}{73--82}.
\newblock
\showISBNx{1581135297}
\urldef\tempurl%
\url{https://doi.org/10.1145/564870.564881}
\showDOI{\tempurl}


\bibitem[Nelson{-}Slivon et~al\mbox{.}(2022)]%
        {NHP22}
\bibfield{author}{\bibinfo{person}{Jacob Nelson{-}Slivon},
  \bibinfo{person}{Ahmed Hassan}, {and} \bibinfo{person}{Roberto Palmieri}.}
  \bibinfo{year}{2022}\natexlab{}.
\newblock \showarticletitle{Bundling linked data structures for linearizable
  range queries}. In \bibinfo{booktitle}{\emph{Proc.\ 27th {ACM} {SIGPLAN}
  Symposium on Principles and Practice of Parallel Programming}}.
  \bibinfo{pages}{368--384}.
\newblock
\urldef\tempurl%
\url{https://doi.org/10.1145/3503221.3508412}
\showDOI{\tempurl}


\bibitem[Neumann et~al\mbox{.}(2015)]%
        {neumann2015fast}
\bibfield{author}{\bibinfo{person}{Thomas Neumann}, \bibinfo{person}{Tobias
  M{\"u}hlbauer}, {and} \bibinfo{person}{Alfons Kemper}.}
  \bibinfo{year}{2015}\natexlab{}.
\newblock \showarticletitle{Fast serializable multi-version concurrency control
  for main-memory database systems}. In \bibinfo{booktitle}{\emph{Proc.\ ACM
  SIGMOD International Conference on Management of Data}}.
  \bibinfo{pages}{677--689}.
\newblock


\bibitem[O'Hearn et~al\mbox{.}(2010)]%
        {ORV10}
\bibfield{author}{\bibinfo{person}{Peter~W. O'Hearn}, \bibinfo{person}{Noam
  Rinetzky}, \bibinfo{person}{Martin~T. Vechev}, \bibinfo{person}{Eran Yahav},
  {and} \bibinfo{person}{Greta Yorsh}.} \bibinfo{year}{2010}\natexlab{}.
\newblock \showarticletitle{Verifying Linearizability with Hindsight}. In
  \bibinfo{booktitle}{\emph{Proc.\ 29th ACM Symposium on Principles of
  Distributed Computing}}. \bibinfo{pages}{85--94}.
\newblock


\bibitem[Papadimitriou and Kanellakis(1984)]%
        {papadimitriou1984concurrency}
\bibfield{author}{\bibinfo{person}{Christos~H. Papadimitriou} {and}
  \bibinfo{person}{Paris~C. Kanellakis}.} \bibinfo{year}{1984}\natexlab{}.
\newblock \showarticletitle{On Concurrency Control by Multiple Versions}.
\newblock \bibinfo{journal}{\emph{ACM Transactions on Database Systems}}
  \bibinfo{volume}{9}, \bibinfo{number}{1} (\bibinfo{year}{1984}),
  \bibinfo{pages}{89--99}.
\newblock


\bibitem[Perelman et~al\mbox{.}(2011)]%
        {Perelman11}
\bibfield{author}{\bibinfo{person}{Dmitri Perelman}, \bibinfo{person}{Anton
  Byshevsky}, \bibinfo{person}{Oleg Litmanovich}, {and} \bibinfo{person}{Idit
  Keidar}.} \bibinfo{year}{2011}\natexlab{}.
\newblock \showarticletitle{{SMV}: Selective Multi-Versioning {STM}}. In
  \bibinfo{booktitle}{\emph{Proc.~International Symposium on Distributed
  Computing}} \emph{(\bibinfo{series}{LNCS}, Vol.~\bibinfo{volume}{6950})}.
  \bibinfo{pages}{125--140}.
\newblock


\bibitem[Perelman et~al\mbox{.}(2010)]%
        {perelman2010maintaining}
\bibfield{author}{\bibinfo{person}{Dmitri Perelman}, \bibinfo{person}{Rui Fan},
  {and} \bibinfo{person}{Idit Keidar}.} \bibinfo{year}{2010}\natexlab{}.
\newblock \showarticletitle{On Maintaining Multiple Versions in {STM}}. In
  \bibinfo{booktitle}{\emph{Proc.\ ACM Symposium on Principles of Distributed
  Computing}}. \bibinfo{pages}{16--25}.
\newblock


\bibitem[Ramalhete and Correia(2017)]%
        {RC17}
\bibfield{author}{\bibinfo{person}{Pedro Ramalhete} {and}
  \bibinfo{person}{Andreia Correia}.} \bibinfo{year}{2017}\natexlab{}.
\newblock \showarticletitle{Brief Announcement: Hazard Eras--Non-Blocking
  Memory Reclamation}. In \bibinfo{booktitle}{\emph{Proc.\ 29th ACM Symposium
  on Parallelism in Algorithms and Architectures}}. \bibinfo{pages}{367--369}.
\newblock
\showISBNx{9781450345934}
\urldef\tempurl%
\url{https://doi.org/10.1145/3087556.3087588}
\showDOI{\tempurl}


\bibitem[Reed(1978)]%
        {Reed78}
\bibfield{author}{\bibinfo{person}{D. Reed}.} \bibinfo{year}{1978}\natexlab{}.
\newblock \bibinfo{booktitle}{\emph{Naming and synchronization in a
  decentralized computer system}}.
\newblock \bibinfo{type}{{T}echnical {R}eport} LCS/TR-205.
  \bibinfo{institution}{EECS Dept., MIT}.
\newblock


\bibitem[Shafiei(2015)]%
        {Sha15}
\bibfield{author}{\bibinfo{person}{Niloufar Shafiei}.}
  \bibinfo{year}{2015}\natexlab{}.
\newblock \showarticletitle{Non-Blocking Doubly-Linked Lists with Good
  Amortized Complexity}. In \bibinfo{booktitle}{\emph{Proc.~19th Int.
  Conference on Principles of Distributed Systems}}
  \emph{(\bibinfo{series}{LIPIcs}, Vol.~\bibinfo{volume}{46})}.
  \bibinfo{pages}{35:1--35:17}.
\newblock


\bibitem[Sheffi et~al\mbox{.}(2021)]%
        {SHP21}
\bibfield{author}{\bibinfo{person}{Gali Sheffi}, \bibinfo{person}{Maurice
  Herlihy}, {and} \bibinfo{person}{Erez Petrank}.}
  \bibinfo{year}{2021}\natexlab{}.
\newblock \showarticletitle{{VBR:} Version Based Reclamation}. In
  \bibinfo{booktitle}{\emph{Proc.\ 35th International Symposium on Distributed
  Computing}} \emph{(\bibinfo{series}{LIPIcs}, Vol.~\bibinfo{volume}{209})}.
  \bibinfo{pages}{35:1--35:18}.
\newblock
\urldef\tempurl%
\url{https://doi.org/10.4230/LIPIcs.DISC.2021.35}
\showDOI{\tempurl}


\bibitem[Sheffi et~al\mbox{.}(2022)]%
        {SRP22}
\bibfield{author}{\bibinfo{person}{Gali Sheffi}, \bibinfo{person}{Pedro
  Ramalhete}, {and} \bibinfo{person}{Erez Petrank}.}
  \bibinfo{year}{2022}\natexlab{}.
\newblock \bibinfo{title}{{EEMARQ:} Efficient Lock-Free Range Queries with
  Memory Reclamation}.
\newblock
\newblock
\urldef\tempurl%
\url{https://doi.org/10.48550/ARXIV.2210.17086}
\showDOI{\tempurl}


\bibitem[Singh et~al\mbox{.}(2021)]%
        {SBM21}
\bibfield{author}{\bibinfo{person}{Ajay Singh}, \bibinfo{person}{Trevor Brown},
  {and} \bibinfo{person}{Ali Mashtizadeh}.} \bibinfo{year}{2021}\natexlab{}.
\newblock \showarticletitle{{NBR}: Neutralization Based Reclamation}. In
  \bibinfo{booktitle}{\emph{Proc.\ 26th ACM Symp. on Principles and Practice of
  Parallel Programming}}. \bibinfo{pages}{175--190}.
\newblock
\showISBNx{9781450382946}
\urldef\tempurl%
\url{https://doi.org/10.1145/3437801.3441625}
\showDOI{\tempurl}


\bibitem[Sundell and Tsigas(2008)]%
        {ST08}
\bibfield{author}{\bibinfo{person}{H\r{a}kan Sundell} {and}
  \bibinfo{person}{Philippas Tsigas}.} \bibinfo{year}{2008}\natexlab{}.
\newblock \showarticletitle{Lock-free deques and doubly linked lists}.
\newblock \bibinfo{journal}{\emph{J.~Parallel and Distributed Computing}}
  \bibinfo{volume}{68}, \bibinfo{number}{7} (\bibinfo{year}{2008}),
  \bibinfo{pages}{1008--1020}.
\newblock


\bibitem[Valois(1995)]%
        {Val95}
\bibfield{author}{\bibinfo{person}{John~D. Valois}.}
  \bibinfo{year}{1995}\natexlab{}.
\newblock \showarticletitle{Lock-free linked lists using compare-and-swap}. In
  \bibinfo{booktitle}{\emph{Proc.~14th ACM Symposium on Principles of
  Distributed Computing}}. \bibinfo{pages}{214--222}.
\newblock


\bibitem[Wang et~al\mbox{.}(2018)]%
        {wang2018building}
\bibfield{author}{\bibinfo{person}{Ziqi Wang}, \bibinfo{person}{Andrew Pavlo},
  \bibinfo{person}{Hyeontaek Lim}, \bibinfo{person}{Viktor Leis},
  \bibinfo{person}{Huanchen Zhang}, \bibinfo{person}{Michael Kaminsky}, {and}
  \bibinfo{person}{David~G Andersen}.} \bibinfo{year}{2018}\natexlab{}.
\newblock \showarticletitle{Building a {Bw}-tree takes more than just buzz
  words}. In \bibinfo{booktitle}{\emph{Proceedings of the 2018 International
  Conference on Management of Data}}. \bibinfo{pages}{473--488}.
\newblock


\bibitem[Wei et~al\mbox{.}(2021)]%
        {WBBFRS21a}
\bibfield{author}{\bibinfo{person}{Yuanhao Wei}, \bibinfo{person}{Naama
  Ben-David}, \bibinfo{person}{Guy~E. Blelloch}, \bibinfo{person}{Panagiota
  Fatourou}, \bibinfo{person}{Eric Ruppert}, {and} \bibinfo{person}{Yihan
  Sun}.} \bibinfo{year}{2021}\natexlab{}.
\newblock \showarticletitle{Constant-Time Snapshots with Applications to
  Concurrent Data Structures}. In \bibinfo{booktitle}{\emph{Proc.\ ACM
  Symposium on Principles and Practice of Parallel Programming}}.
  \bibinfo{pages}{31--46}.
\newblock
\newblock
\shownote{A full version is available from
  \url{https://arxiv.org/abs/2007.02372}}.


\bibitem[Wen et~al\mbox{.}(2018)]%
        {WI+18-I}
\bibfield{author}{\bibinfo{person}{Haosen Wen}, \bibinfo{person}{Joseph
  Izraelevitz}, \bibinfo{person}{Wentao Cai}, \bibinfo{person}{H.~Alan Beadle},
  {and} \bibinfo{person}{Michael~L. Scott}.} \bibinfo{year}{2018}\natexlab{}.
\newblock \showarticletitle{Interval-Based Memory Reclamation}. In
  \bibinfo{booktitle}{\emph{Proc.\ 23rd ACM Symp. on Principles and Practice of
  Parallel Programming}}. \bibinfo{pages}{1--13}.
\newblock
\showISBNx{9781450349826}
\urldef\tempurl%
\url{https://doi.org/10.1145/3178487.3178488}
\showDOI{\tempurl}


\bibitem[Wu et~al\mbox{.}(2017)]%
        {Wu17}
\bibfield{author}{\bibinfo{person}{Yingjun Wu}, \bibinfo{person}{Joy Arulraj},
  \bibinfo{person}{Jiexi Lin}, \bibinfo{person}{Ran Xian}, {and}
  \bibinfo{person}{Andrew Pavlo}.} \bibinfo{year}{2017}\natexlab{}.
\newblock \showarticletitle{An Empirical Evaluation of In-memory Multi-version
  Concurrency Control}.
\newblock \bibinfo{journal}{\emph{Proc.\ of the VLDB Endowment}}
  \bibinfo{volume}{10}, \bibinfo{number}{7} (\bibinfo{date}{March}
  \bibinfo{year}{2017}), \bibinfo{pages}{781--792}.
\newblock


\end{thebibliography}

\arxcam{

\clearpage
\appendix

\section{Proofs for the Doubly-Linked List}
\label{DLL-appendix}

\here{I rewrote some the following claims to simplify proofs.
Doublecheck that all uses of this invariant in later proofs are still okay--may need to argue that antecedents of some claims are satisfied when the claim is used.  }

\resInvOrder*

\begin{proof}
Initially, the only node is the sentinel node and its \var{right} pointer is null, and there is no pending call to \remove, so the claim holds vacuously.
We show that every step preserves each of the three claims of the invariant.

\begin{enumerate}
\item
To prove that claim \ref{left-order} is preserved, we must only consider steps that can 
add a node to the list or change the \var{left}
pointer of a node.

When a node \vy\ is added to the list at line \ref{CAS-head} of a call to \var{\tryAppend(\vx,\vy)},
\var{\vy->left} has been set to \vx\ at line \ref{init-left}.
By the preconditions, this is the unique call of the form \var{\tryAppend(*,\vy)}, so
a pointer to \vy\ has not previously been written into shared memory.
So, no process could have updated \var{\vy->left} and it is therefore still \vx.
By the precondition of \tryAppend, \vx\ is a node that has previously been read from \var{head},
so it has been added to the list and $\vx\prec\vy$.
Since the CAS on \var{head} changes it from \vx\ directly to \vy, there are no nodes \vw\ satisfying
$\vx \prec \vw \prec\vy$.

The only instructions that update a \var{left} pointer are lines \ref{init-left} and \ref{CAS-left-DLL}.
As mentioned above, a node \vy\ has not been added to the list yet when line \ref{init-left} of \var{\tryAppend(\vx,\vy)}
updates \var{\vy->left}, so there is nothing more to prove.

Suppose an execution of line \ref{CAS-left-DLL} of a \remove(\vx) sets \var{\vy->left} to \var{v}.
Then, \var{v} and \vy\ are the values of the \remove's local variables \var{left} and \var{right}.
By induction hypothesis \ref{remove-order}, \var{v} has been added to the list and $\var{v} \prec \vx \prec \vy$ and every node \vw\ satisfying  
$\var{v} \prec  \vw \prec \vy$ is marked.
Thus, after the CAS, every node \vw\ satisfying $\var{y->left} \prec \vw \prec \vy$ is marked.

\item
To prove that claim \ref{right-order} is preserved, we must only consider steps that can 
change the \var{right}
pointer of a node.
The only instructions that set a \var{right} pointer to a non-null value are lines \ref{help-right}, \ref{set-right} and \ref{CAS-right}.

First, consider line \ref{set-right} of a \tryAppend(\vx, \vy), which sets \var{\vx->right} to \vy.
Since \var{head} was changed directly from \vx\ to \vy\ on line \ref{CAS-head},
$\vy$ has been added to the list and $\vx \prec \vy$, and
there are no nodes \vw\ satisfying $\vx \prec \vw \prec \vy$.

Next, consider an execution of line \ref{CAS-right} of a \remove(\vy) sets \var{\vx->right} to \vz.
Then, \vx\ and \vz\ are the values of the local variables \var{left} and \var{right}.
By the induction hypothesis, $\vx \prec \vy \prec \vz$ and every node \vw\ satisfying  
$\vx \prec \vw \prec \vz$ is marked.
Thus, after the CAS, every node \vw\ satisfying $\vx \prec \vw \prec \var{\vx->right}$ is marked.

Finally, consider a CAS step $s$ at 
line \ref{help-right} of \var{\tryAppend(\vy,*)} that changes \var{\vx->right} from null to \vy.
By the precondition of \tryAppend, \vy\ has been added to the list.
The \tryAppend\ read \vx\ from \var{\vy->left} at line \ref{append-read-left} before $s$.
Since \vx\ is not null at line \ref{help-right}, \vy\ cannot be the sentinel node, by induction hypothesis \ref{sentinel-left}.
So \vx\ is a node that has been added to the list and $\vx \prec \vy$ by induction hypothesis \ref{left-order}.
We prove the remainder of the claim by showing that there is no \vw\ such that $\vx \prec \vw \prec\vy$.

Since \var{\vx->right} is null in the configuration before $s$, it follows from induction hypothesis \ref{right-order} that 
\begin{equation}
\label{null-eqn}
\var{\vx->right}= \mbox{null at all times before }s.
\end{equation}
Node \vx\ cannot be marked when $s$ occurs:  before a \remove(\vx) could have marked
\vx, a \var{\tryAppend(\vx,*)} must have returned true (by the precondition for \remove),
and that \tryAppend\ would have set \var{\vx->right} to a non-null value at line \ref{set-right}, contrary to (\ref{null-eqn}).

To derive a contradiction, suppose there is some \vw\ such that $\vx \prec \vw \prec \vy$.
Consider the minimal such \vw\ with respect to the total order $\prec$.
(I.e., \vw\ is the successor of \vx\ in the order $\prec$.)
Since $\vx\leftarrow \vy$ at line \ref{append-read-left} before $s$,
$w$ is marked, by induction hypothesis \ref{left-order}.
This means that a \var{\remove(\vw)} has been invoked before $s$.
By the precondition of \remove, a \var{\tryAppend(\vw,$\vw'$)} returned true before $s$, for some $\vw'$.
Consider line \ref{append-read-left} of that \tryAppend.
When it occurs, \vx\ is unmarked (since it is still unmarked at $s$).
By induction hypothesis \ref{left-order}, $\var{\vw->left} \succeq \vx$ (since \vx\ is unmarked) 
and $\var{\vw->left} \preceq \vx$ (since \vx\ is the predecessor of \vw\ in the order $\prec$).
Thus, line \ref{append-read-left} reads \vx\ from \var{\vw->left}.
Line \ref{help-right} of that \var{\tryAppend(\vw,$\vw'$)} (which also occurs before $s$)
will perform a \var{CAS(\vx->right, null, \vw)}, violating (\ref{null-eqn}).
This contradiction proves that there is no \vw\ such that $\vx \prec \vw \prec \vy$.

\item
We prove that claim \ref{remove-order} is preserved by all updates to the local variables \var{left} and \var{right} in a \remove\ operation.

By the precondition of \var{\remove(\vy)}, \vy\ is not the sentinel node.  
Moreover, a call to \var{\tryAppend(\vy,*)} has returned true,
so by the preconditions of \var{\tryAppend}, we know that \vy\ has been added to the list
before the \remove\ begins.
Moreover, the \tryAppend\ executed line \ref{set-right}, after which \var{\vy->right} is non-null.
So, by induction hypothesis \ref{right-order}, \var{\vy->right} points to a node that has 
been added to the list.

So, when the local variable \var{left} in the \var{\remove(\vy)} operation
is initialized on line \ref{first-left} to the value read from \var{\vy->left},
the claim follows from induction hypothesis \ref{left-order}.
Similarly, when the local variable \var{right} is initialized on line \ref{first-right}, the
claim follows from induction hypothesis \ref{right-order}.

When the local variable \var{left} is advanced from \vx\ to $\vx'$ at line \ref{advance-left}
in a \var{\remove(\vy)},
we have $\vx' \leftarrow \vx$.
Since \vx\ is marked, so it is not the sentinel node, since the precondition to \remove\ ensures that the sentinel node is never marked.
By induction hypothesis \ref{left-order}, $\vx' \prec \vx$ and each node \vw\ satisfying
$\vx' \prec \vw \prec \vx$ is marked.
By induction hypothesis \ref{remove-order}, $\vx \prec \vy$ and each node \vw\ satisfying
$\vx \prec \vw \prec \vy$ is marked.
Moreover, by the test on line \ref{advance-left}, \vx\ is marked.
The claim follows.

A symmetric argument shows that advancing the local variable \var{right} at 
line \ref{advance-right} also preserves the invariant.

\item
Finally we show that claim \ref{sentinel-left} is preserved because no step can change
the \var{left} pointer of the sentinel node.
Line \ref{init-left} of \tryAppend\ changes the \var{left} field of a non-sentinel node, by the precondition to \tryAppend.
Consider a CAS step at line \ref{CAS-left-DLL} of \var{\remove(\vy)} that changes \var{\vz->left}.
By induction hypothesis \ref{remove-order}, $\vy \prec \vz$, so \vz\ cannot be the sentinel node
(which is the minimum element in the ordering $\prec$).
\end{enumerate}
\end{proof}

\resNoCrossovers*
\begin{proof}
To derive a contradiction, suppose the claim is false.

Since there is a node \vx\ between \vw\ and \vy\ in the total order $\prec$, 
the \var{head} pointer takes the value \vx\ between the times it takes the values \vw\ and \vy.
Thus, \vy\ is not added to the list
by a \var{\tryAppend(\vw,\vy)} operation, so the initial value of \var{\vy->left} is not \vw.
So, there is a CAS on line \ref{CAS-left-DLL} of a \remove\ operation that changes \var{\vy->left} to \vw.
This CAS was preceded by a test at line \ref{remove-test} that found \vy\ was not marked.
At that time, the local variables \var{left} and \var{right} of the \remove\ are equal
to \vw\ and \vy.  By Invariant \ref{inv-order}.\ref{remove-order}, \vx\ is marked since $\vw\prec\vx\prec\vy$.
Thus, \vx\ is marked before \vy.

An identical argument shows that a CAS at line \ref{CAS-left-DLL} of a \remove\ operation
changes \var{\vz->left} to \vx\ and at the execution of line \ref{remove-test} before
that CAS, \vx\ is not marked and \vy\ is marked, so \vy\ is marked before \vx.
This is a contradiction.
\end{proof}

\resCASAdvances*
\begin{proof}
To derive a contradiction, suppose $\vw \succ \vy$.
Consider the configuration $C$ when \vy\ is read from \var{\vz->left} on line \ref{left-right}
prior to the CAS.
At $C$, \vw\ and \vz\ are the values of the local variables 
\var{left} and \var{right} of the \remove.
By Invariant \ref{inv-order}.\ref{remove-order}, $\vw \prec \vz$.
In $C$, we have $\var{\vz->left} = \vy \prec \vw \prec \vz$.
By Invariant \ref{inv-order}.\ref{left-order}, \vw\ is marked in $C$.
This contradicts the fact that \vw\ is unmarked when line \ref{remove-test} is executed after $C$.
\end{proof}

\begin{figure}
\input{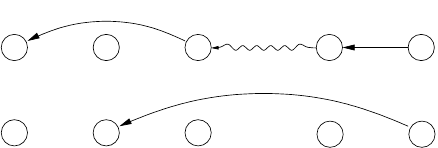_t}
\caption{Proof of Lemma \ref{remove-only} and Lemma \ref{remove-only-SLL}.\label{fig-remove-only}}
\end{figure}

Recall that $\vx \lreach \vz$ if \vx\ can be reached from \vz\ by following \var{left} pointers.
More formally, $\vx \lreach \vz$ if (a) $\vx=\vz$ or (b) for some~\vy, $\vy\leftarrow \vz$ and $\vx\lreach \vy$.

\resRemoveOnly*
\begin{proof}
Refer to Figure \ref{fig-remove-only}.
Let \vx\ be the minimum node (with respect to $\prec$) such that $\vx \succeq \vw$ and $\vx \lreach \vy$ in $C$.
(This minimum is well-defined, since there exists a node \vx\ that satisfies both of these properties:
$\vy \succeq \vw$ by Lemma \ref{CAS-advances} and $\vy \lreach \vy$ holds trivially.)
Since $\vx \lreach \vy \leftarrow \vz$, we have $\vx \prec \vz$ by Lemma \ref{inv-order}.\ref{left-order}.

Let $\var{v}$ be the value of \var{\vx->left} in $C$.  By Invariant \ref{inv-order}.\ref{left-order},
$\var{v} \prec \vx$.
It must be the case that $\var{v} \prec \vw$; 
otherwise we would have $\var{v} \succeq \vw$ and $\var{v} \lreach \vy$, which would
contradict the fact that \vx\ is the \emph{minimum} node that satisfies these two properties.

To derive a contradiction, suppose $\vw \prec \vx$.  Then we have
$\var{v} \prec \vw \prec \vx \prec \vz$.  
We also have $\vw \leftarrow \vz$ after the CAS and $\var{v} \leftarrow \vx$ in $C$, which contradicts Lemma \ref{no-crossovers}.
Thus, $\vw \succeq \vx$.  By definition of \vx, $\vx \succeq \vw$ and $\vx\lreach\vy$ in $C$.
Thus, $\vx = \vw$ and $\vw\lreach\vy$.
\end{proof}

\resRemoveSafe*
\begin{proof}
A node can only be removed from the abstract list by a CAS step on a \var{left} pointer.
Consider a CAS at line \ref{CAS-left-DLL} that changes \var{\vz->left} from \vy\ to \vw.
By Invariant \ref{inv-order}.\ref{left-order} and Lemma \ref{remove-only}, only nodes \vx\ satisfying $\vw \prec \vx \prec \vy$ can
be removed from the abstract list by this CAS.
Since \vw\ and \vy\ are the values of the local variables \var{left} and \var{right}
in the \remove\ operation that performs the CAS, Invariant \ref{inv-order}.\ref{remove-order}
ensures that all such nodes \vx\ are marked when the CAS occurs.
\end{proof}

To ensure that we can linearize \remove\ operations, we show that a \var{\remove(\vx)} successfully removes \vx\ from the abstract list.
The \remove\ can only terminate when it performs a successful CAS on the \var{left} pointer of some
node appended to the list after \vx.  The following lemma shows that when that happens, \vx\ is
no longer in the abstract list.

\resRemoveSucceeds*
\begin{proof}
Since the \var{\remove(x)} terminates, it performs a successful CAS on a \var{left} pointer
at line \ref{CAS-left-DLL}.
Let \vw\ and \vz\ be the values of the \remove's local variables \var{left} and \var{right}
when the CAS occurs.
The CAS updates \var{\vz->left} to \vw.  By Invariant \ref{inv-order}.\ref{remove-order},
$\vw \prec \vx \prec \vz$.

To derive a contradiction, suppose that \vx\ is still in the abstract list in the configuration $C$
after this CAS.

Since \vx\ is in the abstract list and $\var{\vz->left} \prec \vx\prec\vz$, there must be nodes $\var{v}'$ and $\var{v}$ in the abstract list such that $\vx \lreach \var{v}' \leftarrow \var{v} \lreach \var{head}$ with $\var{v}' \prec \vz \prec \var{v}$.
Refer to Figure \ref{fig-remove-succeeds}.
More formally, let \var{v} be the minimum node (with respect to the order $\prec$) such that
$\var{v} \succeq \vz$ and $\var{v} \lreach \var{head}$ in $C$.
Let $\var{v}' = \var{v->left}$.  It follows from the definition of \var{v} that $\var{v}' \prec \vz$.
If \var{v} were equal to \vz, then since \vx\ is in the abstract list, \vx\ would have to be reachable
by following \var{left} pointers from \vw, which would violate \ref{inv-order}.\ref{left-order}, 
since $\vw\prec\vx$.  Thus, $\var{v} \succ \vz$.

So, we have $\vw \prec \vx \preceq \var{v}' \prec \vz \prec \var{v}$ and $\vw \leftarrow \vz$ and $\var{v}' \leftarrow \var{v}$,
which violates Lemma \ref{no-crossovers}.
\end{proof}

\begin{figure}
\input{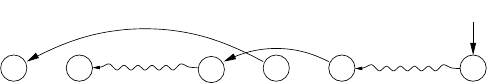_t}
\caption{Proof of Lemma \ref{remove-succeeds}.\label{fig-remove-succeeds}}
\end{figure}

We now specify the linearization points of update operations
so that we maintain the invariant that, for all configurations $C$, 
the abstract list $L_C$ in that configuration is exactly the list that would be obtained
by performing the operations linearized before $C$ in their linearization ordering.
We linearize a successful \tryAppend\ operation (i.e., one that returns \var{true}) when it updates
the \var{head} pointer at line \ref{CAS-head}.
We linearize a \var{\remove(x)} when a CAS at line \ref{CAS-left-DLL} removes \var{x} from the 
abstract list.  
This step may be performed by the \remove\ itself or by another \remove\ operation.
By Lemma \ref{remove-only}, \var{n} will remain out of the 
abstract list forever once it is removed.
By Lemma \ref{remove-safe}, only marked nodes are removed from the abstract
list when a \var{left} pointer is updated.
By Lemma \ref{remove-succeeds}, there is a well-defined linearization point for each \remove\ operation
that terminates, and it is before the \remove\ terminates.  By Lemma \ref{remove-safe}, it is also
after the \remove\ operation performs line \ref{mark}, so the linearization point is during the
execution interval of the \remove.
Thus, the linearization is consistent with the abstract list.

A CAS of a \var{left} pointer may remove several nodes from the abstract list, 
and therefore be used as the linearization point of several \remove\ operations.
When linearizing such a batch of \remove\ operations, we order them in decreasing order by $\prec$.
In other words, we remove the batch of nodes from right to left.
Thus, when a CAS on a \var{left} pointer splices several nodes out of the abstract list,
we can think of the abstract list as undergoing a sequence of changes
by performing those \remove\ operations one by one.
This will be important, because it is sometimes necessary to linearize
searches in the \emph{middle} of such a batch of \remove\ operations.

We now describe how to linearize \search\ operations.
It is a common technique to argue that each node visited by a \search\ \emph{was}
in the data structure at some time during the search (e.g., \cite{EFR10,ORV10}).
Such proofs often rely on the fact that once a node is removed from the data
structure, its pointers can no longer be changed.  This does not hold for our
data structure, but we can still establish the required property using Lemmas 
\ref{no-crossovers} and \ref{remove-only}.
(An alternative approach would be to use the forepassed condition defined in \cite{FKE20}.)

\resSearchInv*
\begin{proof}
Consider any \var{\search(k)} operation in the execution.
We prove the claim holds for every configuration $C$ during the \search\ by induction 
on the number of steps the \search\ has performed.

For the base case, consider the configuration $C$ after the \search\ is invoked and 
executes line \ref{search-begin}, \vx\ is initialized to \var{head}, 
so it is the first node in the abstract list in $C$.

For the induction step, we must just verify that steps that modify the local variable \vx\ preserve the claim.
Consider an execution of line \ref{search-advance} that advances the \search\ from node \vx\ to $\vx'$,
and let $C$ be the configuration after that step.
Then, in $C$, $\vx' \leftarrow \vx$.
By the test at line \ref{search-test}, $\var{\vx->key} > \var{k}$.

First, consider the case where \vx\ is in $L_C$.  
Since $\vx' \leftarrow \vx$ in $C$, $\vx'$ is also in $L_C$ 
and its predecessor \vx\ in $L_C$ has a key greater than \var{k}.
Thus, the claim is satisfied at $C$.

So, for the remainder of the proof, assume \vx\ is not in $L_C$.
By the induction hypothesis, there was an earlier time during the \search\ when \vx\ was in the abstract list.
So there was a CAS step $s$ during the \search\ that removed \vx\ from the abstract list.

Lemma \ref{remove-only} implies that if $\vx' \lreach \vx$ after some CAS on a \var{left} pointer,
then $\vx' \lreach \vx$ before that CAS too.  Since $\vx' \lreach \vx$ in $C$, it follows that
$\vx' \lreach \vx$ holds in all configurations between the addition of \vx\ to the abstract list
and $C$.  In particular, this means that $\vx' \lreach \vx$ in the configuration before $s$.
So $\vx'$ is in the abstract list in that configuration.
We consider the two possible cases.

\begin{figure}
\input{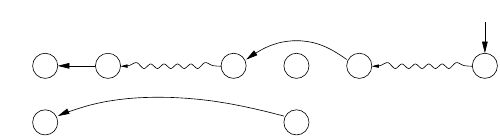_t}
\caption{Proof of Lemma \ref{search-inv}.\label{fig-search-inv}}
\end{figure}

\begin{itemize}
\item
Suppose $s$ removes both $\vx$ and $\vx'$ from the abstract list.
Then $s$ removes some sequence of nodes from the abstract list by updating the \var{left} pointer
of some node \vz.
Since we linearize a batch of \remove\ operations from right to left, \vz\ is the predecessor
of $\vx'$ in the abstract list just before the linearization of \var{remove($\vx'$)}.
It follows from Lemma \ref{remove-only} that $\vx \lreach \vz$ in the configuration before $s$.
By the precondition that the list is always sorted, we have $\var{\vz->key} \geq \var{\vx->key} >\var{k}$.
Thus, the claim is satisfied just before the linearization point of the \var{remove($\vx'$)}.

\item
Suppose $\vx'$ is still in the abstract list in the configuration $C'$ following $s$.
If $\vx'$ is the first element of $L_{C'}$, then the claim holds.

Otherwise, there is some \vy\ such that in $C'$, $\vx' \leftarrow \vy \lreach \var{head}$.
To derive a contradiction, suppose $\vy \prec \vx$.
Then there must be nodes \vz\ and $\vz'$ such that $\vy \preceq \vz \prec \vx \prec \vz'$ and
$\vy \lreach \vz \leftarrow \vz' \lreach \var{head}$ in $C'$.  (See Figure \ref{fig-search-inv}:
at some point in the path from \var{head} to \vy, there must be consecutive nodes $\vz\leftarrow \vz'$
such that $\vz \prec \vx \prec \vz'$ since \vx\ is not in the abstract list.)
So we have $\vx' \prec \vx \prec \vx \prec \vz'$ and $\vz\leftarrow\vz'$ in $C'$ and $\vx' \leftarrow \vx$ in $C$.
This violates Lemma \ref{no-crossovers}.
Hence, $\vy \succeq \vx$.  We cannot have $\vy=\vx$, since \vy\ in in $L_{C'}$ and \vx\ is not.
So, $\vy \succ \vx$.  By the precondition that nodes are appended with non-decreasing keys,
this means that $\var{\vy->key} \geq \var{\vx->key} > \var{k}$.
So,  $\vx'$ is in $L_{C'}$ and its predecessor \vy\ has a key greater than $k$, as required.
\end{itemize}
\end{proof}

Suppose a \var{\search(k)} returns the value of a node \vx.
By the exit condition on line \ref{search-test}, $\var{vx->key} \leq \var{k}$.
Lemma \ref{search-inv} says there is a time during the \search\ when
\vx\ is in the abstract list and its predecessor in the list has a key greater than \var{k}.
Thus, at that time, \vx\ is the first node in the abstract list whose key is less than or equal to \var{k}.
We choose that time as the linearization point of the \search.

\section{Correctness of the Singly-Linked List}
\label{SLL-appendix}

We assume that the following preconditions are satisfied.
\begin{enumerate}[leftmargin=3mm,labelwidth=0mm,labelsep=1mm,topsep=0mm]
\item
If \var{\tryAppend(\vx,\vy)} is invoked, then \vx\ has been read from \var{head}, and
no other \var{\tryAppend(*,\vy)} has been invoked.
\item
\label{append-order}
If a node \vx\ is appended to the list before \vy, $\vx.\ts \leq \vy.\ts$.
\item
\label{searches-announced}
While a \var{\search(t)} is in progress, \vt\ appears in \Announce.
\item
\label{announce-pre}
At all times after a \var{compact(A,t,*)} is invoked and for all \var{i}, either $\Announce[\var{i}]\in \vA$
or $\Announce[\var{i}] \geq \vt$.
\end{enumerate}
These conditions are satisfied naturally in MVGC applications.
Since \compact\ takes its parameters from an AnnScan object (\vA,\vt),
\var{scanAnnounce} ensures that \vt\ was copied from shared memory before \vA.
Since the global timestamp is non-decreasing,
any value stored in \var{Announce[i]} after that entry was copied into \vA
will be greater than \vt.  (We assume \rtxs\ announcing a timestamp
copy the global timestamp into \var{Announce} atomically, as described in~\cite{WBBFRS21a}.
An optimization that avoids this is described in Appendix \ref{avoid-copy}.)

As in Section \ref{DLL-correctness}, 
we say a node is appended to the list when a pointer to the node is stored in \var{head}.
We consider the sentinel node to be appended to the list when the list is created.
We also use the notation $\vx\leftarrow\vy$ and $\vx\prec\vy$ as in Section \ref{DLL-correctness}.
We first prove some simple facts about the algorithm.

\begin{invariant}
\label{sll-order}\mbox{ }
\begin{enumerate}
\item
If \vy\ has been appended to the list and is not the sentinel node and $\vw\leftarrow \vy$  then \vw\ is a node that has been appended to the list and $\vw \prec \vy$.
\item
If a CAS at line \ref{CAS-left-SLL} changes \var{\vy->left} from \vx\ to \vw, then $\vw \prec \vx$.
\item
A \compact\ never sets 
\var{cur}, \var{next} or \var{newNext}  to null.
\end{enumerate}
\end{invariant}

\begin{proof}
We show that each step preserves the invariant.  Assume the invariant holds prior to the step.
We need only consider steps that append a node, successfully CAS a \var{left} pointer and update the local variables \var{cur}, \var{next} and \var{newNext}.

First, consider a step that appends a node \vy\ to the list (line \ref{CAS-head-SLL}).
If $\vw \leftarrow \vy$ when \vy\ is appended to the list by a CAS at line \ref{CAS-head-SLL}, 
then \var{head} is changed from \vw\ directly to \vy.  
Thus, \vw\ is a node that was appended to the list before \vy.

The only step that changes a \var{left} pointer after it is initialized is a successful
CAS at line \ref{CAS-left-SLL}.
Suppose this step changes \var{\vy->left} from \vx\ to \vw.
Then, \vx\ and \vw\ are the values of local variables \var{next} and \var{newNext} when the CAS occurs.
Prior to the CAS, \vw\ has been reached from \vy\ by following \var{left} pointers,
first at line \ref{init-next} and then at lines \ref{init-new-next}--\ref{advance-new-next}. 
Moreover, before advancing from one node to the next by reading a \var{left} pointer (at line \ref{init-next}, \ref{init-new-next} or \ref{advance-new-next}),
the algorithm first checks (at line \ref{test-sentinel}, \ref{test-next-needed} or \ref{test-new-next-obs}, respectively)
that the former node is not the sentinel,
either explicitly, or by checking that the node's \ts\ field is greater than \var{A[i]}.
Since the invariant holds before the CAS,  \vw\ has been appended and $\vw \prec \vx \prec \vy$.

As mentioned in the previous paragraph, when we set \var{next} at line \ref{init-next}
or set \var{newNext} at lines \ref{init-new-next} or \ref{advance-new-next} by reading
the \var{left} field of a node, we first check that that node is not the sentinel.
Since the invariant holds before reading this \var{left} field, the value we store in \var{next}
or \var{newNext} is non-null.
Similarly, if \var{next} is updated at line \ref{reread-left}, the test at line \ref{test-sentinel}
guarantees that \var{cur} is not the sentinel, so its left pointer is non-null.

We check that all steps that set the \var{cur} variable give it a non-null value.
The \var{cur} variable is set to the value read from \var{head} at line \ref{init-cur}, and
\var{head} is never null.
Lines \ref{advance-cur-1} and \ref{advance-cur-2} copy the value \var{next} into \var{cur},
and this value is non-null, by the induction hypothesis.
Line \ref{advance-cur-3} sets \var{cur} to the value read from the \var{left} field of a node that
is known not to be the sentinel (by the test at line \ref{test-sentinel}), so its \var{left}
field is non-null by the induction hypothesis. 
\end{proof}

It follows from Invariant \ref{sll-order} that all pointers dereferenced in \compact\ are non-null.

The following technical lemma shows that \compact\
updates \vi\ appropriately to carry out the test of whether node \var{next}
is needed.

\begin{lemma}
\label{merge-lemma}
If the test on line \ref{test-next-needed} fails, then
either \vi\ is the index of the last entry of \vA\ or 
$\var{cur->\ts} \leq \var{\vA[i+1]}$.
\end{lemma}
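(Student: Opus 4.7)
The plan is to prove the lemma as a consequence of a stronger loop invariant that holds each time control reaches the test on line~\ref{test-next-needed}: either \vi\ is the index of the last entry of \vA, or $\var{\vA[i+1]} \geq \var{cur->\ts}$. The lemma then follows immediately, since the statement is simply the restatement of this invariant specialized to the case when the test fails.

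To prove the invariant, I would induct on the number of times the inner while-loop on line~\ref{advance-i} has been executed. For the base case, recall that \vi\ is initialized on line~\ref{init-i} to the index of the last entry of~\vA. If the inner loop performs zero decrements during its first execution, then clause~(a) of the invariant holds. Otherwise, consider the final decrement: just before it, we had $\var{\vA[i]} \geq \var{cur->\ts}$, and the loop then decremented \vi\ to $\vi-1$. So immediately after exiting the loop, $\var{\vA[(\vi-1)+1]} = \var{\vA[i_{\text{old}}]} \geq \var{cur->\ts}$, establishing clause~(b).

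For the inductive step, assume the invariant held after the previous execution of line~\ref{advance-i}. Between then and the next execution, \vi\ is not modified, but \var{cur} may be updated at lines~\ref{advance-cur-1}, \ref{advance-cur-2}, or~\ref{advance-cur-3}. In each case, the new value of \var{cur} is reached from the old \var{cur} by following one or more \var{left} pointers. The key observation is then that \var{cur->\ts} is monotonically non-increasing across such updates: by Invariant~\ref{sll-order}, a \var{left} pointer of a node \vy\ points to a node \vw\ with $\vw \prec \vy$, and by precondition~\ref{append-order} this implies $\vw.\ts \leq \vy.\ts$. Thus, at the next execution of line~\ref{advance-i}, either the inner loop performs at least one decrement (in which case the same argument as in the base case establishes clause~(b)), or it performs no decrement. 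In the latter case, \vi\ is unchanged from the previous iteration, so by the inductive hypothesis either (a) still holds, or $\var{\vA[i+1]} \geq \var{cur}_{\text{old}}\var{->\ts} \geq \var{cur}_{\text{new}}\var{->\ts}$, so (b) still holds.

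The main subtlety is justifying the monotonicity of \var{cur->\ts} in the presence of concurrent \compact\ operations that CAS \var{left} pointers (line~\ref{CAS-left-SLL}). Here I would appeal to the second clause of Invariant~\ref{sll-order}, which guarantees that any successful CAS on a \var{left} pointer only replaces it with a pointer to a node of strictly smaller $\prec$-rank, and therefore, by precondition~\ref{append-order}, of smaller-or-equal timestamp. Thus following \var{left} pointers, whether the pointers have been updated by concurrent \compact{}s or not, never increases the timestamp, which is all the inductive step requires.
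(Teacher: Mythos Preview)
Your proof is correct and takes essentially the same approach as the paper's: both trace back to the last time \vi\ was modified (either at initialization, giving clause~(a), or at line~\ref{advance-i}, where the loop condition gives $\var{\vA[i+1]} \geq \var{cur->\ts}$), and then argue that \var{cur->\ts} is non-increasing thereafter via Invariant~\ref{sll-order} and precondition~\ref{append-order}. Your framing as an explicit loop invariant with induction on the number of executions of line~\ref{advance-i} is slightly more structured than the paper's direct argument, but the substance is identical.
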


\begin{proof}
Let $C$ be the configuration after the test on line \ref{test-next-needed} fails.
Consider the configuration $C_i$ after the last time the local variable \vi\ was updated prior to  $C$.
If this was when \vi\ was initialized at line \ref{init-i}, then \vi\ is the index of the last element of \vA, which makes the claim true.
Otherwise, \vi\ was updated at line \ref{advance-i}.
Then, $\var{A[i+1]} \geq \var{cur->\ts}$ held at $C_i$.
Between $C_i$ and $C$, we argue that changes to \var{cur} can only cause \var{cur->\ts} to decrease.
If \var{cur} is updated at line \ref{advance-cur-1} or \ref{advance-cur-2}
it is changed to the value read from the \var{left} pointer of the previous \var{cur} node at line \ref{init-next}.  Similarly, line \ref{advance-cur-3} updates \var{cur} to the value read
from the \var{left} pointer of the previous \var{cur} node.
By Lemma \ref{sll-order} and precondition \ref{append-order},
updating \var{cur} between $C_i$ and $C$ by traversing \var{left} pointers cannot
cause \var{cur->ts} to increase.
Thus, at $C$, we still have $\var{cur->ts} \leq \var{\vA[i+1]}$.
\end{proof}

We say that $(\vA,\vt)$ \emph{is written before} $(\vA',\vt')$ 
if an \AnnScan\ object $(\vA,\vt)$ is stored in \GlobalAnnScan\ before $(\vA',\vt')$.
The following lemma is proved using the fact that creating the earlier copy
completes before creating the later copy begins, because of the way
we use CAS to update \GlobalAnnScan.
This, in turn, guarantees that once a node becomes unneeded, it is never needed 
by a later \AnnScan\ pair.

\begin{lemma}
\label{order-copies}
Suppose $(\vA_1,\vt_1)$ is written before $(\vA_2,\vt_2)$. 
For any node \vx, if \vx\ is $needed(\vA_2,\vt_2)$ then \vx\ is $needed(\vA_1,\vt_1)$.
\end{lemma}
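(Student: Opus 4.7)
The plan is to reduce the lemma to two structural observations about the scanAnnounce protocol and then case-split on which clause of the definition of $needed$ applies to $\vx$.

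First I would argue that $\vt_2 \geq \vt_1$. Since $(\vA_1,\vt_1)$ is installed before $(\vA_2,\vt_2)$, the successful CAS that installs $(\vA_2,\vt_2)$ must read its $\var{oldScan}$ from \GlobalAnnScan\ at a moment when the stored AnnScan is either $(\vA_1,\vt_1)$ or some later AnnScan (pointer values are unique, so no ABA reuse occurs). Hence the read of $\vt_2$ from \var{GlobalTimeStamp}, which occurs in code order after the read of $\var{oldScan}$, happens strictly after the CAS that installed $(\vA_1,\vt_1)$, which itself is strictly after $\vt_1$ was read. Since \var{GlobalTimeStamp} is non-decreasing, $\vt_2 \geq \vt_1$. (If $(\vA_2,\vt_2)$ was obtained via the fallback $\var{return GlobalAnnScan}$ path, some other process installed it via CAS and the argument applies to that process.) Second, I would show that every $a \in \vA_2$ is either in $\vA_1$ or at least $\vt_1$. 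Fix the entry $\Announce[j]$ whose read by $\vA_2$ yielded $a$, and let $T_w$ be the time of the most recent write to $\Announce[j]$ before that read. If $T_w$ is after $\vt_1$ was read, then because an announcement writes the current value of \var{GlobalTimeStamp}, $a \geq \vt_1$. Otherwise $T_w$ precedes the read of $\vt_1$, hence precedes $\vA_1$'s read of $\Announce[j]$; since no later write occurred before $\vA_2$'s read, $\Announce[j]$ still held $a$ when $\vA_1$ read it, so $a \in \vA_1$.

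With these two facts I case-split on why $\vx$ is $needed(\vA_2,\vt_2)$. If $\vx.\ts > \vt_2$, then $\vx.\ts > \vt_1$, so clause (1) gives $needed(\vA_1,\vt_1)$. Suppose $\vx$ is the last appended node with $\vx.\ts \leq \vt_2$. If $\vx.\ts > \vt_1$, clause (1) applies; otherwise $\vx.\ts \leq \vt_1 \leq \vt_2$, and any later-appended $\vy$ with $\vy.\ts \leq \vt_1$ would satisfy $\vy.\ts \leq \vt_2$, contradicting $\vx$'s maximality, so $\vx$ is the last appended node with timestamp at most $\vt_1$ and clause (2) applies. Finally, suppose $\vx$ is the last appended with $\vx.\ts \leq a$ for some $a = \vA_2[i]$, and WLOG $\vx.\ts \leq \vt_1$. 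By the second observation either $a \in \vA_1$, so clause (3) applies with respect to $\vA_1$ directly, or $a \geq \vt_1$; in the latter subcase the same maximality argument (using $\vy.\ts \leq \vt_1 \leq a$) shows $\vx$ is the last appended with $\vts \leq \vt_1$, and clause (2) applies.

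The main obstacle is the second observation: one must rule out the scenario in which $\Announce[j]$ takes value $a$, is overwritten, and then later rewritten to $a$, which would break a naive monotonicity-of-the-cell argument. Pinning the analysis to \emph{the most recent write before $\vA_2$'s read} rather than to ``the first time $a$ appeared'' sidesteps this issue, since that last write is the one whose timing relative to $\vt_1$'s read and $\vA_1$'s read is forced by non-overlap of the scanAnnounce read intervals.
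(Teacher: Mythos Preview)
Your proposal is correct and follows essentially the same approach as the paper: establish that $\vt_1 \leq \vt_2$ and that every value in $\vA_2$ smaller than $\vt_1$ also appears in $\vA_1$, then case-split on the clause of $needed$ that applies to $\vx$. Your argument for the second observation is in fact more carefully spelled out than the paper's (which just invokes the atomic-copy assumption in one line), and your explicit handling of the ``last write before $\vA_2$'s read'' cleanly addresses the reuse concern you flag at the end.
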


\begin{proof}
Since every CAS on \GlobalAnnScan\ tries to install a newly created object, there is no ABA problem
on \GlobalAnnScan.  Thus, \GlobalAnnScan\ is not updated between the time it is read
on line \ref{read-globalannscan} and the time a successful CAS is applied to it on line \ref{CAS-globalannscan}.
Thus, $(\vA_1,\vt_1)$ is stored in \GlobalAnnScan\ before the process that stores $(\vA_2,\vt_2)$
begins reading the values $(\vA_2,\vt_2)$ from the \Announce\ array and the global timestamp.
Thus, $t_1\leq t_2$. 
Moreover, if some value $\vt < \vt_1$ appears in $\vA_2$, then it must also appear in $\vA_1$
since we assume values written into the \Announce\ array are atomically copied from the global
timestamp.

Assume \vx\ is $needed(\vA_2,\vt_2)$.  We consider several cases.

If $\vx.\ts > \vt_1$, then \vx\ is also $needed(\vA_1,\vt_1)$, by definition.

If $\vx.\ts \leq \vt_1$ and \vx\ is the last appended node with timestamp at most $\vt_2$,
then it is also the last appended node with timestamp at most $\vt_1$ since $\vt_1 \leq \vt_2$.
Thus, \vx\ is $needed(\vA_1,\vt_1)$.

Otherwise, $\vx.\ts \leq \vt_1$ and \vx\ is the last appended node with timestamp at most $\vA_2[\vi]$
for some \vi.
If $\vA_2[\vi] \geq \vt_1$, then \vx\ is also the last appended node with timestamp at most $\vt_1$,
and is therefore $needed(\vA_1,\vt_1)$.
If $\vA_2[\vi] < \vt_1$ then the value $\vA_2[\vi]$ must also appear in $\vA_1$.
So, again \vx\ is $needed(\vA_1,\vt_1)$.
\end{proof}

The following key lemma describes how a \var{\compact(\vA,\vt,*)} works:
it traverses nodes that are $needed(\vA,\vt)$ and splices out those nodes that are not $needed(\vA,\vt)$.
The proof is necessarily quite technical because, even though the set of needed nodes 
can only be reduced when \GlobalAnnScan\ is updated, there many \compact\ routines
simultaneously traversing a version list, whose arguments are $(\vA,\vt)$ pairs read from 
\GlobalAnnScan\ at different times--some out of date and some more current.

\begin{lemma}
\label{remove-unneeded}
If an invocation of \var{\compact(\vA,\vt,\vh)} sets its local variable \var{cur}
to a node $\vx\neq\vh$, then \vx\ is $needed(\vA,\vt)$.
If an invocation of \var{\compact(\vA,\vt,\vh)} performs a successful CAS that stores
\var{v} in $\var{\vx->left}$, then  \vx\ is $needed(\vA,\vt)$ (unless $\vx=\vh$) and \var{v} is $needed(\vA,\vt)$ and
for all \vw\ satisfying $\var{v} \prec \vw \prec \vx$, \vw\ is not $needed(\vA,\vt)$.
\end{lemma}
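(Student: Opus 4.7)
I would prove the lemma by induction on the steps of this particular invocation of \var{compact(A,t,h)}, maintaining the joint invariant that (i) whenever the local variable \var{cur} is set to some node $\vx\neq\vh$, $\vx$ is $needed(\vA,\vt)$, and (ii) whenever control reaches the CAS at line~\ref{CAS-left-SLL}, the current value of \var{newNext} is $needed(\vA,\vt)$ and every node strictly $\prec$-between \var{newNext} and \var{cur} is not $needed(\vA,\vt)$. The statement of the lemma then falls out of this invariant.

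For part~(i), the non-trivial assignments to \var{cur} are lines~\ref{advance-cur-1}, \ref{advance-cur-2}, and~\ref{advance-cur-3}. At line~\ref{advance-cur-2} the test at line~\ref{test-next-needed} succeeded, so $\var{\vA[\vi]}\geq\var{next}.\ts$; combining this with Lemma~\ref{merge-lemma} (which gives either that $\vi$ is the last index of \vA{} or $\var{\vA[\vi+1]}\geq\var{cur}.\ts$) and precondition~\ref{append-order} (nodes appended later have larger timestamps), \var{next} is the last appended node whose timestamp is $\leq\var{\vA[\vi]}$, so needed. At line~\ref{advance-cur-3}, a successful CAS has just stored \var{newNext} into \var{cur->left}, and by the induction hypothesis (ii) \var{newNext} is needed. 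At line~\ref{advance-cur-1}, $\var{cur}.\ts>\vt$, and since \var{next} is reached by following \var{cur->left}, either $\var{next}.\ts>\vt$ as well (then needed by clause (1) of the definition), or \var{next} is the last appended node with timestamp $\leq\vt$, hence needed by clause (2); the latter uses precondition~\ref{append-order} together with the fact that all strictly $\prec$-larger appended nodes either equal \var{cur} or were skipped past, which happens only via \var{left} pointers that could have been re-routed only through earlier CAS steps.

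For part~(ii), \var{newNext} is reached from \var{next} by following \var{left} pointers while $\var{\vA[\vi]}<\var{newNext}.\ts$, stopping precisely when $\var{newNext}.\ts\leq\var{\vA[\vi]}$. By the same merge-style argument used for line~\ref{advance-cur-2}, \var{newNext} is needed. For nodes $\vw$ with $\var{newNext}\prec\vw\prec\var{cur}$, I split into two subcases. If $\vw$ was visited on the traversal from \var{next} down to \var{newNext}, then we saw $\vw.\ts>\var{\vA[\vi]}$, while precondition~\ref{append-order} combined with Lemma~\ref{merge-lemma} gives $\vw.\ts\leq\var{cur}.\ts\leq\var{\vA[\vi+1]}$ (if $\vi$ is not the last index) and $\vw.\ts\leq\vt$; so $\vw$ is neither the last appended node with ts $\leq\var{\vA[j]}$ for any $j$ (it is strictly dominated by \var{cur}) nor has ts exceeding \vt, making it not needed.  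If $\vw$ was previously spliced out by some other \compact invocation using an \AnnScan{} pair $(\vA',\vt')$, then $\vw$ was not $needed(\vA',\vt')$ when that CAS occurred.

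The principal obstacle is the second subcase of part~(ii) when $(\vA',\vt')$ was written to \GlobalAnnScan{} \emph{after} $(\vA,\vt)$, so that Lemma~\ref{order-copies} does not immediately apply in the needed direction. My plan to handle this is to strengthen the induction with a separate invariant that for every configuration and every $\vu\leftarrow\vv$ on the list, every node strictly $\prec$-between $\vu$ and $\vv$ has already been spliced out by a \compact whose \AnnScan{} pair was written no later than the most recent pair used to re-route this particular pointer; a careful chain of Lemma~\ref{order-copies} applications then propagates the ``not needed'' property all the way back to our $(\vA,\vt)$. Because nodes spliced out by a later \compact were not needed w.r.t.\ its arguments, and our $(\vA,\vt)$ was written first, any not-needed status established against a later pair can, via successive applications of Lemma~\ref{order-copies}, be transferred backwards along the chain of \AnnScan{} pairs written to \GlobalAnnScan, yielding not-needed w.r.t.\ $(\vA,\vt)$.
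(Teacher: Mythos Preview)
Your proposal has a real gap, and it appears in both parts. Lemma~\ref{order-copies} says that if $(\vA_1,\vt_1)$ is written before $(\vA_2,\vt_2)$ then $needed(\vA_2,\vt_2)\Rightarrow needed(\vA_1,\vt_1)$; equivalently, ``not needed'' propagates only \emph{forward} in time. Your final paragraph proposes to transfer not-needed \emph{backward} along the chain of \AnnScan\ pairs, which the lemma simply does not give you. The same underlying issue already bites in part~(i): when you follow a \var{left} pointer from \var{cur} to \var{next}, that pointer may have been redirected by another \compact, so \var{next} need not be the immediate $\prec$-predecessor of \var{cur}; your claim that \var{next} is ``the last appended node with timestamp $\leq\vA[\vi]$'' (or $\leq\vt$, for line~\ref{advance-cur-1}) is therefore unjustified. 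Your treatment of line~\ref{advance-cur-3} has the same flaw: the inner loop may exit via the \var{break} at line~\ref{test-overshoot}, in which case \var{cur->left} was set by a different \compact, not by your own CAS.

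The missing ingredient is a \emph{witness-node} argument, and to deploy it the induction must range over all steps of the execution---so that the hypothesis applies to \emph{other} \compact{}s' CAS steps---not just over this one invocation. Suppose some \var{\compact($\vA'$,$\vt'$,$\vh'$)} previously stored $\vu$ in \var{\vz->left}, skipping over a node $\vy$ with $\vu\prec\vy\prec\vz$. The global induction hypothesis yields that $\vy$ is not $needed(\vA',\vt')$ \emph{and} that the endpoints $\vu$ and $\vz$ (unless $\vz=\vh'$) \emph{are} $needed(\vA',\vt')$. Now compare one of those endpoints' needed status for $(\vA',\vt')$ versus for your $(\vA,\vt)$: when they differ, Lemma~\ref{order-copies} forces a specific temporal order between the two \AnnScan\ pairs, and only then can you transfer needed or not-needed in the legitimate direction. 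In part~(i) the witness is the ``last appended node with timestamp at most $\vA[\vi]$'' (respectively $\vt$) itself; it is $needed(\vA,\vt)$ by definition but, having been skipped, not $needed(\vA',\vt')$, so $(\vA,\vt)$ is earlier and ``$\vu$ is needed'' pulls back to $(\vA,\vt)$. In part~(ii) the witness is the last node your own traversal visited before reaching \var{newNext}: you have already argued it is not $needed(\vA,\vt)$, while the global hypothesis says the other \compact\ found it $needed(\vA',\vt')$; hence $(\vA',\vt')$ is earlier and ``$\vw$ is not needed'' pushes forward to $(\vA,\vt)$. Your chain-of-pairs plan never supplies such a witness and therefore cannot orient the application of Lemma~\ref{order-copies}.
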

\begin{proof}
We prove the lemma by induction on steps:  we assume the lemma is true in a prefix of
an execution and show that it remains true when an additional step $s$ is appended to the prefix.
We first prove a couple of technical claims that will be used several times in the induction step.

\begin{claim}
\label{needed-claim}
If $\vx \leftarrow \vz$ at some time before $s$ and either 
$\vx.\ts \leq \vA[\vj] < \vz.\ts$ for some 
\vj\ or $\vz.\ts > \vt$, then \vx\ is $needed(\vA,\vt)$.
\end{claim}

\begin{proof}[Proof of Claim]
We consider several cases.
\begin{enumerate}
\item
If $\vx.\ts > \vt$, then \vx\ is $needed(\vA,\vt)$ by definition.
\item
If \vx\ is the last appended node whose timestamp is at most \vt, then \vx\ is $needed(\vA,\vt)$ by definition.
\item
\label{non-last}
Suppose $\vx.\ts \leq \vt < \vz.\ts$ but \vx\ is not the last node whose timestamp is at most \vt.
Let \vy\ be the last appended node whose timestamp is at most \vt.
By definition, \vy\ is $needed(\vA,\vt)$.
Since $\vz.\ts > \vt$, we have
$\vx \prec \vy \prec \vz$ by precondition \ref{append-order}.
By the hypothesis of the claim, some invocation \var{\compact($\vA'$,$\vt'$)} set 
\var{\vz->left} to \vx\ before $s$.
By the induction hypothesis, 
\vx\ is $needed(\vA',\vt')$ but \vy\ is not $needed(\vA',\vt')$.
Since \vy\ is $needed(\vA,\vt)$ but not $needed(\vA',\vt')$,
$(\vA,\vt)$ is written before $(\vA',\vt')$, by Lemma \ref{order-copies}.  
Since \vx\ is $needed(\vA',\vt')$, \vx\ is also $needed(\vA,\vt)$, again by Lemma \ref{order-copies}.
\item
If \vx\ is the last appended node whose timestamp is at most \vA[\vj] (for some \vj), then \vx\ is $needed(\vA,\vt)$, by definition.
\item
Otherwise, $\vx.\ts \leq \vA[\vj] < \vz.\ts$ for some \vj, but \vx\ is not the last appended node with timestamp at most $\vA[\vj]$.  This case is very similar to Case \ref{non-last}.
Let \vy\ be the last appended node whose timestamp is at most $\vA[\vj]$.
By definition, \vy\ is $needed(\vA,\vt)$.
Since $\vz.\ts > \vA[\vj]$, we have
$\vx \prec \vy \prec \vz$ by precondition \ref{append-order}.
By the hypothesis of the claim, some invocation \var{\compact($\vA'$,$\vt'$)} set \var{\vz->left} to \vx\ before $s$.
By the induction hypothesis, 
\vx\ is $needed(\vA',\vt')$ but \vy\ is not $needed(\vA',\vt')$.
Since \vy\ is $needed(\vA,\vt)$ but not $needed(\vA',\vt')$,
$(\vA,\vt)$ is written before $(\vA',\vt')$, by Lemma \ref{order-copies}.  
Since \vx\ is $needed(\vA',\vt')$, \vx\ is also $needed(\vA,\vt)$, again by Lemma \ref{order-copies}.
\end{enumerate}
\end{proof}

\begin{claim}
\label{CAS-needed}
If $s$ is an execution of line \ref{CAS-left-SLL} or \ref{advance-cur-3},
let \var{v} and \vx\ be the nodes stored in \var{newNext} and \var{cur}, respectively, in the configuration before $s$.
Then $\var{v} \prec \vx$ and \var{v} is $needed(\vA,\vt)$ and for all \vw\ satisfying $\var{v} \prec \vw \prec \vx$,
\vw\ is not $needed(\vA,\vt)$.
\end{claim}

\begin{proof}[Proof of Claim]
The node \var{v} was reached from \vx\  by following \var{left} pointers,
at line \ref{init-next} and then at lines \ref{init-new-next}--\ref{advance-new-next}.
By Invariant \ref{sll-order}, $\var{v}\prec \vx$.
Let $\vx=\vw_1,\vw_2,\ldots,\vw_k=\var{v}$ be the sequence of nodes traversed to get from \vx\ to \var{v}.
By the tests on lines \ref{test-next-needed} and \ref{test-new-next-obs}, $\vw_{k}.\ts \leq \vA[\vi] < \vw_{k-1}.\ts$.
Since $\vw_k$ was read from $\vw_{k-1}$ before $s$, Claim \ref{needed-claim} implies that
$\var{v}=\vw_k$ is $needed(\vA,\vt)$.

It remains to show that all nodes \vw\ satisfying $\var{v} \prec \vw \prec \vx$ are not $needed(\vA,\vt)$.
Since the test at line \ref{cur-too-new} failed, $\vx.\ts \leq \vt$.
So a node $\vw \prec\vx$ cannot be the last appended node with timestamp at most $\vt$,
nor can \vw\ have a timestamp greater than $\vt$.
To complete the proof that \vw\ is not $needed(\vA,\vt)$, we must show that it is not the
last appended node with timestamp at most \var{\vA[\vj]} for any \vj.
We do this in two parts, first considering $\vj > \vi$ and then $\vj \leq \vi$.
(Recall that $\var{A[0]}=-1$.)

Since \vx\ is the value of \var{cur} when the test at line \ref{test-next-needed} fails,
it follows from Lemma \ref{merge-lemma} that 
either $\vi$ is the index of the last element of $\vA$ or $\var{\vx->\ts} \leq \vA[\vi+1]$.
Thus, any node $\vw \prec\vx$ cannot be the last appended
node whose timestamp is at most $\vA[j]$ for any $\vj > \vi$, since then
\vi\ would not be the last index and so $\vx.\ts \leq \vA[\vi+1] \leq \vA[\vj]$.

To derive a contradiction, suppose that 
there is some \vw\ such that $\var{v} \prec \vw\prec \vx$ and
\vw\ is $needed(\vA,\vt)$.
Since we have already eliminated all other possibilities, this means that
for some $j\leq i$, 
\vw\ is the last appended node whose timestamp is at most $\vA[\vj]$.
Since $\vw.\ts \leq \vA[\vj] \leq \vA[\vi] < \vw_{k-1}.\ts$, we must have $\var{v} = \vw_k \prec \vw \prec \vw_{k-1}$.
(In particular, this means that $\vw_{k-1}$ cannot be $needed(\vA,\vt)$.)
Since $\vw_k$ was read before $s$ from $\vw_{k-1}$\var{->left},
the induction hypothesis implies that
some \var{\compact($\vA'$,$\vt'$)} did a CAS that set $\vw_{k-1}$\var{->left} to $\vw_k$ and
$\vw_{k-1}$ is $needed(\vA',\vt')$ but $\vw$ is not $needed(\vA',\vt')$.
Since $\vw_{k-1}$ is $needed(\vA',\vt')$ but not $needed(\vA,\vt)$,
Lemma \ref{order-copies} implies that the \AnnScan\ object $(\vA',\vt')$ is written before $(\vA,\vt)$.
Since \vw\ is not $needed(\vA',\vt')$, it follows from Lemma \ref{order-copies}
that \vw\ is not $needed(\vA,\vt)$.  This contradiction completes the proof of the claim.
\end{proof}

Now we are ready to prove that step $s$ preserves the truth of the lemma.
We need only consider 
steps $s$ that update the \var{cur} variable of a \compact\ routine
or perform a successful CAS at line \ref{CAS-left-SLL}.

Now, suppose $s$ executes line \ref{advance-cur-1} to change the \var{cur} pointer from node \vz\ to \vx.
Then \vx\ was read from \var{\vx->left} at line \ref{init-next}.
By the test at line \ref{cur-too-new}, $\vz.\ts > \vt$.
So, by Claim \ref{needed-claim}, \vx\ is $needed(\vA,\vt)$.

Now, suppose $s$ executes line \ref{advance-cur-2} to change the \var{cur} pointer from node \vz\ to \vx.
Then, \vx\ was read from \var{\vx->left} at line \ref{init-next}.
Since the while loop at line \ref{advance-i} terminated, we have $\vz.\ts > \vA[i]$.
By the test at line \ref{test-next-needed}, we have $\vx\leq\vA[i]$.
So, by Claim \ref{needed-claim}, \vx\ is $needed(\vA,\vt)$.

Now, suppose $s$ performs  a successful CAS at line \ref{CAS-left-SLL}  
that changes \var{x->left} to \var{v}.
Then \vx\ and \var{v} are the values of the local variables \var{cur} and \var{newNext}, respectively, when the CAS occurs.
Since \vx\ is stored in \var{cur}, \vx\ is $needed(\vA,\vt)$, by the induction hypothesis.
Claim \ref{CAS-needed} says that \var{v} is $needed(\vA,\vt)$ and any node \vw\ satisfying 
$\var{v} \prec \vw \prec \vx$ is not $needed(\vA,\vt)$.

Finally, suppose $s$ executes line \ref{advance-cur-3} to change the \var{cur} pointer from node \vz\ to \vw.
Let \vy\ be the node stored in \var{newNext}.
By Claim \ref{CAS-needed}, $\vy \prec \vz$ and \vy\ is $needed(\vA,\vt)$.
We argue that, prior to $s$, a node $\vx \preceq \vy$ was stored in \var{\vz->left} by
considering two cases.
If the loop at lines \ref{CAS-left-SLL}--\ref{test-overshoot} terminated
because the CAS at line \ref{CAS-left-SLL} succeeded, then that CAS stored \vy\ in \var{\vz->left}.
Otherwise, the test at line \ref{test-overshoot} was true, so the node \vx\ read
from \var{\vz->left} on line \ref{reread-left} satisfied
$\vx.\ts \leq \vy.\ts$.
Since the test at line \ref{cur-too-new} failed, $\vz.\ts \leq \vt$.
Since $\vy\prec\vz$, $\vy.\ts \leq \vz.\ts \leq \vt$ by precondition \ref{append-order}.
Since \vy\ is $needed(\vA,\vt)$, it must be the last appended node with its timestamp.
Therefore, $\vx.\ts \leq \vy.\ts$ implies that $\vx \preceq \vy$.
Thus, in either case, a node $\vx \preceq \vy$ was stored in \var{\vz->left} before $s$.
By Lemma \ref{sll-order}, the node \vw\ that $s$ reads in \var{\vz->left} satisfies $\vw \preceq \vx$.
We show that \vw\ is $needed(\vA,\vt)$ by considering two cases.
If $\vw = \vy$, we already know that \vy\ is $needed(\vA,\vt)$.
If $\vw \neq \vy$, we have $\vw \preceq \vx \preceq \vy \prec \vz$.
Assume the CAS that stored \vw\ in \var{x->left} was performed by a \var{\compact($\vA'$,$\vt'$)}.
By the induction hypothesis, \vy\ is not $needed(\vA',\vt')$ since $\vw \prec \vy \prec \vz$
and a CAS in \var{\compact($\vA'$,$\vt'$)} stored \vw\ in \var{\vz->left}.
Since \vy\ is $needed(\vA,\vt)$, Lemma \ref{order-copies} implies that
$(\vA,\vt)$ was written before $(\vA',\vt')$.
Since \vw\ is $needed(\vA',\vt')$, it follows from Lemma \ref{order-copies} that 
\vw\ is $needed(\vA,\vt)$.
\end{proof}

We linearize successful \tryAppend\ operations when they CAS the \head\ pointer of the list.
We linearize a \search\ operation when it reads the \var{head} at line \ref{init-x}.
The following theorem guarantees that every \search\ operation returns the correct
response.

\begin{theorem}
If a \var{search(k)} operation returns the value from  the last node  appended to the list
before the linearization point of the \search\ whose timestamp is at most \vk.
\end{theorem}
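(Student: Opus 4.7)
Let $T_1$ denote the linearization point of the \search\ (the step executing line~\ref{init-x}), let $\vh$ be the node read into $\vx$ at $T_1$, and let $\var{N}$ be the last node appended to the list before $T_1$ whose timestamp is at most $\vk$. Such an $\var{N}$ always exists because the sentinel (with $\ts = -\infty$) is present from list creation. The goal is to show the search returns $\var{N.val}$. If $\vh.\ts \le \vk$, then $\vh$ is itself the most recent append before $T_1$, so $\vh = \var{N}$, the while loop exits immediately, and the search returns $\vh.\var{val} = \var{N.val}$. So I assume $\vh.\ts > \vk$ for the remainder.

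The first substantive step is to observe, using precondition~\ref{append-order}, that every node appended at any time $\ge T_1$ has timestamp $\ge \vh.\ts > \vk$, so $\var{N}$ remains the last appended node with $\ts \le \vk$ at every time $\ge T_1$. Next I would show $\var{N}$ is never spliced out of the list by any \compact. By precondition~\ref{searches-announced}, $\vk \in \Announce$ at every time while the search is in progress, so by precondition~\ref{announce-pre} every \compact$(\vA,\vt,*)$ relevant to the execution must satisfy either $\vk \in \vA$ or $\vt \le \vk$. In either case, $\var{N}$ is $needed(\vA,\vt)$: if $\vk \in \vA$, then $\var{N}$ is the last appended node with $\ts \le \vA[\vj] = \vk$; if $\vt \le \vk$ and $\var{N}.\ts > \vt$, the first clause of the definition applies; and if $\vt \le \vk$ and $\var{N}.\ts \le \vt$, then any node with $\ts \le \vt$ also has $\ts \le \vk$ and is therefore $\preceq \var{N}$, making $\var{N}$ the last appended with $\ts \le \vt$. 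By Lemma~\ref{remove-unneeded}, no successful CAS at line~\ref{CAS-left-SLL} ever stores a pointer that skips past $\var{N}$.

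I would then prove by induction on loop iterations the invariant $\var{N} \preceq \vx \preceq \vh$. The base case $\vx = \vh$ is immediate, and $\var{N} \prec \vh$ since $\var{N}.\ts \le \vk < \vh.\ts$. For the inductive step, assume $\var{N} \prec \vx$; every node $\vy$ with $\var{N} \prec \vy \preceq \vh$ has $\vy.\ts > \vk$ (by the definition of $\var{N}$ together with the first observation), so $\vx.\ts > \vk$, the loop continues, and the search reads some $\var{v}$ from $\var{\vx->left}$. The value $\var{v}$ was either installed when $\vx$ was appended (in which case it equals whatever was $\head$ then, which is $\succeq \var{N}$) or was written by a CAS at line~\ref{CAS-left-SLL} of some \compact$(\vA',\vt',*)$. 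In the latter case Lemma~\ref{remove-unneeded} gives $\var{v} \prec \vx$ with $\var{v}$ $needed(\vA',\vt')$ and no node strictly between $\var{v}$ and $\vx$ is $needed(\vA',\vt')$; combined with the previous paragraph ($\var{N}$ is $needed(\vA',\vt')$) and $\var{N} \prec \vx$, we must have $\var{v} \succeq \var{N}$. Invariant~\ref{sll-order} ensures the $\prec$-position of $\vx$ strictly decreases each iteration, so the loop terminates, and by the invariant it terminates at a node in $[\var{N},\vh]$ with $\ts \le \vk$, which can only be $\var{N}$ itself. Hence the search returns $\var{N.val}$.

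The main obstacle is this last step: concurrent \compact\ operations can freely rewire \var{left} pointers underneath the search, so a careful argument is needed to ensure that no such rewrite can make $\vx$ jump past $\var{N}$. This relies on tightly combining (i)~the monotonicity of Invariant~\ref{sll-order}, (ii)~the ``CAS only splices out a block of unneeded nodes'' clause of Lemma~\ref{remove-unneeded}, and (iii)~the global fact, established in the second paragraph from the preconditions, that $\var{N}$ is needed for every $(\vA,\vt)$ pair ever passed to \compact\ that is covered by the preconditions.
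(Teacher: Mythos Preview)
Your proof is correct, but it takes a different route from the paper's.

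The paper argues by contradiction and looks only at the \emph{final} edge traversed by the \search: if the returned node $\vx$ was read from $\var{\vz->left}$ with $\vx.\ts \leq \vk < \vz.\ts$, and $\vx$ is not the correct node $\var{N}$, then $\vx \prec \var{N} \prec \vz$, so the pointer $\vz\!\to\!\vx$ must have been installed by a CAS of some \var{\compact($\vA$,$\vt$,*)}. Lemma~\ref{remove-unneeded} then says $\var{N}$ is not $needed(\vA,\vt)$, and the paper derives a contradiction from the preconditions in exactly the case analysis you carry out in your second paragraph. You instead maintain the global invariant $\var{N} \preceq \vx$ throughout the traversal, showing at every step that no \compact\ can have installed a \var{left} pointer that skips past $\var{N}$.

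The two approaches use the same ingredients---Lemma~\ref{remove-unneeded} together with preconditions~\ref{searches-announced} and~\ref{announce-pre}---and your case analysis on ``$\vk \in \vA$ versus $\vt \leq \vk$'' is essentially the paper's, just run in the forward direction rather than as a contradiction. The paper's argument is shorter because it touches only one edge; your invariant-based version is more explicit about why the search can never overshoot $\var{N}$ at any intermediate step, and it makes the role of the ``$\var{N}$ is $needed(\vA,\vt)$ for every relevant \compact'' fact more visible. One small point worth tightening in your write-up: when you say ``installed when $\vx$ was appended \ldots\ which is $\succeq \var{N}$,'' the justification is that the initial value of $\var{\vx->left}$ is the immediate $\prec$-predecessor of $\vx$, and since $\var{N} \prec \vx$ by the inductive hypothesis, that predecessor is $\succeq \var{N}$; this is correct but deserves a sentence.
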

\begin{proof}
\here{This proof could use some polish}
The \search\ returns the value from the node stored in the local variable \vx.

If \vx\ is the value read from \var{head} at line \ref{init-x}, then \vx\ is the last node appended to
the list before the linearization point of the \search, and its timestamp is at most \vt\ since
the test at line \ref{test-x} failed.

Otherwise, there is a configuration $C$ during the search
just before \vx\ is read from the \var{left} pointer of some node \vz\ at line \ref{advance-x}.
By the test at line \ref{test-x}, we know that $\var{\vx->\ts} \leq \vt < \var{\vz->\ts}$.
To derive a contradiction, suppose \vx\ is not the last appended node whose timestamp is at most \vk.
Let \vy\ be the last appended node whose timestamp is at most \vk.
Then, $\vx \prec \vy \prec \vz$.

By Lemma \ref{remove-unneeded}, there is an invocation of \var{compact(\vA,\vt)}
before $C$ such that \vy\ is not $needed(\vA,\vt)$.
We consider two cases.

If $\vk\in\vA$, then \vy\ is not the last appended node whose timestamp is at most \vk, since \vy\ is
not $needed(\vA,\vt)$.  This is the desired contradiction.

Otherwise, $\vk\notin\vA$.
At $C$, \vk\ is the in the \var{Announce} array, by precondition \ref{searches-announced}.
By precondition \ref{announce-pre}, this means that $\vk\geq \vt$.
Since \vy\ is not $needed(\vA,\vt)$, $\var{\vy->\ts} \leq \vt$,
and there is a node appended after \vy\ whose timestamp is also at most \vt.
Since $\vk\geq\vt$, there is another node appended after \vy\ whose timestamp is at most \vk.
Again, this is a contradiction.
\end{proof}

\subsection{Analysis}

We now consider progress properties.
We use Invariant \ref{sll-order} to prove wait-freedom and that \search\ does not fall off the end of the list.

\begin{lemma}
The \compact\ and \search\ routines are wait-free.
A \search\ never sets its local variable \vx\ to null.
\end{lemma}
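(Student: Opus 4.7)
The plan is to establish wait-freedom by bounding each routine's step count in terms of the list's state when the routine is invoked, leveraging Invariant \ref{sll-order} which says that every change to a \var{left} pointer (whether initialization or via the CAS at line \ref{CAS-left-SLL}) moves it to a strictly $\prec$-smaller node.

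First I would handle \search\ and the non-nullness claim simultaneously. The local \vx\ is initialized to \var{head}, which always points to a node, and each subsequent update at line \ref{advance-x} reads $\vx$\var{->left} where $\vx$ was a non-sentinel node (otherwise the guard $\vx$\var{->ts} $> \vk$ would be false since the sentinel has timestamp $-\infty$). By Invariant \ref{sll-order}.1, the \var{left} pointer of any appended non-sentinel node points to a previously appended node, so \vx\ is never null. Moreover, each assignment moves \vx\ to a strictly $\prec$-smaller node, both because initial \var{left} pointers respect $\prec$ and because CAS updates preserve this (Invariant \ref{sll-order}.2). Since $\prec$ restricted to the set of nodes appended up to any fixed configuration is finite and its minimum is the sentinel, the loop must reach a node whose timestamp is $\leq \vk$ within a bounded number of iterations.

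For \compact, I would argue each loop terminates in turn. The outer \textbf{while} loop advances \var{cur} strictly backwards in $\prec$ at lines \ref{advance-cur-1}, \ref{advance-cur-2}, and \ref{advance-cur-3}, so by the same argument used for \search\ it visits each node at most once before reaching the sentinel. The inner loop at line \ref{advance-i} only decrements \vi, and the newNext-advancement loop at lines \ref{test-new-next-obs}--\ref{advance-new-next} walks backwards through \var{left} pointers exactly as \search\ does, so both terminate.

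The main obstacle is the CAS retry loop at lines \ref{CAS-left-SLL}--\ref{test-overshoot}, where \var{newNext} is held fixed while \var{next} is re-read after each failed CAS. The key observation is that every failed iteration means some other \compact\ performed a successful CAS at line \ref{CAS-left-SLL} on \var{cur->left}, and by Invariant \ref{sll-order}.2 each such CAS can only move \var{cur->left} strictly backwards in $\prec$. Therefore the sequence of values re-read into \var{next} is strictly $\prec$-decreasing. Let $N$ be the number of nodes that have been appended when this loop is entered; then within at most $N$ failed iterations either a CAS succeeds or \var{cur->left} has been pulled back to a node with timestamp $\leq$ \var{newNext->ts}, at which point the \textbf{break} at line \ref{test-overshoot} fires. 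This yields a step bound depending only on the list state observed by the invoking process, establishing wait-freedom.
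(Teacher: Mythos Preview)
Your proposal is correct and follows essentially the same approach as the paper's proof: both establish wait-freedom by showing that each loop variable (\vx\ in \search; \var{cur}, \var{newNext}, and \var{cur->left} in \compact) moves strictly backward in the well-founded order~$\prec$ via Invariant~\ref{sll-order}, and both handle the CAS retry loop by observing that each failed CAS witnesses a successful concurrent CAS that strictly decreases \var{cur->left}. The only point where the paper is slightly more explicit is the termination of the \vi-loop at line~\ref{advance-i}, where it invokes the padding $\vA[0]=-1$ together with $\var{cur}\neq$ sentinel to guarantee the guard becomes false before \vi\ goes out of bounds; your ``only decrements \vi'' is correct in spirit but elides this detail.
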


\begin{proof}
We first show that \compact\ is wait-free.
Each iteration of the outer loop updates \var{cur} at line \ref{advance-cur-1}, \ref{advance-cur-2}
or \ref{advance-cur-3} to a value read from \var{cur->left} at line \ref{init-next} or \ref{advance-cur-3}.
Consider the sequence of values $\vx_1, \vx_2, \ldots$ assigned to the \var{cur} 
variable of the \compact\ routine.
By Invariant \ref{sll-order}, all the values are non-null and we have $\vx_1 \succ \vx_2 \succ \cdots$.
Since there are finitely many nodes that precede $\vx_1$ in the order $\prec$, the loop will terminate.

We must also check that the inner loops terminate.
Due to the test at line \ref{test-sentinel}, \var{cur} is not the sentinel node when
the loop at line \ref{advance-i} is executed.  Thus, $\var{cur->\ts}\geq 0$, so when the index
\vi\ reaches 0, $\vA[0] = -1 < \var{cur->\ts}$ and the loop terminates.
(This argument also ensures that we never use an out-of-bounds index for \var{A}.)

Each iteration of the loop at lines \ref{test-new-next-obs}--\ref{advance-new-next}
sets \var{newNext} to an earlier node in the order $\prec$, by Invariant \ref{sll-order}.
So, it must terminate.

Each time the test at line \ref{CAS-left-SLL}  fails,  some other operation
has performed a successful CAS on \var{cur->left}.
By Invariant \ref{sll-order}, each such CAS changes \var{cur->left}
to an earlier node in the order $\prec$, so it must eventually terminate.

We now show that \search\ is wait-free and does not set \vx\ to null.
Line \ref{init-x} initializes
\vx\ to the node read from \var{head}, which has been appended (by definition).
If line \ref{advance-x} changes \var{x}, then the node previously stored in \vx\ was
not the sentinel, since its timestamp was greater than $\vk\geq 0$.
Thus, the value read from its left pointer is non-null, by Lemma \ref{sll-order}.

If the sequences of nodes visited by the \search\ is $\vx_1, \vx_2, \ldots$, then
by Lemma \ref{sll-order}, we have $\vx_1 \succ \vx_2 \succ \cdots$.
Since there are only finitely many nodes that precede $\vx_1$ in the order $\prec$,
the loop must terminate.
\end{proof}

For \DL, we proved Lemmas \ref{no-crossovers} and \ref{remove-only}
to show that CAS steps on \var{left} pointers can only remove nodes; they cannot cause
a remove node to return to the list.
We now prove analogous results for \SL.

\begin{lemma}
\label{no-crossovers-SLL}
Suppose $\vw \prec \vx \prec \vy \prec \vz$ and $\vw \leftarrow \vy$ at some time during the execution.
Then there is never a time when $\vx \leftarrow \vz$.
\end{lemma}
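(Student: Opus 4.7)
My plan is to adapt the proof of Lemma \ref{no-crossovers} for \DL, replacing ``marked'' (a permanent property) with $needed(\vA,\vt)$, which is only monotone via Lemma \ref{order-copies}. I would argue by contradiction, assuming $\vx \leftarrow \vz$ at some time. First I would rule out initial values: when a node is appended, its \var{left} pointer is initialized to the current value of \var{head}, which is its immediate predecessor in append order. Since $\vx$ was appended strictly between $\vw$ and $\vy$, the initial value of \var{\vy->left} is $\succeq \vx \succ \vw$; similarly the initial value of \var{\vz->left} is $\succeq \vy \succ \vx$. So both $\vw \leftarrow \vy$ and $\vx \leftarrow \vz$ must have arisen from successful CAS steps at line~\ref{CAS-left-SLL} inside two invocations $\compact(\vA_1,\vt_1,\vh_1)$ and $\compact(\vA_2,\vt_2,\vh_2)$; call these the first and second CAS.

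Applying Lemma \ref{remove-unneeded} to each CAS yields four facts: $\vw$ is $needed(\vA_1,\vt_1)$; $\vx$ is not $needed(\vA_1,\vt_1)$ (since $\vw\prec\vx\prec\vy$); $\vx$ is $needed(\vA_2,\vt_2)$; and $\vy$ is not $needed(\vA_2,\vt_2)$ (since $\vx\prec\vy\prec\vz$). If the pairs coincide I already have a direct contradiction on $\vx$; if $(\vA_1,\vt_1)$ is written first, then Lemma \ref{order-copies} propagates $\vx$'s needed-ness backwards and yields the same contradiction. So $(\vA_2,\vt_2)$ must be installed strictly before $(\vA_1,\vt_1)$. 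Lemma \ref{order-copies} applied to $\vy$ then propagates $\vy \notin needed(\vA_2,\vt_2)$ forward to $\vy \notin needed(\vA_1,\vt_1)$; since Lemma \ref{remove-unneeded} says the target of a CAS must be needed unless it equals the head argument, $\vy = \vh_1$.

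The remainder is a timing contradiction. Let $T_1$ and $T_2$ be the snapshot times at which the two callers read their $(\GlobalAnnScan, \var{head})$ pairs, so at $T_j$ the shared state has $\GlobalAnnScan = (\vA_j,\vt_j)$ and $\var{head} = \vh_j$. Since $\vh_1 = \vy$ and \var{head} only advances in $\prec$, the node $\vz$ has not yet been appended at $T_1$; writing $S_\vz$ for its append time, $T_1 < S_\vz$. The second CAS acts on \var{\vz->left}, so the second compact reaches $\var{cur} = \vz$, forcing $\vh_2 \succeq \vz$ and therefore $S_\vz \leq T_2$. Finally, every successful CAS on $\GlobalAnnScan$ installs a freshly allocated object (the no-ABA observation already used in the proof of Lemma \ref{order-copies}), so once $(\vA_1,\vt_1)$ is installed at some time $I_1$, the pointer $(\vA_2,\vt_2)$ never reappears in $\GlobalAnnScan$; hence $T_2 < I_1 \leq T_1$. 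Chaining gives $S_\vz \leq T_2 < T_1 < S_\vz$, the required contradiction.

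The hardest part will be this last timing step, because it relies on the atomic snapshot of $\GlobalAnnScan$ and \var{head} described informally in Section~\ref{sec:sll}, rather than on the enumerated preconditions; without that atomicity, a stale \var{head} paired with a fresher \AnnScan\ would reproduce exactly the forbidden crossover. The purely combinatorial step that pins down $\vy = \vh_1$ and the relative order of $(\vA_1,\vt_1)$ and $(\vA_2,\vt_2)$, by contrast, is a fairly direct adaptation of the \DL\ argument.
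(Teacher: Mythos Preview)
Your argument is correct and uses exactly the same ingredients as the paper: both proofs identify the two CAS steps in some $\compact(\vA_1,\vt_1,\vh_1)$ and $\compact(\vA_2,\vt_2,\vh_2)$, apply Lemma~\ref{remove-unneeded} to get the needed/not-needed facts about $\vx$ and $\vy$, use Lemma~\ref{order-copies} on $\vx$ to force $(\vA_2,\vt_2)$ to be written before $(\vA_1,\vt_1)$, and then exploit the atomic snapshot of $(\GlobalAnnScan,\var{head})$.

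The only difference is the closing step, and the paper's is shorter. After ordering the \AnnScan\ objects, the paper goes straight from the snapshot to $\vh_2\preceq\vh_1$; combined with $\vz\preceq\vh_2$ (since \var{cur} starts at $\vh_2$ and only moves left) this gives $\vy\prec\vz\preceq\vh_2\preceq\vh_1$, hence $\vy\neq\vh_1$. Lemma~\ref{remove-unneeded} then makes $\vy$ $needed(\vA_1,\vt_1)$, which contradicts Lemma~\ref{order-copies} applied to $\vy$. Your explicit timing chain $S_\vz\leq T_2<T_1<S_\vz$ is really just an unpacked proof that $\vy\prec\vh_1$, which already contradicts your own intermediate conclusion $\vy=\vh_1$; so the step you flagged as hardest can be replaced by that one-line observation.
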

\begin{proof}
To derive a contradiction, suppose the claim is false.

Since there is a node \vx\ between \vw\ and \vy\ in the total order $\prec$, 
the \var{head} pointer takes the value \vx\ between the times it takes the values \vw\ and \vy.
Thus, \vy\ is not added to the list
by a \var{\tryAppend(\vw,\vy)} operation, so the initial value of \var{\vy->left} is not \vw.
So, there is a CAS on line \ref{CAS-left-SLL} of a \var{\compact(\vA,\vt,\vh)} operation that changes \var{\vy->left} to \vw.
Similarly, there is a CAS of a \var{\compact(\vA,\vt,\vh)} operation that changes
\var{\vz->left} to \vx.
By Lemma \ref{remove-unneeded}, \vx\ is $needed(\vA',\vt')$ but not $needed(\vA,\vt)$.
By Lemma \ref{order-copies}, $(\vA',\vt')$ is written before $(\vA,\vt)$.
Since the arguments of \compact\ are obtained by taking a snapshot of \GlobalAnnScan\ and \var{head},
this means that $\vh'$ was stored in \var{head} before $\vh$, so $\vh' \preceq \vh$.
Thus, $\vw \prec \vx \preceq \vh' \preceq \vh$, so $\vw\neq \vh$.
By Lemma \ref{remove-unneeded}, \vw\ is $needed(\vA,\vt)$ but not $needed(\vA',\vt')$.
This contradicts Lemma~\ref{order-copies}.
\end{proof}

\begin{lemma}
\label{remove-only-SLL}
If a CAS at line \ref{CAS-left-SLL} changes \var{\vz->left} from \vy\ to \vw, we have
$\vw \lreach \vy$ in the preceding configuration~$C$.
\end{lemma}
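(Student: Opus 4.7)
The plan is to mimic the proof of the analogous Lemma \ref{remove-only} for the doubly-linked list, using the just-proved Lemma \ref{no-crossovers-SLL} in place of Lemma \ref{no-crossovers}. The core strategy is to define the minimal ancestor of \vy\ (with respect to $\prec$) that is at least \vw, and show it must coincide with \vw\ itself, by deriving a ``crossover'' contradiction otherwise.

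First, I would observe that $\vw \prec \vy$ holds, as an immediate consequence of Invariant~\ref{sll-order}(2) applied to the CAS in question. This, together with the trivial fact $\vy \lreach \vy$, guarantees that the set $\{\vx : \vx \succeq \vw \text{ and } \vx \lreach \vy \text{ in } C\}$ is nonempty, so I can let \vx\ be its minimum element with respect to $\prec$. The goal reduces to showing $\vx = \vw$. Letting $\var{v} = \var{\vx->left}$ in $C$, Invariant~\ref{sll-order}(1) gives $\var{v} \prec \vx$, and the minimality of \vx\ forces $\var{v} \prec \vw$ (otherwise $\var{v}$ would be a smaller element of the set, since $\var{v} \leftarrow \vx \lreach \vy$).

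The key step is then to derive a contradiction from the assumption $\vw \prec \vx$. Under that assumption we would have $\var{v} \prec \vw \prec \vx \prec \vz$, with $\var{v} \leftarrow \vx$ holding in $C$ and with $\vw \leftarrow \vz$ holding immediately after the CAS under consideration. This is exactly the configuration ruled out by Lemma~\ref{no-crossovers-SLL} (instantiated with $\var{v}, \vw, \vx, \vz$ in place of the lemma's $\vw, \vx, \vy, \vz$). Hence $\vw \succeq \vx$, which combined with the definition of \vx\ yields $\vx = \vw$, and therefore $\vw \lreach \vy$ in $C$, as required.

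I do not expect any serious obstacle: the entire argument is a direct port of the DLL proof, and both supporting ingredients (Invariant~\ref{sll-order} and Lemma~\ref{no-crossovers-SLL}) are already in place. The only mild subtlety is making sure the ``$\vw \leftarrow \vz$ at some time during the execution'' hypothesis of Lemma~\ref{no-crossovers-SLL} is justified from the single successful CAS step, which is immediate since that CAS installs exactly the edge $\vw \leftarrow \vz$ in the configuration right after $C$.
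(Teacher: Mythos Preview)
Your proposal is correct and follows essentially the same argument as the paper's proof: define \vx\ as the $\prec$-minimum node with $\vx \succeq \vw$ and $\vx \lreach \vy$ in $C$, then use Lemma~\ref{no-crossovers-SLL} to rule out $\vw \prec \vx$. The only step you leave implicit that the paper makes explicit is $\vx \prec \vz$ (needed for the chain $\var{v} \prec \vw \prec \vx \prec \vz$), which follows from $\vx \lreach \vy \leftarrow \vz$ in $C$ and Invariant~\ref{sll-order}.
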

\begin{proof}
Refer to Figure \ref{fig-remove-only}.
Let \vx\ be the minimum node (with respect to $\prec$) such that $\vx \succeq \vw$ and $\vx \lreach \vy$ in $C$.
(This minimum is well-defined, since there exists a node \vx\ that satisfies both of these properties:
$\vy \succeq \vw$ by Lemma \ref{CAS-advances} and $\vy \lreach \vy$ holds trivially.)
Since $\vx \lreach \vy \leftarrow \vz$, we have $\vx \prec \vz$ by Lemma \ref{sll-order}.

Let $\var{v}$ be the value of \var{\vx->left} in $C$.  By Lemma \ref{sll-order},
$\var{v} \prec \vx$.
It must be the case that $\var{v} \prec \vw$; 
otherwise we would have $\var{v} \succeq \vw$ and $\var{v} \lreach \vy$, which would
contradict the fact that \vx\ is the \emph{minimum} node that satisfies these two properties.

To derive a contradiction, suppose $\vw \prec \vx$.  Then we have
$\var{v} \prec \vw \prec \vx \prec \vz$.  
We also have $\vw \leftarrow \vz$ after the CAS and $\var{v} \leftarrow \vx$ in $C$, which contradicts Lemma \ref{no-crossovers-SLL}.
Thus, $\vw \succeq \vx$.  By definition of \vx, $\vx \succeq \vw$ and $\vx\lreach\vy$ in $C$.
Thus, $\vx = \vw$ and $\vw\lreach\vy$.
\end{proof}

Lemma \ref{remove-only-SLL} allows us to prove that the \compact\ routine permanently removes unneeded nodes.

\Eric{Should we strengthen the following proposition to take into account how \vA\ and \vt\ are obtained to argue that they are up-to-date and hence any versions that are unneeded at when we start compacting are removed?}

\begin{proposition}
\label{SLL-progress}
Suppose  a \var{compact(\vA,\vt,\vh)} routine on a version list has terminated prior to some configuration $C$.
Then all nodes \vx\ satisfying  $\mbox{sentinel} \prec \vx\prec\vh$ that 
are reachable from the \var{head} of the version list in $C$
are $needed(\vA,\vt)$.
\end{proposition}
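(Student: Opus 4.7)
The plan is to argue by contradiction: suppose there is a node $\vx$ reachable at $C$ with $\text{sentinel}\prec \vx\prec \vh$ and $\vx$ not $needed(\vA,\vt)$.

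First, I would consider the sequence $\vc_1=\vh, \vc_2,\ldots,\vc_k=\text{sentinel}$ of values taken by compact's local variable $\text{cur}$ during its execution. By Lemma \ref{remove-unneeded}, each $\vc_j$ with $j\in\{2,\ldots,k-1\}$ is $needed(\vA,\vt)$, so $\vx$ is none of the $\vc_j$. Because the $\vc_j$ are strictly decreasing in $\prec$ and span from $\vh$ to the sentinel, there is a unique index $j$ with $\vc_{j+1}\prec \vx\prec \vc_j$. Moreover, by Lemmas \ref{no-crossovers-SLL} and \ref{remove-only-SLL}, the set of reachable nodes older than $\vh$ can only shrink over time, so $\vx$ remains reachable at every configuration between its appending and $C$.

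Next, I would examine the three transition cases that move $\text{cur}$ from $\vc_j$ to $\vc_{j+1}$, at lines \ref{advance-cur-1}, \ref{advance-cur-2}, and \ref{advance-cur-3}. In each case, I would show there is a moment $t^\ast$ during the processing of $\vc_j$ at which $\vc_j{\to}\text{left}=\vc_{j+1}$: for lines \ref{advance-cur-1} and \ref{advance-cur-2} this is the read of $\vc_j{\to}\text{left}$ at line \ref{init-next}; for line \ref{advance-cur-3} this is the state immediately after the splicing loop exits. By Invariant \ref{sll-order}, $\vc_j{\to}\text{left}$ is monotonically decreasing in $\prec$, so at every configuration from $t^\ast$ onwards, including $C$, $\vc_j{\to}\text{left}\preceq \vc_{j+1}\prec \vx$. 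Because the version list is singly-linked, this rules out $\vx$ being reachable at $C$ via any path that passes through $\vc_j$.

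The main obstacle is to rule out reachability of $\vx$ via a path that bypasses $\vc_j$, which can occur only if $\vc_j$ itself has been spliced out of the chain by a concurrent compact $C'$ with arguments $(\vA',\vt',\vh')$. Since $\vc_j$ is $needed(\vA,\vt)$ (by Lemma \ref{remove-unneeded} applied to our compact) but was treated as unneeded by $C'$ (by Lemma \ref{remove-unneeded} applied to $C'$'s splicing CAS), Lemma \ref{order-copies} forces $(\vA',\vt')$ to have been written strictly after $(\vA,\vt)$. Applying Lemma \ref{remove-unneeded} to $C'$'s CAS then shows that the replacement endpoints of the splice are $needed(\vA',\vt')$, hence $needed(\vA,\vt)$ by Lemma \ref{order-copies}, since the later-written pair has a smaller needed set. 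I would then perform a descent along the reachable chain at $C$ from the current head down to $\vx$, applying the same reasoning to each bypassed needed node encountered; because this chain is finite, the descent terminates and yields a contradiction with $\vx$'s reachability.
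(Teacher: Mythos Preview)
Your first two paragraphs align with the paper's proof: set up the contradiction, identify the \var{cur}-sequence $\vc_1,\ldots,\vc_k$, locate $j$ with $\vc_{j+1}\prec\vx\prec\vc_j$, and establish that $\vc_j\var{->left}\preceq\vc_{j+1}\prec\vx$ from some point during the compact through $C$.

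The third paragraph is where the approaches diverge and where yours has real gaps. First, you assert $\vc_j$ is $needed(\vA,\vt)$, but Lemma~\ref{remove-unneeded} only guarantees this when $\vc_j\neq\vh$; for $j=1$ you cannot invoke Lemma~\ref{order-copies} as written, and nothing in the proposition's hypotheses forces $\vh$ to be $needed(\vA,\vt)$. Second, even granting $j\geq 2$, the ``descent'' is not a proof: knowing that the endpoints of the CAS that removed $\vc_j$ are $needed(\vA,\vt)$ does not by itself prevent $\vx$ from lying on the chain at $C$, and you have not stated any invariant for the descent that would force termination in a contradiction. The detour through Lemma~\ref{order-copies} appears to be a red herring.

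The paper closes the argument much more directly, using exactly the tool you cited only in passing. On the chain $\vz_1,\ldots,\vz_k=\vx$ from \var{head} to $\vx$ at $C$, take the maximum $i$ with $\vz_i\succeq\vc_j$ (this exists since $\vz_1=\var{head}\succeq\vh=\vc_1\succeq\vc_j$). Then $\vz_{i+1}\prec\vc_j$ by maximality, and $\vz_{i+1}\succeq\vx\succ\vc_{j+1}$. If $\vz_i=\vc_j$, your own monotonicity observation gives $\vz_{i+1}=\vc_j\var{->left}\preceq\vc_{j+1}$, a contradiction. If $\vz_i\neq\vc_j$, then $\vc_{j+1}\prec\vz_{i+1}\prec\vc_j\prec\vz_i$ with $\vc_{j+1}\leftarrow\vc_j$ holding at some time before $C$ and $\vz_{i+1}\leftarrow\vz_i$ holding at $C$, directly contradicting Lemma~\ref{no-crossovers-SLL}. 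No appeal to ``needed'' status or Lemma~\ref{order-copies} is required, and the $j=1$ case needs no special handling.
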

\begin{proof}
To derive a contradiction, assume that some node \vx\ is reachable from \var{head} in $C$
and \vx\ is not $needed(\vA,\vt)$.
Consider the sequence of values $\vh=\vy_1, \vy_2, \ldots, \vy_\ell=\mbox{sentinel}$ that are assigned to the
local variable \var{cur} during the \var{\compact(\vA,\vt,\vh)} routine.
Since each is read from the \var{left} pointer of the previous one, $\vy_\ell\prec \vy_{\ell-1} \prec \cdots \prec \vy_1$.
Since \vx\ is not $needed(\vA,\vt)$, Lemma \ref{remove-unneeded} implies that \vx\ does not appear in this sequence.
So, for some $j$, $\vy_{j+1} \prec \vx \prec \vy_{j}$.

Now, let $\var{head} = \vz_1, \vz_2, \ldots, \vz_k = \vx$ be the path from \var{head} to \vx\ by following \var{left} pointers.
Consider the maximum $i$ such that $\vz_i \succeq \vy_{j}$.
(Such an $i$ exists, since $\vz_1 = \var{head} \succeq \vh = \vy_1 \succeq \vy_{j}$.)
So, we have $\vy_{j+1} \prec \vx = \vz_k \prec \vz_{k-1} \prec \cdots \vz_{i+1} \prec \vy_j \leq \vz_i$.
Now, $\vy_{j+1}\leftarrow \vy_j$ at some time during the \compact\ (before $C$).
If $\vy_j$ were equal to $\vz_i$, then in $C$, $\var{$\vz_i$->left} = \var{$\vy_j$->left} \preceq \vy_{j+1}$ by Invariant \ref{sll-order}, and this would contradict the fact that $\var{$\vz_i$->left} = \vz_{i+1} \succ \vy_{j+1}$.
So, we have $\vy_{j+1} \prec \vz_{i+1} \prec \vy_j \prec \vz_i$.
However, $\vy_{j+1}\leftarrow \vy_j$ at some time before $C$ and $\vz_{i+1} \leftarrow \vz_i$ in $C$.
This contradicts Lemma \ref{no-crossovers-SLL}.
\end{proof}

The head of a version list represents the current version, so it is not obsolete.
Thus, when a node is added to the range tracker data structure,
it is no longer the head of its  list.
Once a process receives a node \vx\ returned by the range tracker, 
it gets up-to-date copies $\vA, \vt$ and $\vh$ of the
announcements, global timestamp and  the list's \var{head} and invoke \var{\compact(\vA,\vt,\vh)}.
Since \vx\ was already returned by the range tracker, \vx\ will not be $needed(\vA,\vt)$
and $\vx \prec \vh$.
Proposition \ref{SLL-progress} ensures that \vx\ is permanently
removed from its version list before the \compact\ terminates.
Thus, aside from the lists
where \compact\ routines are still pending, all nodes
that have been returned by the range tracker are no longer reachable from the head of the list.

\subsection{Reserving Timestamps} 
\label{avoid-copy}
Both \bbf\ and our singly-linked list compaction algorithm from Section~\ref{sec:sll} use atomic copy (a primitive which allows us to atomically read from one memory location and write to another) to announce timestamps in a wait-free manner. Instead, in our experiments, we use a lighter-weight lock-free scheme for announcing timestamps which consist of (A1) reading the current timestamp, (A2) announcing it, and (A3) checking if the current timestamp is still equal to the announced value and, if not, going back to step A1.
This means that a process scanning the announcement array might see a very old timestamp get announced, which breaks the monotonicity property required for the correctness of our singly-linked list.
We fix this by having all processes work on updating a global scan of the announcement array as described in Section~\ref{sec:sll}.
When reading the announcement array, we ignore any timestamp that (1) is smaller than the timestamp at which the current global scan was taken and (2) does not appear in the current global scan.
This is safe because any announced timestamp satisfying these two conditions will fail the check in step A3, and not be used.
This change to how we scan announcement arrays ensures that the global scan satisfies the monotonicity property required by our singly-linked list.
\er{This announcement scanning technique is also used in our implementation of \STEAMLF, \bbf, and \RTSLGC.}
\here{fix names}

\section{Additional Experiments}
\label{additional-experiments}

Figures~\ref{fig:tree-100K-undersub-uniform}--\ref{fig:hashtable-100K-update-heavy-uniform} showcase the same workloads as Figures~\ref{fig:tree-100K-undersub-zipf}--\ref{fig:hashtable-100K-update-heavy}, but with keys drawn from uniform rather than zipfian distribution.



\renewcommand{\arraystretch}{1.2}
\setlength{\tabcolsep}{3.3pt}

\begin{figure*}
	\begin{subfigure}{0.99\textwidth}
	\begin{tabular}{ l l }
	  \small \hspace{0.2cm} \textbf{Legend for all figures:}	& \includegraphics[width=0.55\linewidth,trim=0 0.5cm 0 0]{graphs/legend.pdf}
	\end{tabular}
	\end{subfigure} \hfill
	\begin{subfigure}{0.74\textwidth}
		\includegraphics[width=0.99\linewidth,trim=0.2cm 0 0.25cm 1.5cm, clip]{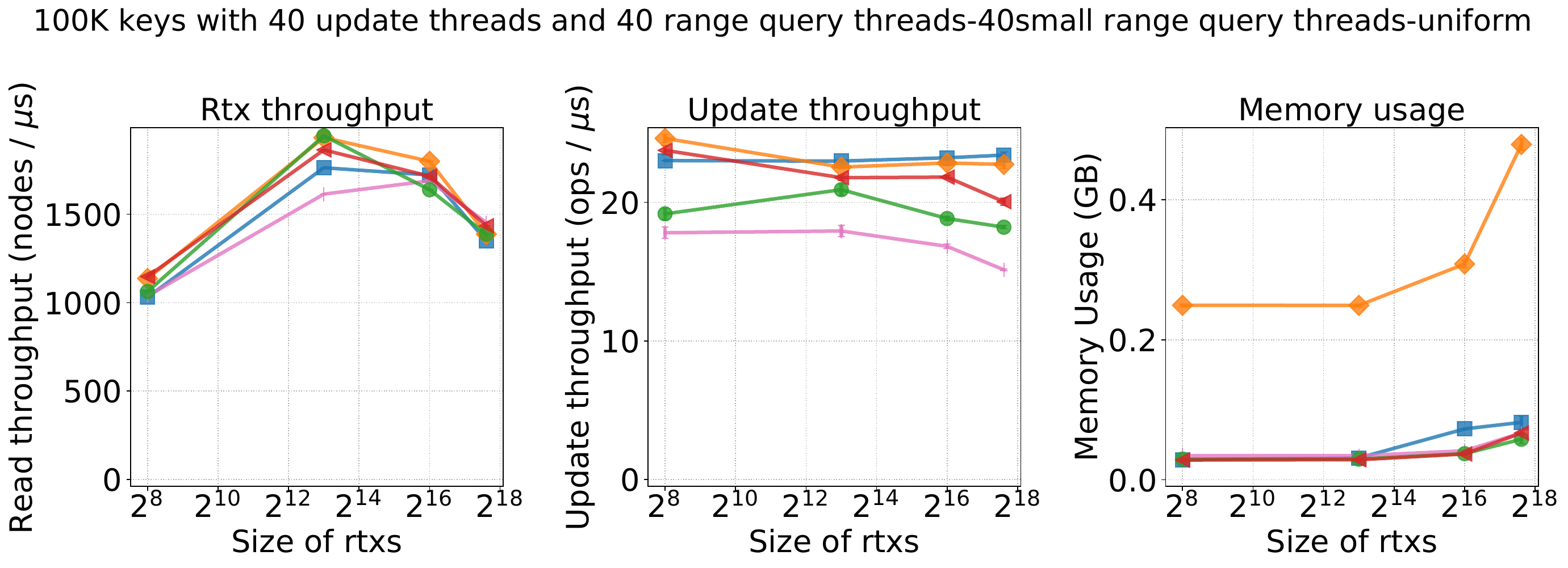}
	\end{subfigure} \hfill
	\begin{subfigure}{0.24\textwidth}
		\footnotesize
		\begin{tabular}{ |c|c|c|c|c| } 
			\multicolumn{5}{c}{\textbf{Average Version List Lengths}} \\
			\hline
			\begin{tabular}{@{}c@{}}\textbf{Size of} \vspace{-0.1cm} \\ \textbf{rtxs}\end{tabular}  & \textbf{$2^{8}$} & \textbf{$2^{13}$} & \textbf{$2^{16}$} & \textbf{$2^{18}$} \\ \hline \hline
			\textbf{SL-RT} & 1.06 & 1.07 & 1.24 & 1.66 \\ \hline
			\textbf{DL-RT} & 1.07 & 1.08 & 1.23 & 1.63 \\ \hline
			\textbf{BBF+} & 1.07 & 1.08  & 1.23 & 1.64 \\ \hline
			\textbf{Steam+LF} & 1.51 & 1.53 & 1.68 & 2.06 \\ \hline
			\textbf{EBR} & 1.02 & 1.04 & 1.33 & 2.0 \\ \hline
		\end{tabular}
	\end{subfigure}
	\caption{Tree with 100K keys, 40 update threads, 40 fixed-size rtx threads, 40 variable-size rtx threads.}
	\label{fig:tree-100K-undersub-uniform}
\end{figure*}

\begin{figure*}
	\begin{subfigure}{0.74\textwidth}
		\includegraphics[width=0.99\linewidth,trim=0.2cm 0 0cm 1.5cm, clip]{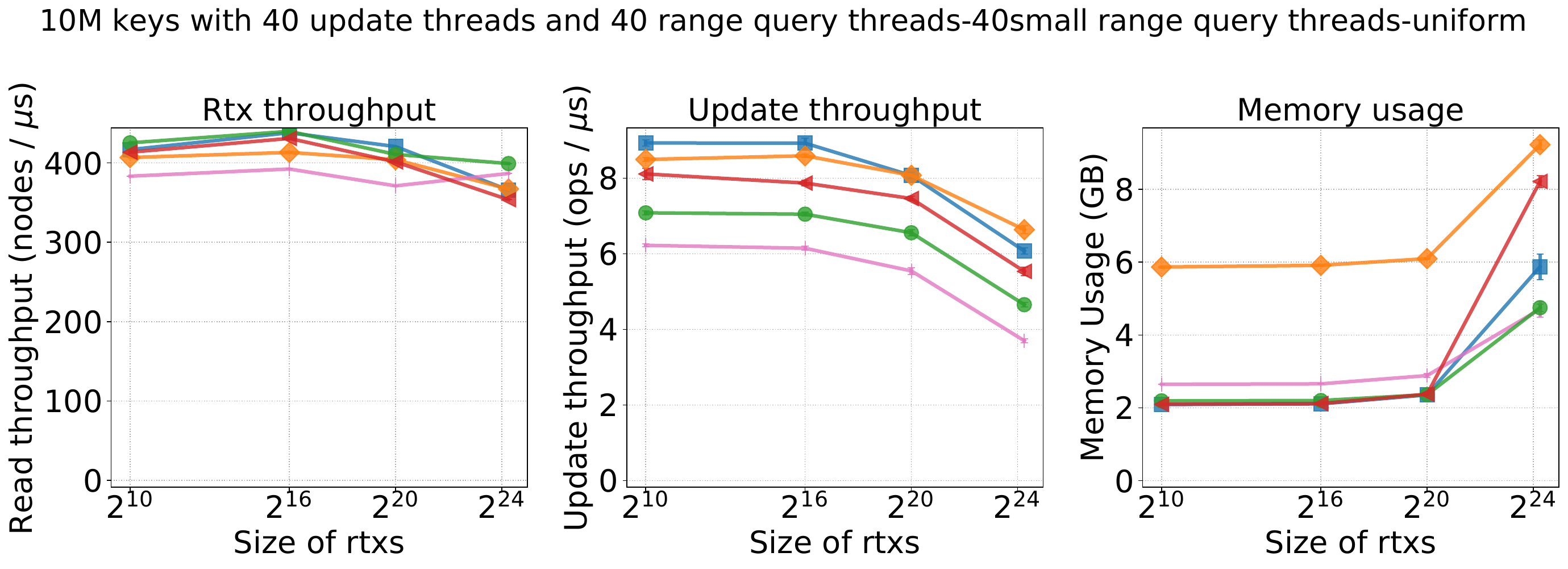}
	\end{subfigure} \hfill
	\begin{subfigure}{0.24\textwidth}
		\footnotesize
		\begin{tabular}{ |c|c|c|c|c| } 
			\multicolumn{5}{c}{\textbf{Average Version List Lengths}} \\
			\hline
			\begin{tabular}{@{}c@{}}\textbf{Size of} \vspace{-0.1cm} \\ \textbf{rtxs}\end{tabular}    & \textbf{$2^{10}$} & \textbf{$2^{16}$} & \textbf{$2^{20}$} & 20M \\ \hline \hline
			\textbf{SL-RT} & 1.0 & 1.01 & 1.09 & 1.63 \\ \hline
			\textbf{DL-RT} & 1.01 & 1.01 & 1.07 & 1.57 \\ \hline
			\textbf{BBF+} & 1.01 & 1.01 & 1.06 & 1.55 \\ \hline
			\textbf{Steam+LF} & 1.43 & 1.43 & 1.46 & 1.68 \\ \hline
			\textbf{EBR} & 1.01 & 1.01 & 1.07 & 1.71 \\ \hline
		\end{tabular}
	\end{subfigure}
	\caption{Tree with 10M keys, 40 update threads, 40 fixed-size rtx threads, 40 variable-size rtx threads.}
	\label{fig:tree-10M-uniform-undersub}
\end{figure*}

\begin{figure*}
	\begin{subfigure}{0.74\textwidth}
		\includegraphics[width=0.99\linewidth,trim=1.99cm 0 1.2cm 1.5cm, clip]{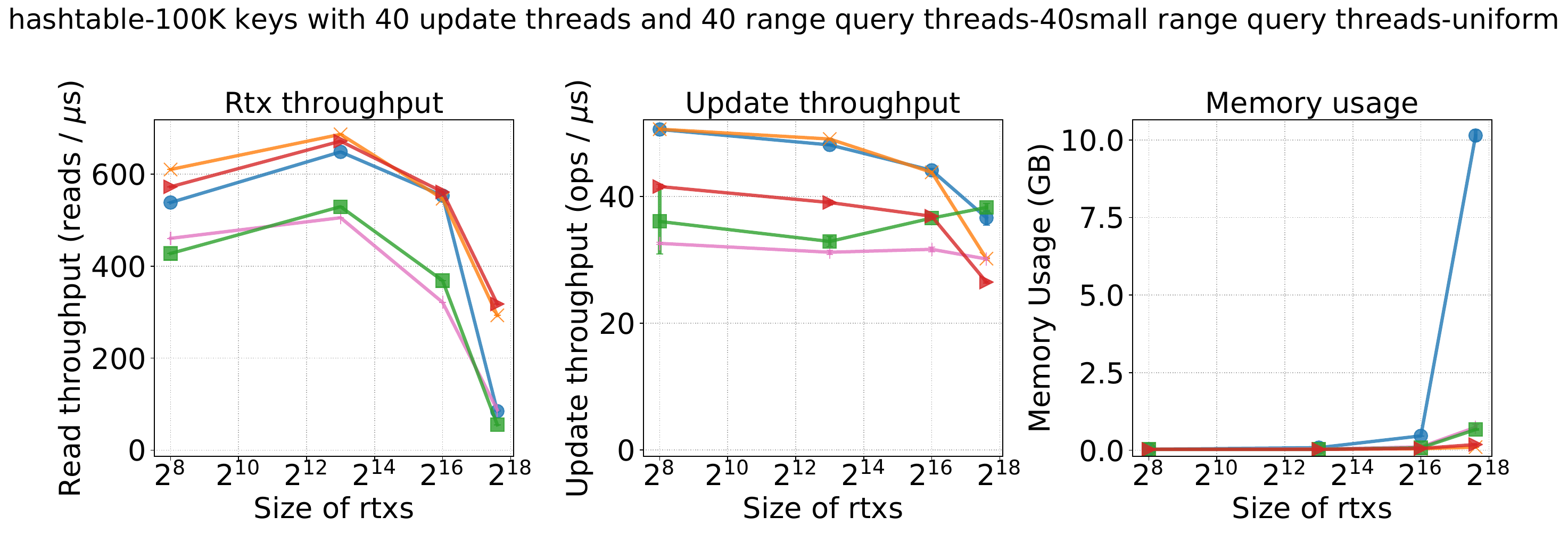}
	\end{subfigure} \hfill
	\begin{subfigure}{0.25\textwidth}
			\footnotesize
			\begin{tabular}{ |c|c|c|c|c| } 
				\multicolumn{5}{c}{\textbf{Average Version List Lengths}} \\
				\hline
				\begin{tabular}{@{}c@{}}\textbf{Size of} \vspace{-0.1cm} \\ \textbf{rtxs}\end{tabular}    & \textbf{$2^{8}$} & \textbf{$2^{13}$} & \textbf{$2^{16}$} & \textbf{$2^{18}$} \\ \hline \hline
				\textbf{SL-RT} & 1.04 & 1.15 & 2.35 & 5.46 \\ \hline
				\textbf{DL-RT} & 1.05 & 1.14 & 2.42 & 25.27 \\ \hline
				\textbf{BBF+} & 1.05 & 1.12  & 2.5 & 19.39 \\ \hline
				\textbf{Steam+LF} & 1.64 & 1.76 & 2.74 & 5.57 \\ \hline
				\textbf{EBR} & 1.01 & 1.19 & 3.34 & 451.99 \\ \hline
			\end{tabular}
	\end{subfigure}
	\caption{Hash table with 100K keys, 40 update threads, 40 fixed-size rtx threads, 40 variable-size rtx threads.}
	\label{fig:hashtable-100K-undersub-uniform}
\end{figure*}

\begin{figure*}
   \begin{subfigure}{0.49\textwidth}
		\centering
		\includegraphics[width=0.99\linewidth,trim=0.20cm 0 12.3cm 1.5cm, clip]{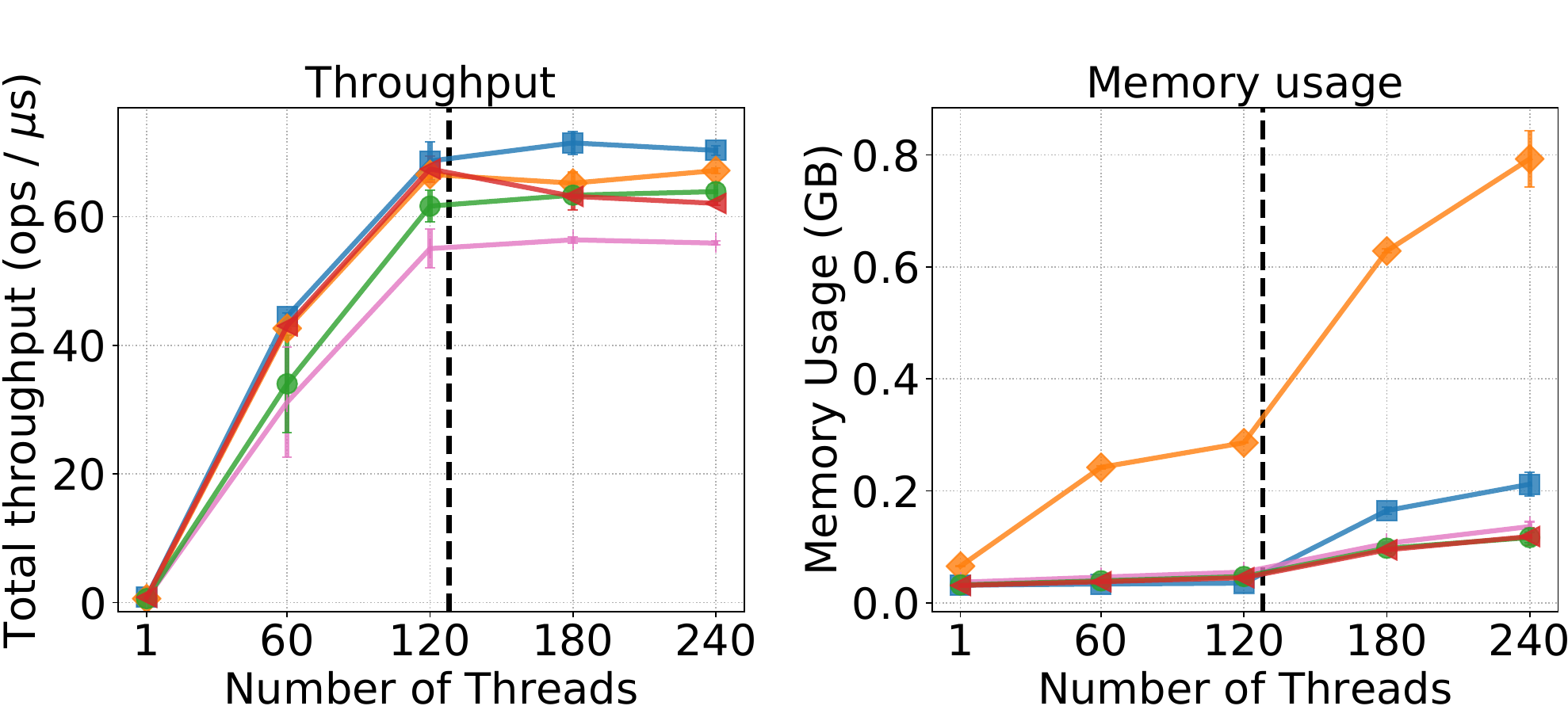}
		\caption{Tree with 100K keys}\label{fig:tree-100K-update-heavy-uniform}
	\end{subfigure} \hfill
	\begin{subfigure}{0.49\textwidth}
		\centering
		\includegraphics[width=0.99\linewidth,trim=0.20cm 0 12.3cm 1.5cm, clip]{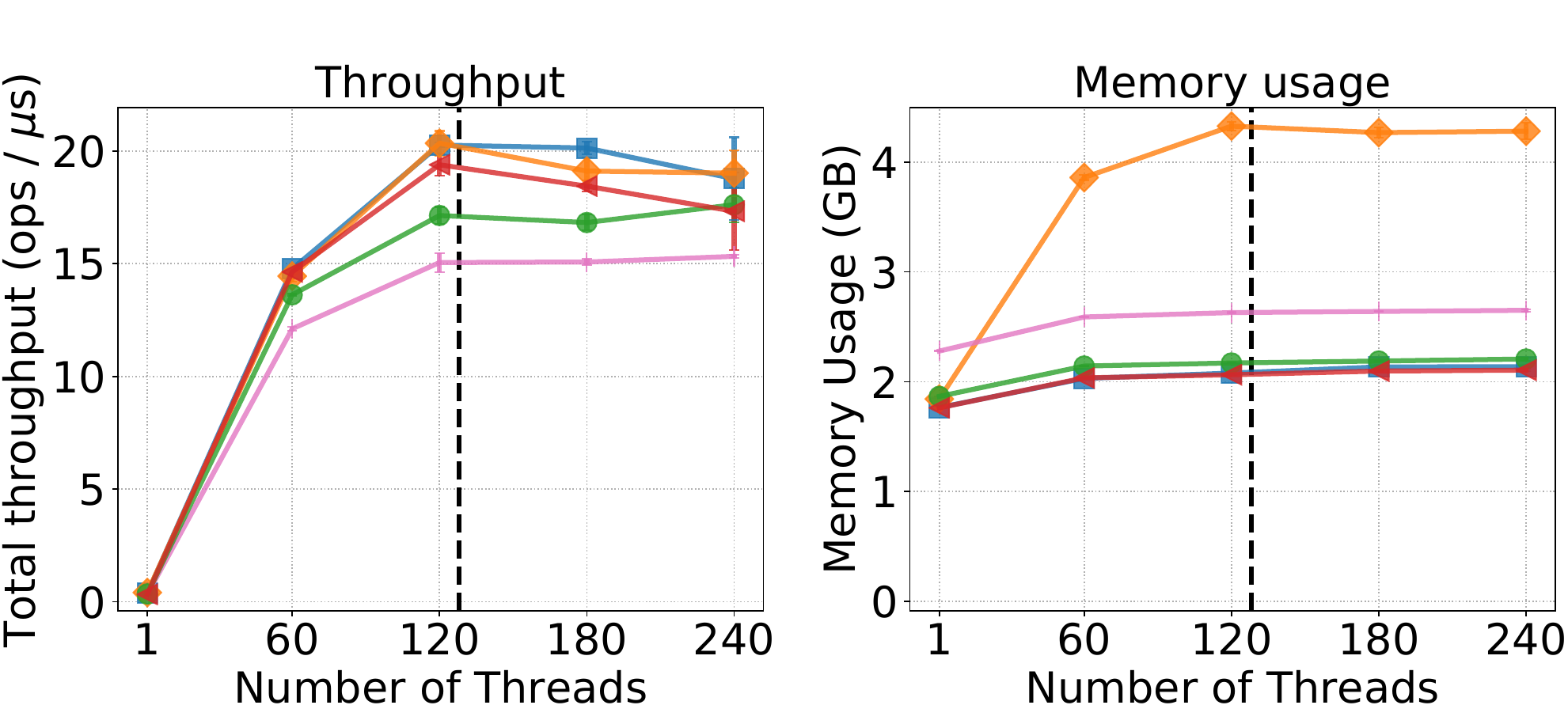}
		\caption{Tree with 10M keys}
		\label{fig:tree-10M-update-heavy-uniform}
	\end{subfigure}	
\caption{Workload with each thread performing 50\% updates, 49\% lookups, and 1\% read transactions of size 1024.}
\label{fig:tree-update-heavy-uniform}
\end{figure*}


\begin{figure}
	\includegraphics[width=0.99\linewidth,trim=0.25cm 0 12.45cm 2cm, clip]{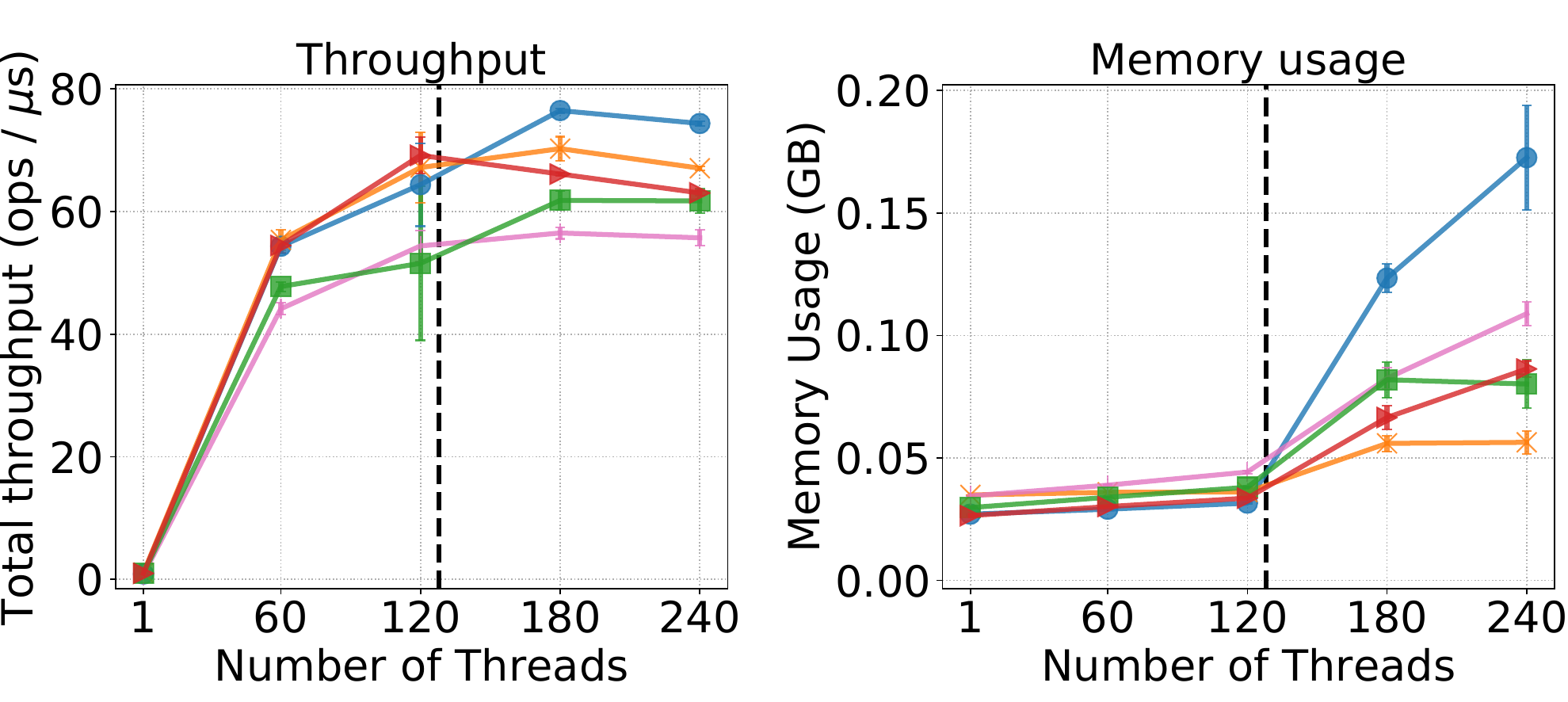}
	\caption{Hash table with 100K keys, each thread performs 50\% updates, 49\% lookups, and 1\% rtxs of size 1024.}
	\label{fig:hashtable-100K-update-heavy-uniform}
\end{figure}
}{
}

\end{document}